 \DeclareMathOperator{\diag}{diag}
\newcommand{\Lie}{L}
\newcommand{\Res}{\mathrm{Res}}
\newcommand{\D}{\mathrm{d}}
\newcommand{\tr}{\mathrm{tr}}
\newtheorem{theorem}{Theorem}
\newtheorem{lemma}{Lemma}
\newtheorem{corollary}{Corollary}
\newtheorem{remark}{Remark}
\newtheorem{proposition}{Proposition}
\newtheorem{definition}{Definition}
\numberwithin{equation}{section}
\begin{document}
\title{Universality in the two matrix model with a
monomial quartic and a general even polynomial potential}
\author{M. Y. Mo}
\date{}
\maketitle
\begin{abstract}
In this paper we studied the asymptotic eigenvalue statistics of the
2 matrix model with the probability measure
\begin{equation*}
Z^{-1}_{n}\exp\left(-n\left(\tr(V(M_1)+W(M_2)-\tau
M_1M_2\right)\right)\D M_1\D M_2,
\end{equation*}
in the case where $W=\frac{y^4}{4}$ and $V$ is a general even
polynomial. We studied the correlation kernel for the eigenvalues of
the matrix $M_1$ in the limit as $n\rightarrow\infty$. We extended
the results of Duits and Kuijlaars in \cite{DK} to the case when the
limiting eigenvalue density for $M_1$ is supported on
multiple intervals. The results are achieved by constructing the
parametrix to a Riemann-Hilbert problem obtained in \cite{DK} with
theta functions and then showing that this parametrix is
well-defined for all $n$ by studying the theta divisor.
\end{abstract}
\section{Introduction}
\subsection{2 matrix models and biorthogonal polynomials}
The 2-matrix Hermitian models are matrix models with the probability
measure
\begin{equation}\label{eq:2MM}
Z^{-1}_{n}\exp\left(-n\left(\tr(V(M_1)+W(M_2)-\tau
M_1M_2\right)\right)\D M_1\D M_2,
\end{equation}
defined on the space of pairs $(M_1,M_2)$ of $n\times n$ Hermitian
matrix. The constant $Z_{n}$ is the normalization constant of the
measure, $\tau\in\mathbb{R}\setminus\{0\}$ is a coupling constant
and $\D M_1$, $\D M_2$ are the standard Lebesgue measures on the
space of Hermitian matrices. In (\ref{eq:2MM}), $V$ and $W$ are
called potentials of the matrix model. In this paper, we shall
consider $V$ to be a general even polynomial and $W$ to be
the monomial $W(y)=\frac{y^4}{4}$.

Let $x_1,\ldots,x_n$ and $y_1,\ldots,y_n$ be the eigenvalues of the
matrices $M_1$ and $M_2$ respectively, then the eigenvalues of the
matrix model (\ref{eq:2MM}) are distributed according to
\begin{equation}\label{eq:jpdf}
\mathcal{P}(\vec{x},\vec{y})=\tilde{Z}^{-1}_{n}\prod_{i<k}^{n}(x_i-x_k)^2(y_i-y_k)^2e^{-n\left(\sum_{j=1}^nV(x_j)+W(y_j)-\tau
x_jy_j\right)}
\end{equation}
where $\tilde{Z}_{n}$ is a normalization constant and
$\vec{x}=(x_1,\ldots,x_n)$, $\vec{y}=(y_1,\ldots,y_n)$.

The two-matrix model was introduced in \cite{IZ}, \cite{M2} as a
generalization of the one matrix model to study critical phenomena
in physical systems. The 2 matrix model is needed to represent all
conformal models in statistical physics \cite{DKK}. It is also a
powerful tool in the studies of random surfaces as the large $N$
expansion of the partition function $\tilde{Z}_{n}$ is expected to
be the generating function of discretized surface \cite{Ka}. Since
its introduction, the 2 matrix model has become a very active
research area \cite{Advan}, \cite{BerEy}, \cite{Bsemi}, \cite{BeEy},
\cite{BEHmulti}, \cite{BEHRH}, \cite{Du}, \cite{DK}, \cite{EMc},
\cite{Ey1}, \cite{Ey2MM}, \cite{EylargeN}, \cite{EyMe},
\cite{EyOr1}, \cite{EyOr2}, \cite{Kap}, \cite{KMRH} and one of the
major problems is to obtain rigorous asymptotics for the eigenvalue
statistics. A good review of the subject can be found in
\cite{Dfr1}, \cite{Dfr2}.

A particular important object in the studies of eigenvalue
statistics is the correlation function
$\mathcal{R}_{m,l}^{n}(x_1,\ldots,x_m,y_1,\ldots,y_l)$
\begin{equation}\label{eq:corel}
\mathcal{R}_{m,l}^{n}(x_1,\ldots,x_m,y_1,\ldots,y_l)=\frac{(n!)^2}{(n-m)!(n-l)!}
\int_{\mathbb{R}^{n-m}}\int_{\mathbb{R}^{n-l}}\mathcal{P}(\vec{x},\vec{y})\prod_{j=m+1}^n\D
x_{j}\prod_{k=l+1}^{n}\D y_{k}.
\end{equation}

In \cite{Ey1}, \cite{MS}, a connection between biorthogonal
polynomials and the correlation functions of 2 matrix models
(\ref{eq:corel}) was found. Let $p_{k}(x)$ and $q_l(y)$ be monic
polynomials of degrees $k$ and $l$ such that
\begin{equation}\label{eq:bop}
\int_{\mathbb{R}}\int_{\mathbb{R}}\D x\D y
p_k(x)q_l(y)e^{-n\left(V(x)+W(y)-\tau xy\right)}=h_{k}\delta_{kl},
\end{equation}
for some constants $h_k$, then these polynomials exist and are
unique \cite{BEHmulti}, \cite{EMc}. These polynomials are known as
biorthogonal polynomials.

Let us define some integral transforms of the biorthogonal
polynomials by
\begin{equation}\label{eq:trans}
\begin{split}
Q_k(x)&=e^{-nV(x)}\int_{\mathbb{R}}q_k(y)e^{-n\left(W(y)-\tau
xy\right)}\D y,\\
P_k(y)&=e^{-nW(y)}\int_{\mathbb{R}}p_k(x)e^{-n\left(V(x)-\tau
xy\right)}\D x,
\end{split}
\end{equation}
and define the kernels to be
\begin{equation}\label{eq:ker}
\begin{split}
K_{11}^{n}(x_1,x_2)&=\sum_{j=0}^{n-1}\frac{1}{h_j}p_j(x_1)Q_j(x_2),\quad
K_{22}^{n}(x,y)=\sum_{j=0}^{n-1}\frac{1}{h_j}P_j(x)q_j(y),\\
K_{12}^{n}(y,x)&=\sum_{j=0}^{n-1}\frac{1}{h_j}p_j(y)q_j(x),\\
K_{21}^{n}(y_1,y_2)&=\sum_{j=0}^{n-1}\frac{1}{h_j}P_j(y_1)Q_j(y_2)-e^{-n\left(V(x)+W(y)-\tau
xy\right)}.
\end{split}
\end{equation}
Then the correlation function $\mathcal{R}_{m,l}^{n}$
(\ref{eq:corel}) has the following determinantal expression.
\begin{equation}\label{eq:det}
\mathcal{R}_{m,l}^{n}(x_1,\ldots,x_m,y_1,\ldots,y_l)=\det\begin{pmatrix}\left(K_{11}^{n}(x_i,x_j)\right)_{i,j=1}^{m}&
\left(K_{12}^{n}(x_i,y_j)\right)_{i,j=1}^{m,l}\\
\left(K_{21}^{n}(y_i,x_j)\right)_{i,j=1}^{l,m}&\left(K_{22}^{n}(y_i,y_j)\right)_{i,j=1}^{l}\end{pmatrix}.
\end{equation}
Upon averaging over the variables $y_k$, we see that the $m$-point
correlation function $\mathcal{R}_{m,0}^{n}(x_1,\ldots,x_m)$ for the
eigenvalues of the matrix $M_1$ is given by the kernel
$K_{11}^{n}(x_i,x_j)$,
\begin{equation}\label{eq:ker1}
\mathcal{R}_{m,0}^{n}(x_1,\ldots,x_m)=\det\left(K_{11}^{n}(x_i,x_j)\right)_{i,j=1}^{m}
\end{equation}
The purpose of this paper is to provide a rigorous asymptotic
expression for the kernel $K_{11}^{n}$ as $n\rightarrow\infty$ for
$W(y)=\frac{y^4}{4}$ and $V(x)$ being a general even polynomial.

Due to a generalized Christoffel-Darboux formula, the kernels
$K_{11}^{n}$ can be expressed in terms of a finite sum of the
biorthogonal polynomials (See \cite{BEHmulti}, \cite{EylargeN},
\cite{UN} and \cite{DaK}). This reduces the problem of finding an
asymptotic expression for $K_{11}^{n}$ into finding the asymptotics
of the biorthogonal polynomials.

\subsection{Rigorous results in the ``one-cut regular" case}

Until recently, most results in the asymptotics of biorthogonal
polynomials have been obtained through heuristic argument (See
\cite{EylargeN}, \cite{Ey2MM}). For a long time, the only rigorous
result was the case when both $W(y)$ and $V(x)$ are quadratic
polynomials \cite{EMc}. In the recent work by Duits and Kuijlaars
\cite{DK}, (See also \cite{Du}, Chapter 5, which was later made into
the publication \cite{DK}), the Deift-Zhou steepest decent method
(\cite{D}, \cite{DKV}, \cite{DKV2}, see also \cite{BI}) was
successfully applied to obtain the asymptotics of biorthogonal
polynomials with $W(y)=\frac{y^4}{4}$ and $V(x)$ an even
polynomial in the case when the limiting eigenvalue density for
$M_1$ is supported on a single interval. The main idea in \cite{DK}
is to transform and approximate the Riemann-Hilbert problem
satisfied by the biorthogonal polynomials \cite{KMRH}, \cite{BEHRH}
(See Section \ref{se:RHP} for details) via the use of suitable
equilibrium measures and then solve the approximated Riemann-Hilbert
problem explicitly to obtain asymptotic formula for the biorthogonal
polynomials. The results in \cite{DK} was obtained in the case when
one of these equilibrium measures is supported on a single interval.
This case is called the ``one-cut regular case" in \cite{DK}.

To be precise, let $I(\nu_i,\nu_j)$ be the following energy
function
\begin{equation}\label{eq:logen}
I(\nu_i,\nu_j)=\int\int\log\left(\frac{1}{|x-y|}\right)\D\nu_i(x)\D\nu_j(y),
\end{equation}
where the integral is performed on the supports of the measures
$\nu_i$ and $\nu_j$. Then the equilibrium measures $\mu_1$, $\mu_2$
and $\mu_3$ are the measures that minimize the following energy
function $E_V(\nu_1,\nu_2,\nu_3)$.
\begin{definition}\label{de:equ}
(Definition 2.2 in \cite{DK} (See also Definition 5.2.1 in
\cite{Du})) The equilibrium measure $(\mu_1,\mu_2,\mu_3)$ is the
triplet of measures that minimizes the following energy function.
\begin{equation}\label{eq:EV}
\begin{split}
E_V(\nu_1,\nu_2,\nu_3)=\sum_{j=1}^3I(\nu_j,\nu_j)-\sum_{j=1}^2I(\nu_j,\nu_{j+1})
+\int\left(V(x)-\frac{3}{4}\tau^{\frac{4}{3}}|x|^{\frac{4}{3}}\right)\D\nu_1(x).
\end{split}
\end{equation}
amount non-negative Borel measures $\nu_1$, $\nu_2$ and $\nu_3$ that
satisfy the following properties.
\begin{enumerate}
\item All the measures $\nu_j$, $j=1,2,3$ have finite logarithmic
energies;
\item $\nu_1$ and $\nu_3$ are measures supported on $\mathbb{R}$
with $\nu_1(\mathbb{R})=1$ and $\nu_3(\mathbb{R})=\frac{1}{3}$;
\item $\nu_2$ is a measure supported on $i\mathbb{R}$ with
$\nu_2(i\mathbb{R})=\frac{2}{3}$;
\item Let $\sigma$ be the unbounded measure on $i\mathbb{R}$ given
by
\begin{equation}\label{eq:sigma}
\D\sigma(z)=\frac{\sqrt{3}}{2\pi}\tau^{\frac{4}{3}}|z|^{\frac{1}{3}}|\D
z|,\quad z\in i\mathbb{R},
\end{equation}
where $|\D z|$ is the unit arc length on $i\mathbb{R}$, then $\nu_2$
satisfies the constraint $\nu_2\leq\sigma$.
\end{enumerate}
\end{definition}
Let $U^{\nu}(x)$ be the logarithmic potential of the measure $\nu$.
\begin{equation}\label{eq:logpot}
U^{\nu}(x)=-\int\log|x-s|\D\nu(s),
\end{equation}
then it was shown in \cite{DK} that the
logarithmic potentials of $\mu_1$ and $\mu_2$ satisfy the following
properties
\begin{equation}\label{eq:reg}
\begin{split}
2U^{\mu_1}(x)&=
U^{\mu_2}(x)-V(x)+\frac{3}{4}\tau^{\frac{4}{3}}|x|^{\frac{4}{3}}+l,\quad
x\in S_{\mu_1},\\
2U^{\mu_1}(x)&\geq
U^{\mu_2}(x)-V(x)+\frac{3}{4}\tau^{\frac{4}{3}}|x|^{\frac{4}{3}}+l,\quad
x\in\mathbb{R}\setminus S_{\mu_1},
\end{split}
\end{equation}
for some constant $l$, where $S_{\mu_1}$ is the support of $\mu_1$.
The properties of the equilibrium measures $\mu_1$, $\mu_2$ and
$\mu_3$ were studied in \cite{DK} and we have the following
\begin{theorem}\label{thm:equ} (Theorem 2.3 in \cite{DK} (see also Theorem 5.2.2 in \cite{Du}))
Let $V$ be an even polynomial and $\tau>0$. Then there is a unique
minimizer $(\mu_1,\mu_2,\mu_3)$ of $E(\nu_1,\nu_2,\nu_3)$ in
(\ref{eq:EV}) that satisfies the conditions in Definition
\ref{de:equ}. Let us denote the support of the Borel measure $\nu$
by $S_{\nu}$, then we have
\begin{enumerate}
\item $S_{\mu_1}$ consists of finitely many disjoint intervals
\begin{equation}\label{eq:Smu1}
\begin{split}
S_{\mu_1}=\cup_{j=1}^{g+1}[\lambda_{2j-1},\lambda_{2j}],
\end{split}
\end{equation}
where $\lambda_j\in\mathbb{R}$ and the points are ordered such that
$\lambda_j<\lambda_k$ if $j<k$. Moreover, $\mu_1$ is absolutely
continuous with respect to the Lebesgue measure, and on each
interval $[\lambda_{2j-1},\lambda_{2j}]$, it has a continuous
density of the form
\begin{equation}\label{eq:mu1den}
\mu_1(z)=\rho_1(z)\D
z=\psi_j(z)\sqrt{(\lambda_{2j}-z)(z-\lambda_{2j-1})},\quad
z\in[\lambda_{2j-1},\lambda_{2j}],
\end{equation}
where $\psi_j(z)$ is analytic and non-negative on
$[\lambda_{2j-1},\lambda_{2j}]$.
\item Let $\sigma$ be the measure in (\ref{eq:sigma}), then
$S_{\mu_2}=i\mathbb{R}$ and there exists a constant $c>0$
such that the support $S_{\sigma-\mu_2}$ of $\sigma-\mu_2$ is given
by
\begin{equation}\label{eq:suppmu2}
S_{\sigma-\mu_2}=i\mathbb{R}\setminus(-ic,ic).
\end{equation}
Moreover, $\sigma-\mu_2$ has an analytic density on
$S_{\sigma-\mu_2}$ that vanishes as a square root at $\pm ic$.
\item $S_{\mu_3}=\mathbb{R}$ and $\mu_3$ has a density which is
analytic in $\mathbb{R}\setminus\{0\}$.
\item For $j=1,2,3$, we have $\mu_j(A)=\mu_j(-A)$ for any Borel set
$A$.
\end{enumerate}
\end{theorem}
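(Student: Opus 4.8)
The plan is to treat this as a constrained vector equilibrium problem and to read off all four conclusions from the Euler--Lagrange conditions together with an associated algebraic curve. Existence and uniqueness come first: the quadratic form $\sum_j I(\nu_j,\nu_j)-\sum_j I(\nu_j,\nu_{j+1})$ has the positive-definite Nikishin-type interaction matrix
$$A=\begin{pmatrix}1 & -\tfrac12 & 0\\ -\tfrac12 & 1 & -\tfrac12\\ 0 & -\tfrac12 & 1\end{pmatrix},$$
so $E_V$ is strictly convex on the admissible set; that set is convex and weak-$*$ compact once the mass normalizations and the upper constraint $\nu_2\le\sigma$ are imposed, while the field $V(x)-\tfrac34\tau^{4/3}|x|^{4/3}$ confines $\nu_1$. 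Lower semicontinuity of the energy then yields a unique minimizer. Perturbing each $\mu_j$ within the admissible class gives the variational conditions: for $\mu_1$ these are exactly (\ref{eq:reg}), for $\mu_3$ an analogous unconstrained identity, and for $\mu_2$ a condition with complementary slackness, namely equality on $S_{\sigma-\mu_2}$ (constraint inactive) and saturation $\mu_2=\sigma$ on the complement.

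The analytic heart of the argument is to convert these conditions into a single algebraic function. I would introduce the resolvents $G_j(z)=\int \D\mu_j(s)/(z-s)$ and show, by differentiating the variational equalities across the various supports and using $G_j(\bar z)=\overline{G_j(z)}$, that a suitable combination of the $G_j$ with the derivative of the external field continues analytically across the cuts to a branch of an algebraic function. The nonanalytic term $\tfrac34\tau^{4/3}|x|^{4/3}$ originates from the three saddle points of the inner integral $\int e^{-n(y^4/4-\tau xy)}\,\D y$ (saddle equation $y^3=\tau x$), which is precisely the source of the three measures and of the masses $1,\tfrac23,\tfrac13$. The resulting curve is a finite-sheeted branched cover of the $z$-plane whose branch points locate the edges of all supports.

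From this curve I would extract the remaining structural claims. The number of intervals in $S_{\mu_1}$ is finite because $\rho_1$ is, up to a square root, the boundary value of an algebraic function, so it has finitely many zeros; the generic soft-edge behaviour then produces the form $\psi_j(z)\sqrt{(\lambda_{2j}-z)(z-\lambda_{2j-1})}$ with $\psi_j$ analytic and nonnegative. For $\mu_2$, locating where the constraint is active shows the saturated region is a single symmetric interval $(-ic,ic)$ on $i\mathbb{R}$, whence $S_{\sigma-\mu_2}=i\mathbb{R}\setminus(-ic,ic)$, the square-root vanishing at $\pm ic$ marking the transition between the free and constrained regimes. Analyticity of $\mu_3$ on $\mathbb{R}\setminus\{0\}$ follows by the same reasoning, the origin being singular because of the $|x|^{4/3}$ nonanalyticity transmitted through the coupling.

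The symmetry of item 4 is the cleanest step: setting $\tilde\mu_j(A)=\mu_j(-A)$, the evenness of $V$, the invariance of $\sigma$ and of $|x|^{4/3}$ under $x\mapsto-x$, and the reflection-invariance of the logarithmic kernel give $E_V(\tilde\mu_1,\tilde\mu_2,\tilde\mu_3)=E_V(\mu_1,\mu_2,\mu_3)$ with $(\tilde\mu_1,\tilde\mu_2,\tilde\mu_3)$ admissible, so uniqueness forces $\tilde\mu_j=\mu_j$. The main obstacle I anticipate is the rigorous control of the $\mu_2$-constraint --- proving that the saturated set is exactly one symmetric interval and that $\sigma-\mu_2$ vanishes as a clean square root at its endpoints --- which demands a simultaneous analysis of the algebraic curve and the inequality $\nu_2\le\sigma$, and is the point where the precise density of $\sigma$ inherited from the monomial $W=\tfrac{y^4}{4}$ enters decisively.
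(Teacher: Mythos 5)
First, a point of orientation: the paper you were given does not prove this statement at all. Theorem \ref{thm:equ} is quoted verbatim from Duits and Kuijlaars \cite{DK} (their Theorem 2.3), and the present paper only uses it as an input for the multi-cut Riemann--Hilbert analysis; so the only meaningful comparison is with the proof in \cite{DK}. Your skeleton --- positive-definite interaction matrix giving strict convexity and hence uniqueness, Euler--Lagrange conditions with complementary slackness for the constrained measure, passage to an algebraic spectral curve to obtain the finiteness of the number of intervals and the square-root edge behaviour, and symmetry of item 4 forced by uniqueness --- does match the broad strategy of \cite{DK}, and your symmetry argument is essentially theirs.

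There is, however, a genuine gap in your existence argument. The admissible set is \emph{not} weak-$*$ compact: $\nu_1$, $\nu_3$ live on $\mathbb{R}$ and $\nu_2$ on $i\mathbb{R}$, all unbounded, so mass can escape to infinity. The constraint $\nu_2\le\sigma$ does not prevent this, since $\sigma$ has infinite total mass, and $\nu_2$, $\nu_3$ are subject to no external field at all --- whatever confines them must come through the interaction terms $-I(\nu_1,\nu_2)$ and $-I(\nu_2,\nu_3)$, which have the "wrong" sign to be handled termwise; moreover, lower semicontinuity of the logarithmic energy on unbounded sets is itself delicate because the kernel $\log(1/|x-y|)$ is unbounded below. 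So ``convex $+$ compact $+$ lsc'' does not close the existence proof as stated. This is precisely where the substantive work lies in \cite{DK}: existence is obtained not by the direct method on the product space, but by minimizing out $\nu_3$ and $\nu_2$ for fixed $\nu_1$ (these partial minimizers are identified explicitly in terms of balayage, the constraint $\sigma$ entering through an iterated balayage onto $i\mathbb{R}$), which reduces the problem to a scalar equilibrium problem for $\nu_1$ with an effective confining field, where classical potential theory applies. A second, smaller soft spot: your Euler--Lagrange condition for $\mu_2$ omits the inequality on the saturated region (the effective potential is $\le$ the Lagrange constant where $\mu_2=\sigma$), and it is exactly this inequality, combined with the algebraic curve, that pins down the saturated set as a single symmetric interval $(-ic,ic)$ and gives the square-root vanishing of $\sigma-\mu_2$ at $\pm ic$; you correctly flag this as the main obstacle, but it remains unexecuted in your sketch.
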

\begin{remark}\label{re:sim}
In particular, by 4. in the above, we see that $S_{\mu_1}$ is
symmetric under the map $z\mapsto -z$, that is, we have
$\lambda_k=-\lambda_{2g+2-k+1}$.
\end{remark}

We then have the following definition of regularity. (See Definition 2.5 in \cite{DK})
\begin{definition}\label{de:reg}
The potential $V(x)$ is regular if the following conditions are
satisfied.
\begin{enumerate}
\item The inequality in (\ref{eq:reg}) is strict outside of
$S_{\mu_1}$;
\item The density $\rho_1(z)$ vanishes like a square-root at the end
points of $S_{\mu_1}$;
\item The density $\rho_1(z)$ does not vanish in the interior of
$S_{\mu_1}$.
\end{enumerate}
\end{definition}
It is known that a generic potential $V(x)$ is regular \cite{DK}.
The ``one-cut regular case" is the case when $S_{\mu_1}$ consists
only of a single interval and that $V(x)$ is regular. In \cite{DK},
rigorous asymptotics of biorthogonal polynomials was obtained for
this case. The asymptotics of the biorthogonal polynomials were then
used to obtain an asymptotic expression for the kernel $K_{11}^{n}$
in (\ref{eq:ker}). In this paper we will extend these result to the
case when $\mu_1$ is supported on any number of intervals. (See
Theorem \ref{thm:main1} and Theorem \ref{thm:main2})

\section{Statement of results}\label{se:Re}
In this paper we obtain universality results for the 2 matrix model
with potentials $W(y)=\frac{y^4}{4}$ and $V(x)$ a general even
polynomial. Moreover, we will assume the potential $V(x)$ satisfies
the regularity condition in Definition \ref{de:reg} and that $n$ is
a multiple of $3$. Then we have the following result on the global
eigenvalue distribution of the matrix $M_1$.
\begin{theorem}\label{thm:main1}
Let $(\mu_1,\mu_2,\mu_3)$ be the equilibrium measures that minimize
the functional (\ref{eq:EV}). Then as $n\rightarrow\infty$ and
$n\equiv 0(\mathrm{mod}3)$, we have
\begin{equation}\label{eq:global}
\lim_{n\rightarrow\infty}\frac{1}{n}K_{11}^{n}(x,x)=\rho_1(x),
\end{equation}
uniformly for $x\in\mathbb{R}$, where $\rho_1$ is the density of
$\mu_1$ in Definition \ref{de:reg}.
\end{theorem}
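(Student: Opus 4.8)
The plan is to deduce the global density \eqref{eq:global} from a complete Deift--Zhou steepest descent analysis of the Riemann--Hilbert problem of \cite{KMRH}, \cite{BEHRH} characterizing the biorthogonal polynomials, and then to evaluate the resulting kernel asymptotics on the diagonal. First I would recall from Section \ref{se:RHP} the matrix-valued RHP $Y$ whose solution encodes $p_k$, $q_k$ and the transforms \eqref{eq:trans}, together with the generalized Christoffel--Darboux formula that writes $K_{11}^{n}(x,y)$ in \eqref{eq:ker} as a bounded combination of entries of $Y(x)$ and $Y(y)$. This reduces \eqref{eq:global} to the large-$n$ behavior of $Y$ on the real line.

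Next I would normalize using the equilibrium measures. With the scalar potentials (the complex logarithmic transforms of $\mu_1,\mu_2,\mu_3$), the first transformation $Y\mapsto T$ makes the problem bounded at infinity and, by the variational relations \eqref{eq:reg}, turns the jump on $S_{\mu_1}$ into a purely oscillatory one while leaving the remaining jumps exponentially close to the identity away from $S_{\mu_1}$; the strict inequality demanded in part 1 of Definition \ref{de:reg} is precisely what yields this exponential smallness off the support. Opening lenses around each band $[\lambda_{2j-1},\lambda_{2j}]$ in \eqref{eq:Smu1} then produces $S$, whose jumps are uniformly near the identity except on the bands themselves.

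The central construction is the global parametrix $P^{(\infty)}$. On the multi-cut support $S_{\mu_1}=\cup_{j=1}^{g+1}[\lambda_{2j-1},\lambda_{2j}]$ I would build it from Riemann theta functions on the hyperelliptic surface attached to $S_{\mu_1}$, and patch in local Airy parametrices near each endpoint $\lambda_k$, using that $\rho_1$ vanishes like a square root there (part 2 of Definition \ref{de:reg}). A standard small-norm argument then gives $R=S\,(P^{(\infty)})^{-1}=I+O(1/n)$ uniformly. Substituting back, on the interior of $S_{\mu_1}$ the off-diagonal theta contributions collapse on the diagonal $x=y$, and the leading $O(n)$ term of $K_{11}^{n}(x,x)$ is exactly $n$ times the boundary density \eqref{eq:mu1den}, \ie $n\rho_1(x)$; for $x$ outside $S_{\mu_1}$ the exponential smallness forces $\tfrac1n K_{11}^{n}(x,x)\to 0=\rho_1(x)$; and the Airy regions interpolate smoothly, keeping the convergence uniform up to the edges.

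I expect the main obstacle to be showing that the theta-function global parametrix is well defined for all admissible $n$ rather than merely along a subsequence. The theta function in $P^{(\infty)}$ is evaluated at an argument that drifts with $n$, and one must rule out landing on the theta divisor. This is where the hypothesis $n\equiv 0\ (\mathrm{mod}\ 3)$ and the reflection symmetry $\mu_j(A)=\mu_j(-A)$ from part 4 of Theorem \ref{thm:equ} enter, constraining the relevant periods to a sublattice on which the theta argument stays off the divisor. Establishing this non-vanishing, together with the uniformity of the estimates and the global-to-local matching at each $\lambda_k$, is the delicate step; once it is secured, \eqref{eq:global} follows immediately by reading off the diagonal of the kernel.
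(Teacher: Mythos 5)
Your overall scaffolding (steepest descent, global theta parametrix, local Airy parametrices, small-norm argument, then reading off the diagonal) matches the architecture of the paper, which in fact delegates all of that machinery to \cite{DK} via Theorem \ref{thm:DK} and concentrates on the two points you correctly flag as "the main obstacle": constructing the solution of the model problem (\ref{eq:outer}) and showing it is well defined for all $n$. However, on precisely those two points your proposal goes wrong. First, the parametrix cannot be built from theta functions ``on the hyperelliptic surface attached to $S_{\mu_1}$.'' The model Riemann--Hilbert problem here is $4\times 4$, with a jump on $S_{\mu_1}$ coupling the first two columns, a jump on $S_{\sigma-\mu_2}\subset i\mathbb{R}$ coupling the second and third columns, and asymptotics involving $z^{\pm 1/3}$ and the constant matrices $A_j$. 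The correct surface is the genus-$g$, four-sheeted cover $\mathcal{\Lie}$ obtained by gluing four spheres along $S_{\mu_1}$, $S_{\sigma-\mu_2}$ and $S_{\mu_3}$ (Section \ref{se:theta}); a two-sheeted hyperelliptic curve over $S_{\mu_1}$ has no mechanism to produce the jump on the imaginary axis or the cube-root behavior at infinity, so the construction as you describe it would fail at the level of the jump conditions.

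Second, your mechanism for avoiding the theta divisor is not viable. You propose that $n\equiv 0\ (\mathrm{mod}\ 3)$ together with the symmetry $\mu_j(A)=\mu_j(-A)$ constrains the theta argument to a sublattice missing the divisor. The mod-$3$ condition plays no such role (the paper states it is inessential, needed only so that the exponents $z^{-n/3}$ in (\ref{eq:RHP}) are integers), and no sublattice argument can work: the drifting part of the argument is $n\vec{\alpha}$ with $\vec{\alpha}$ the vector of equilibrium masses (\ref{eq:apj}), which is generically irrational, so $n\vec{\alpha}$ modulo $\mathbb{Z}^g$ is dense in the real torus. What the paper actually proves (Section \ref{se:nonvan}) is that the whole real torus misses the divisor: $\mathcal{\Lie}$ is a Schottky double for the anti-holomorphic involution $\phi$, hence the period matrix $\Pi$ is purely imaginary (Lemma \ref{le:imag}), the $b$-periods $\beta_j$ of the normalized meromorphic differentials $\D\Delta_j$ are purely imaginary (Lemma \ref{pro:purim}), and $u(\infty^k)$ is real (via the Riemann bilinear relation); therefore the arguments $u(\infty^k)+\frac{\beta_j}{2\pi i}+n\vec{\alpha}$ are real, and by Fay's positivity result (Proposition \ref{pro:fay1}) the theta function is strictly positive there, with a lower bound $\delta>0$ uniform in $n$ by compactness of the torus. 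Without this reality-plus-positivity argument (or an equivalent), your proof has no way to rule out that the parametrix degenerates along some subsequence of $n$, and the uniform small-norm step --- hence (\ref{eq:global}) itself --- does not go through.
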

As explained in \cite{DK}, the requirement $n\equiv
0(\mathrm{mod}3)$ is not essential and is only imposed to minimize
the technicality. The other result is the universality property of
the kernel $K_{11}^{n}$.
\begin{theorem}\label{thm:main2}
Let $(\mu_1,\mu_2,\mu_3)$ be the equilibrium measures that minimize
the functional (\ref{eq:EV}). Then as $n\rightarrow\infty$ and
$n\equiv 0(\mathrm{mod}3)$, we have the followings.
\begin{enumerate}
\item Let $x^{\ast}$ be a point in the interior of the support
$S_{\mu_1}$. Then we have
\begin{equation}\label{eq:bulk}
\begin{split}
\lim_{n\rightarrow\infty}&\frac{1}{n\rho_1(x^{\ast})}K_{11}^{n}\left(x^{\ast}+\frac{u}{n\rho_1(x^{\ast})}
,x^{\ast}+\frac{v}{n\rho_1(x^{\ast})}\right)=\frac{\sin\left(\pi(u-v)\right)}{\pi(u-v)},
\end{split}
\end{equation}
uniformly for $u$, $v$ in compact subsets of $\mathbb{R}$.
\item Let $\varphi_j>0$ be such that
\begin{equation*}
\rho_1(z)=\frac{\varphi_j}{\pi}|z-\lambda_j|^{\frac{1}{2}}+O\left(z-\lambda_j\right),
\end{equation*}
as $z\rightarrow\lambda_j$, $j=1,\ldots,2g+2$ inside of $S_{\mu_1}$,
where $\lambda_j$ are defined as in (\ref{eq:Smu1}). Then we have
the following
\begin{equation}\label{eq:edge}
\begin{split}
\lim_{n\rightarrow\infty}&\frac{1}{\left(n\varphi_j\right)^{\frac{2}{3}}}K_{11}^{n}\left(\lambda_j+(-1)^j\frac{u}{\left(n\varphi_j\right)^{\frac{2}{3}}}
,\lambda_j+(-1)^j\frac{v}{\left(n\varphi_j\right)^{\frac{2}{3}}}\right)\\
&=\frac{\mathrm{Ai}(u)\mathrm{Ai}^{\prime}(v)-\mathrm{Ai}(v)\mathrm{Ai}^{\prime}(u)}{u-v},
\end{split}
\end{equation}
uniformly for $u$, $v$ in compact subsets of $\mathbb{R}$, where
$\mathrm{Ai}$ is the Airy function.
\end{enumerate}
\end{theorem}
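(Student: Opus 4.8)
The plan is to carry out the Deift--Zhou nonlinear steepest descent analysis of the Riemann--Hilbert problem (RHP) for the biorthogonal polynomials, extending the one-cut treatment of \cite{DK} to the multi-interval case. The point of contact with the kernel is the generalized Christoffel--Darboux formula referred to above, which writes $K_{11}^{n}$ as a finite bilinear combination of the biorthogonal polynomials and their transforms (\ref{eq:trans}), and hence identifies $K_{11}^{n}$ with a bounded expression in the entries of the solution $Y$ of the RHP. Thus the asymptotics of the kernel follow once uniform asymptotics for $Y$ are established.

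First I would apply the chain of transformations set up in \cite{DK}. The first of these, built from the equilibrium triple $(\mu_1,\mu_2,\mu_3)$ of Definition \ref{de:equ}, normalizes the RHP at infinity, and the variational (in)equalities (\ref{eq:reg}) turn the jumps into piecewise-constant oscillatory data on $S_{\mu_1}$ together with exponentially small data elsewhere; opening lenses around each band $[\lambda_{2j-1},\lambda_{2j}]$ then reduces the oscillation to jumps that are $I+O(e^{-cn})$ away from the support. What remains is a model RHP with constant jumps on the bands and small contributions concentrated near the endpoints $\lambda_j$.

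The essential new step is the construction of the global parametrix $N$ in the multi-cut regime. Because $S_{\mu_1}$ now has $g+1$ bands, the relevant spectral curve carries additional branch points and the parametrix must be assembled from Riemann theta functions on the associated genus-$g$ surface rather than from elementary algebraic functions. Near each $\lambda_j$ the global parametrix has the expected square-root singularity, which I resolve by inserting Airy local parametrices $P_j$ in small disks, matched to $N$ on the disk boundaries to order $O(1/n)$; the matching relies on the square-root vanishing of $\rho_1$ (Definition \ref{de:reg}(2)) quantified by $\varphi_j$. Forming the ratio $R=SN^{-1}$ outside the disks and $R=SP_j^{-1}$ inside them produces a final RHP with jumps $I+O(1/n)$, so that $R=I+O(1/n)$ uniformly by the standard small-norm argument. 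Tracing the transformations back gives uniform asymptotics for $Y$. In the bulk the theta-function phases in $N$, expanded at an interior point $x^{\ast}$ and rescaled by the factor $n\rho_1(x^{\ast})$, collapse in the Christoffel--Darboux quotient to the sine kernel (\ref{eq:bulk}), the theta prefactors cancelling between numerator and denominator; at a band edge the Airy parametrix delivers the Airy kernel (\ref{eq:edge}) under the scaling $\lambda_j+(-1)^j v/(n\varphi_j)^{2/3}$, with the sign and the exponent $2/3$ dictated by the orientation of the endpoint and the Airy conformal map.

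The main obstacle, and the real content beyond \cite{DK}, is to show that the global parametrix $N$ is well-defined for every large $n$. The theta functions in $N$ are evaluated at an argument that drifts with $n$ through $n$-dependent phases governed by the periods of $\mu_1$, and they appear in the denominator of $N$; one must therefore prove that this drifting argument never meets the theta divisor. This is precisely where the symmetry of the equilibrium measures (Theorem \ref{thm:equ}(4) and Remark \ref{re:sim}) and the hypothesis $n\equiv 0\ (\mathrm{mod}\ 3)$ are used: they pin down the relevant lattice shift and half-period so that the argument stays off the zero set of the theta function for all sufficiently large $n$, making $N$---and with it the entire steepest-descent construction---legitimate uniformly in $n$. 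I expect verifying this avoidance of the theta divisor to be the most delicate part of the argument.
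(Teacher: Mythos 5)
Your overall architecture matches the paper's: rely on the steepest-descent machinery of \cite{DK} (which the paper packages as Theorem \ref{thm:DK}: if the model problem (\ref{eq:outer}) has a solution, uniformly bounded in $n$ together with its inverse, then (\ref{eq:bulk}) and (\ref{eq:edge}) follow), construct the multi-cut global parametrix from theta functions on the genus-$g$ four-sheeted surface, and reduce everything to showing that this parametrix exists and is uniformly controlled for all admissible $n$. You also correctly locate the crux: the theta functions appear in denominators with an argument shifted by $n\vec{\alpha}$, and one must show that this shifted argument never meets the theta divisor.

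However, your proposed resolution of that crux is where the gap lies. You attribute the divisor avoidance to the evenness symmetry of the equilibrium measures and to the hypothesis $n\equiv 0\ (\mathrm{mod}\ 3)$, and you claim avoidance only for all sufficiently large $n$. Neither ingredient does this job: the mod-3 condition merely fixes the form of the RHP at infinity (the paper notes it is inessential), and the symmetry $\alpha_k=1-\alpha_{g+1-k}$ is used inside the proof of Theorem \ref{thm:outer} to pin down the behavior of $N_2,N_3$ at $\infty^2$ via the involution $z\mapsto -z$, not to control the theta divisor. Moreover, no genericity or large-$n$ argument can work here: the orbit $\{n\vec{\alpha}\ \mathrm{mod}\ \mathbb{Z}^g\}$ is in general dense in the real torus, so avoiding the divisor for all large $n$ already forces the divisor to be disjoint from the closure of the orbit, which is precisely what must be proved, and for all $n$ at once. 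The paper's actual mechanism is a reality and positivity argument exploiting the Schottky-double structure of $\mathcal{\Lie}$: the anti-holomorphic involution $\phi$ fixing $\Gamma$ yields (i) a purely imaginary period matrix $\Pi$ (Lemma \ref{le:imag}), (ii) purely imaginary periods $\beta_j$ of the normalized differentials $\D\Delta_j$ (Lemma \ref{pro:purim}, which needs Corollary \ref{pro:fay2} to place the poles $\iota_l$ on $\Gamma$ so that residues cancel when deforming $\phi(a_k)$ to $a_k$), and (iii) real $u(\infty^k)$ (via the Riemann bilinear relation (\ref{eq:RB})). Hence the argument $u(\infty^k)+\beta_j/(2\pi i)+n\vec{\alpha}$ is real for every $n$, and by Fay's result (Corollary 6.13 of \cite{Fay}, the paper's Proposition \ref{pro:fay1}) the theta function of such a surface is real and strictly positive on the real torus; compactness of $\mathbb{R}^g/\mathbb{Z}^g$ then gives the lower bound $\delta>0$ uniform in $n$ (Theorem \ref{thm:nonvan}) that the boundedness hypotheses of Theorem \ref{thm:DK} require. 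Without identifying this mechanism, your parametrix $N$ cannot be shown to be well defined, and the remainder of your argument has nothing to stand on.
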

Recall that the Airy function is the unique solution to the
differential equation $v^{\prime\prime}=zv$ that has the following
asymptotic behavior as $z\rightarrow\infty$ in the sector
$-\pi+\epsilon\leq \arg(z)\leq \pi-\epsilon$, for any $\epsilon>0$.
\begin{equation}\label{eq:asymairy}
\mathrm{Ai}(z)=\frac{1}{2\sqrt{\pi}z^{\frac{1}{4}}}e^{-\frac{2}{3}z^{\frac{3}{2}}}\left(1+O(z^{-\frac{3}{2}})\right)
,\quad -\pi+\epsilon\leq \arg(z)\leq \pi-\epsilon,\quad
z\rightarrow\infty.
\end{equation}
where the branch cut of $z^\frac{3}{2}$ in the above is chosen to be
the negative real axis.

Although the steepest decent analysis in \cite{DK} already covers
the general case without the 1-cut assumption, solution to a
`modeled Riemann-Hilbert problem' (See (\ref{eq:outer})) must be
obtained to complete the Riemann-Hilbert analysis and to extend the
universality results to the general case. The main difficulties are
to show that a solution of the modeled Riemann-Hilbert problem
exists for all $n$ and to find an explicit expression of it. This
involves the study of the theta divisor, which is the set of points
in which a theta function vanishes. (See Section \ref{se:theta} for
a more detailed description of the theta function) This is a
difficult problem with very few results available. In this paper we
managed to construct the solution of the modeled Riemann-Hilbert
problem (\ref{eq:outer}) with the use of theta functions and by
using results from \cite{FK} and \cite{Fay}, we were able to show
the existence of the solution $M(z)$ to (\ref{eq:outer}) for all
$n$. This allows us to extend the universality results in \cite{DK}
to the case when $V(x)$ is a general even polynomial and obtain
Theorem \ref{thm:main1} and Theorem \ref{thm:main2}.

In many
applications of the Deift-Zhou steepest decent method, theta
function is needed to solve a modeled Riemann-Hilbert problem and
the solvability of these modeled
Riemann-Hilbert problems is important to guarantee the
validity of these asymptotic formula for all $n$, as
$n\rightarrow\infty$. We believe the techniques and results in this
paper will not only be valuable to the random matrix community
studying 2 matrix models, but it will also be important to many
other problems in which the Deift-Zhou steepest decent method is
applicable.

\section*{Acknowledgement}
The author is indebted to M. Duits and A. B. J. Kuijlaars for many
fruitful discussions and for inviting me to Katholieke Universiteit
Leuven in which this work initiated. I am also grateful to them for
providing me with an early copy of \cite{Du} and for encouraging me
to complete this paper. I am also grateful to M. Bertola for showing
me the relevant results in \cite{Fay} which led to the completion of
Section \ref{se:nonvan}. The funding of this research is provided by
the EPSRC grant EP/D505534/1.

\section{Riemann-Hilbert analysis}\label{se:RHP}
In this section we will summarize the results in \cite{DK} that are
relevant to our analysis. We will not repeat the analysis in
\cite{DK} but will simply state the results that are applicable to
our studies.

\subsection{Riemann-Hilbert problem}
The biorthogonal polynomials $p_k(x)$ in (\ref{eq:bop}) satisfies a
Riemann-Hilbert problem \cite{BEHRH}, \cite{KMRH} similar to the one
that is satisfied by orthogonal polynomials \cite{FI}. This allows
the implementation of the Deift-Zhou steepest decent method.
(\cite{D}, \cite{DKV}, \cite{DKV2}, \cite{BI})

Let $w_j(x)$ be the weights defined by
\begin{equation}\label{eq:weight}
w_j(x)=e^{-nV(x)}\int_{\mathbb{R}}y^je^{-n\left(\frac{y^4}{4}-\tau
xy\right)}\D y,\quad j=0,1,2.
\end{equation}
Assuming $n$ is divisible by $3$ and consider the following
Riemann-Hilbert problem
\begin{equation}\label{eq:RHP}
\begin{split}
1.\quad &\text{$Y(z)$ is analytic in
$\mathbb{C}\setminus\mathbb{R}$},\\
2.\quad &Y_+(z)=Y_-(z)\begin{pmatrix}1&w_0(z)&w_1(z)&w_2(z)\\
0&1&0&0\\
0&0&1&0\\
0&0&0&1
\end{pmatrix},\quad z\in\mathbb{R}\\
3.\quad &Y(z)=\left(I+O(z^{-1})\right)\begin{pmatrix}z^{n}&0&0&0\\
0&z^{-\frac{n}{3}}&0&0\\
0&0&z^{-\frac{n}{3}}&0\\
0&0&0&z^{-\frac{n}{3}}
\end{pmatrix},\quad z\rightarrow\infty.
\end{split}
\end{equation}
This Riemann-Hilbert problem has a unique solution given by the
biorthogonal polynomial $p_k(x)$ and some other polynomials,
together with their Cauchy transforms \cite{KMRH}.
\begin{equation}\label{eq:RHsol}
Y(z)=\begin{pmatrix}p_n(z)&C(p_nw_0)(z)&C(p_nw_1)(x)&C(p_nw_2)(z)\\
p_{n-1}^{(0)}(z)&C(p_{n-1}^{(0)}w_0)(z)&C(p_{n-1}^{(0)}w_1)(z)&C(p_{n-1}^{(0)}w_2)(z)\\
p_{n-1}^{(1)}(z)&C(p_{n-1}^{(1)}w_0)(z)&C(p_{n-1}^{(1)}w_1)(z)&C(p_{n-1}^{(1)}w_2)(z)\\
p_{n-1}^{(2)}(z)&C(p_{n-1}^{(2)}w_0)(z)&C(p_{n-1}^{(2)}w_1)(z)&C(p_{n-1}^{(2)}w_2)(z)
\end{pmatrix},
\end{equation}
where $p_{n-1}^{(j)}(z)$, $j=0,1,2$ are some polynomials of degree
$n-1$ and $C(f)$ is the Cauchy transform of the function $f$.
\begin{equation*}
C(f)(x)=\frac{1}{2\pi i}\int_{\mathbb{R}}\frac{f(s)}{s-x}\D s.
\end{equation*}
In \cite{DK}, the Deift-Zhou steepest decent method (\cite{D},
\cite{DKV}, \cite{DKV2}, \cite{BI}) was extended to the
Riemann-Hilbert problem (\ref{eq:RHsol}). With the help of the
equilibrium measures $\mu_1$, $\mu_2$ and $\mu_3$ that minimize
(\ref{eq:EV}), the authors of \cite{DK} were able to transform and approximate
the Riemann-Hilbert problem (\ref{eq:RHsol}) by an explicitly
solvable one. To state their results, let us assume $V(x)$ is
regular. Let us denote an interval in the support $S_{\mu_1}$
(\ref{eq:Smu1}) of $\mu_1$ by $\Xi_j$ and a gap between the
intervals by $\tilde{\Xi}_j$.
\begin{equation}\label{eq:cutgaps}
\begin{split}
\Xi_j&=[\lambda_{2j-1},\lambda_{2j}],\quad j=1,\ldots, g+1,\\
\tilde{\Xi}_{j}&=[\lambda_{2j},\lambda_{2j+1}],\quad j=1,\ldots, g,\\
\tilde{\Xi}_0&=(-\infty,\lambda_1],\quad
\tilde{\Xi}_{g+1}=[\lambda_{2g+2},\infty).
\end{split}
\end{equation}
We will define $\alpha_j$ to be the constants
\begin{equation}\label{eq:apj}
\alpha_{k}=\mu_1\left(\cup_{j=k+1}^{g+1}\Xi_{j}\right),\quad
k=0,\ldots, g,\quad \alpha_{g+1}=0.
\end{equation}
Note that, since $V(x)$ is an even polynomial, by Theorem
\ref{thm:equ}, we have $\mu_1(A)=\mu_1(-A)$ for any Borel set $A$.
Therefore the constants $\alpha_k$ in (\ref{eq:apj}) satisfy the
symmetry
\begin{equation}\label{eq:symmetry}
\alpha_k=1-\alpha_{g+1-k}.
\end{equation}
Let us define the following Riemann-Hilbert problem for a matrix
$M(z)$. (See Section 8 of \cite{DK} and 5.10 of \cite{Du})
\begin{equation}\label{eq:outer}
\begin{split}
1.&\quad \textrm{$M(z)$ is analytic in
$\mathbb{C}\setminus\left(\mathbb{R}\cup S_{\sigma-\mu_2}\right)$,}\\
2.&\quad M_+(z)=M_-(z)J_M(z),\quad z\in\mathbb{R}\cup
S_{\sigma-\mu_2},\\
3.& \quad M(z)=\left(I+O(z^{-1})\right)\begin{pmatrix}1&0&0&0\\
0&z^{\frac{1}{3}}&0&0\\
0&0&1&0\\
0&0&0&z^{-\frac{1}{3}}\end{pmatrix}\begin{pmatrix}1&0\\
0&A_j\end{pmatrix},
\\
&\textrm{uniformly as $z\rightarrow\infty$ in the $j^{th}$
quadrant,}
\\
4.&\quad M(z)=O\left((z-\lambda_j)^{-\frac{1}{4}}\right),\quad
z\rightarrow\lambda_j,\quad j=1,\ldots, 2g+2,\\
&\quad M(z)=O\left((z\mp ic)^{-\frac{1}{4}}\right),\quad
z\rightarrow\pm ic.
\end{split}
\end{equation}
where $S_{\sigma-\mu_2}$ is oriented upward, the branch of
$z^{\frac{1}{3}}$ is chosen such that $z^{\frac{1}{3}}\in\mathbb{R}$
for $z\in\mathbb{R}_+$ and the branch cut is chosen to be the
negative real axis. The $A_j$ are given by (with
$\omega=e^{\frac{2\pi i}{3}}$)
\begin{equation*}
\begin{split}
A_1&=\frac{i}{\sqrt{3}}\begin{pmatrix}-1&\omega&\omega^2\\
                                      -1&1&1\\
                                      -1&\omega^2&\omega
                                      \end{pmatrix},\quad
A_2=\frac{i}{\sqrt{3}}\begin{pmatrix}\omega&1&\omega^2\\
                                      1&1&1\\
                                      \omega^2&1&\omega
                                      \end{pmatrix},\\
A_3&=\frac{i}{\sqrt{3}}\begin{pmatrix}\omega^2&1&-\omega\\
                                      1&1&-1\\
                                      \omega&1&-\omega^2
                                      \end{pmatrix},\quad
A_4=\frac{i}{\sqrt{3}}\begin{pmatrix}-1&\omega^2&-\omega\\
                                      -1&1&-1\\
                                      -1&\omega&-\omega^2
                                      \end{pmatrix}.
\end{split}
\end{equation*}
The jump matrices $J_M(z)$ in (\ref{eq:outer}) are given by the
followings
\begin{equation}\label{eq:JM}
\begin{split}
J_M(z)&=\begin{pmatrix}0&1&0&0\\
-1&0&0&0\\0&0&0&1\\0&0&-1&0\end{pmatrix}
 ,\quad z\in S_{\mu_1},\\
J_M(z)&=\begin{pmatrix}e^{-2n\pi i\alpha_k}&0&0&0\\
0&e^{2n\pi i\alpha_k}&0&0\\
0&0&0&1\\0&0&-1&0\end{pmatrix},\quad z\in\tilde{\Xi}_k,\quad
k=0,\ldots,g+1,\\
J_M(z)&=\begin{pmatrix}1&0&0&0\\
0&0&-1&0\\
0&1&0&0\\
0&0&0&1\end{pmatrix},\quad z\in S_{\sigma-\mu_2}.
\end{split}
\end{equation}
The steepest decent analysis in \cite{DK} leads to the `modeled
Riemann-Hilbert problem' (\ref{eq:outer}). Provided a solution
$M(z)$ of (\ref{eq:outer}) exists and is uniformly bounded in $n$
away from the singularities, the analysis in \cite{DK} that leads to
the asymptotic forms (\ref{eq:global}), (\ref{eq:bulk}) and
(\ref{eq:edge}) can be carried out with the parametrix $M(z)$.
\begin{theorem}\label{thm:DK}
Let $\varepsilon>0$ be a fixed small number independent on $n$. Let
$B_{\varepsilon,j}$ and $B_{\varepsilon,\pm ic}$ be small discs of
radius $\varepsilon$ centered at $\lambda_j$ and $\pm ic$
respectively. Let $\mathcal{K}\subset\mathbb{C}$ be a compact subset
in $\mathbb{C}$ and let $\mathcal{T}$ be the set
\begin{equation}\label{eq:calT}
\mathcal{T}=\mathcal{K}\setminus\left(\cup_{j=1}^{2g+2}B_{\varepsilon,j}\cup
B_{\varepsilon,ic}\cup B_{\varepsilon,-ic}\right).
\end{equation}
Suppose the solution $M(z)$ of the Riemann-Hilbert problem
(\ref{eq:outer}) and its inverse $M^{-1}(z)$ exist and satisfy the
following conditions.
\begin{enumerate}
\item Both $M(z)$ and $M^{-1}(z)$ are bounded in $n$ uniformly inside $\mathcal{T}$ for any compact
subset $\mathcal{K}$;
\item For any $r>\mathrm{max}\{c,\lambda_{2g+2}\}$ independent on $n$, there exist constants $C_{jk}^l$
and $\varpi_{jk}^l$, $1\leq j,k,l\leq 4$, independent on $n$, such
that, for $z>|r|$,
\begin{equation*}
\begin{split}
&|\left(M(z)\right)_{jk}|<C_{jk}^l\left|z^{\frac{1}{3}}\right|,\quad
\textrm{for $z$ in the $l^{th}$
quadrant,}\\
&|\left(M^{-1}(z)\right)_{jk}|<\varpi_{jk}^l\left|z^{\frac{1}{3}}\right|,\quad
\textrm{for $z$ in the $l^{th}$ quadrant.}
\end{split}
\end{equation*}
\end{enumerate}
Then as $n\rightarrow\infty$ and $n\equiv 0\mathrm{mod}3$, the
asymptotic behavior of the kernel $K_{11}^{n}$ is given by
(\ref{eq:global}), (\ref{eq:bulk}) and (\ref{eq:edge}).
\end{theorem}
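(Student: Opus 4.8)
The plan is to run the Deift--Zhou steepest descent analysis of the Riemann--Hilbert problem (\ref{eq:RHP})--(\ref{eq:RHsol}) exactly as in \cite{DK}, using the given matrix $M(z)$ as the global (outer) parametrix, and to check that the one-cut assumption entered the analysis of \cite{DK} only through the construction and boundedness of this parametrix, so that hypotheses 1 and 2 are precisely what is needed to close the argument in the multi-cut case. First I would perform the normalizing transformation $Y\mapsto T$ built from the $\lambda$-functions (equivalently the $g$-functions) associated with the logarithmic potentials of $\mu_1,\mu_2,\mu_3$. By the variational equality and inequality (\ref{eq:reg}) and the regularity of $V$ (Definition \ref{de:reg}), the transformed diagonal jumps on $S_{\mu_1}$ become purely oscillatory while the jumps in the gaps and off the support decay: on each gap the diagonal entries reduce to the constants $e^{\pm 2n\pi i\alpha_k}$ recorded in (\ref{eq:JM}), which is the multi-cut signature absent from the one-cut case. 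Because each density factor $\psi_j$ is analytic and non-negative on its interval (Theorem \ref{thm:equ}), the jump on $S_{\mu_1}$ factorizes and I can open lenses, $T\mapsto S$, so that the jumps on the lens boundaries are $I+O(e^{-cn})$ away from the band edges.

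Next I would build the global parametrix from $M(z)$ on $\mathbb{C}$ minus small discs $B_{\varepsilon,j}$ around the band edges $\lambda_j$ and $B_{\varepsilon,\pm ic}$ around $\pm ic$, and glue in local parametrices inside those discs: the standard Airy parametrix at each $\lambda_j$, matched to $M(z)$ on the boundary circle, and the analogous Airy-type parametrix at $\pm ic$, where $\sigma-\mu_2$ vanishes as a square root. Setting $R(z)=S(z)P(z)^{-1}$, where $P$ denotes the combined global-and-local parametrix, gives a matrix $R$ that is analytic off the union of the lens boundaries, the disc boundaries and the remaining exponentially small contours, and that tends to $I$ at infinity.

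The core estimate is then to show that $R$ solves a small-norm Riemann--Hilbert problem. On the lens boundaries and off-support contours the jump of $R$ is $I+O(e^{-cn})$; on each circle $\partial B_{\varepsilon,j}$ and $\partial B_{\varepsilon,\pm ic}$ the jump of $R$ has the form $I+M(z)\Delta(z)M(z)^{-1}$, where $\Delta(z)=O(n^{-1})$ comes from the subleading terms of the Airy asymptotics (\ref{eq:asymairy}); this is $I+O(n^{-1})$ \emph{uniformly} precisely because hypothesis 1 provides uniform boundedness of $M$ and $M^{-1}$ on $\mathcal{T}$, which contains these circles. The growth bound in hypothesis 2 controls the tails near infinity, where the $z^{\pm 1/3}$ and $A_j$ factors in condition 3 of (\ref{eq:outer}) enter, and guarantees that the jump of $R$ is small in $L^1\cap L^\infty$ there as well. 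Standard small-norm theory then yields $R(z)=I+O(n^{-1})$ uniformly in $z$, with an analogous estimate for $R^{-1}$.

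Finally I would unravel the transformations. Using the generalized Christoffel--Darboux formula (\cite{BEHmulti}, \cite{UN}, \cite{DaK}) to express $K_{11}^{n}(x,y)$ as a bilinear form in the entries of $Y_+$, I substitute $Y=R\,P\,(\text{undo the }\lambda\text{-functions})$ and let $n\to\infty$. In the interior of $S_{\mu_1}$ the oscillatory boundary values of the upper-left block of $M$ combine, after the local rescaling by $n\rho_1(x^{\ast})$, into the sine kernel (\ref{eq:bulk}); near each $\lambda_j$ the Airy parametrix produces the Airy kernel (\ref{eq:edge}) at the scale $(n\varphi_j)^{2/3}$; and the diagonal $\tfrac{1}{n}K_{11}^{n}(x,x)$ converges to $\rho_1(x)$, giving (\ref{eq:global}). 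I expect the main obstacle to be making the band-edge matching uniform in the presence of the theta-function structure of $M$: without hypotheses 1 and 2 the constants $C_{jk}^{l}$ and $\varpi_{jk}^{l}$ could grow with $n$ and the error $R$ would fail to be small. Since those hypotheses are assumed here, the genuinely hard problem --- exhibiting an $M(z)$ that satisfies them for every $n$ --- is deferred to the construction via theta functions in the remainder of the paper.
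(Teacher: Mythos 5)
Your proposal is correct and takes essentially the same approach as the paper: the paper offers no independent proof of this theorem, but simply invokes the steepest descent analysis of \cite{DK} after observing that the one-cut hypothesis entered that analysis only through the existence and uniform boundedness of the outer parametrix --- which is precisely the reduction you make. Your sketch (normalization via the equilibrium measures, lens opening, Airy-type local parametrices at the $\lambda_j$ and at $\pm ic$, and the small-norm argument for $R$ using hypotheses 1 and 2) is a faithful reconstruction of the analysis in \cite{DK} that the paper cites.
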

In the following sections we will construct the solution $M(z)$ with
the help of theta functions and we will show that the solution
satisfies the conditions in Theorem \ref{thm:DK} in Section
\ref{se:nonvan}.

\section{Theta function and Riemann surface}\label{se:theta}
We will now construct a Riemann surface from the equilibrium
measures and use the theta function on this Riemann surface to
construct a parametrix for the Riemann-Hilbert problem
(\ref{eq:outer}).

The Riemann surface is realized as a four-sheeted covering of the
Riemann sphere. Define four copies of the Riemann sphere by
$\mathcal{\Lie}_j$, $j=1,\ldots,4$
\begin{equation}
\begin{split}
\mathcal{\Lie}_1&=\overline{\mathbb{C}}\setminus S_{\mu_1},\quad
\mathcal{\Lie}_2=\overline{\mathbb{C}}\setminus (S_{\mu_1}\cup
S_{\sigma-\mu_2}),\\
\mathcal{\Lie}_3&=\overline{\mathbb{C}}\setminus
(S_{\sigma-\mu_2}\cup S_{\mu_3}),\quad
\mathcal{\Lie}_4=\overline{\mathbb{C}}\setminus (S_{\mu_3}),
\end{split}
\end{equation}
where $\overline{\mathbb{C}}$ is the Riemann sphere obtained by
adding the point $z=\infty$ to $\mathbb{C}$.

The Riemann surface $\mathcal{\Lie}$ is constructed as follows:
$\mathcal{\Lie}_1$ is connected to $\mathcal{\Lie}_2$ via
$S_{\mu_1}$, $\mathcal{\Lie}_2$ is connected to $\mathcal{\Lie}_3$
via $S_{\sigma-\mu_2}$ and $\mathcal{\Lie}_3$ is connected to
$\mathcal{\Lie}_4$ via $S_{\mu_3}$, as shown in Figure
\ref{fig:connection}.
\begin{figure}
\centering
\includegraphics[scale=0.75]{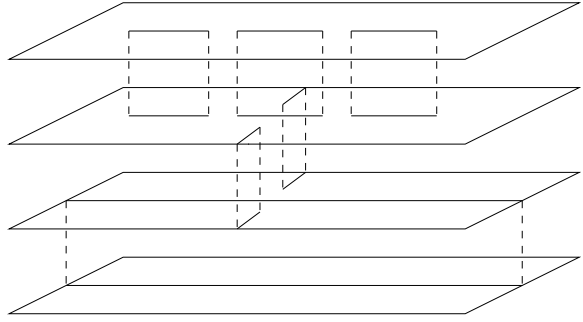} \caption{The sheet structure of the Riemann surface
$\mathcal{\Lie}$.}\label{fig:connection}
\end{figure}
Let us define the functions $F_j(z)$ by
\begin{equation}
F_j(z)=\int_{S_{\mu_j}}\frac{1}{z-s}\D\mu_j(s).
\end{equation}
Then we have the following result
\begin{lemma}\label{le:RiemSur} (Lemma 5.1 in
\cite{DK}, (see also Lemma 5.4.1 in \cite{Du})) The function
$\xi:\cup_{j=1}^4\mathcal{\Lie}_j\rightarrow\overline{\mathbb{C}}$
defined by
\begin{equation}
\xi(z)=\left\{
         \begin{array}{ll}
           -F_1(z)+V^{\prime}(z), & \hbox{$z\in\mathcal{\Lie}_1$;} \\
           F_1(z)-F_2(z)+\tau^{\frac{4}{3}}z^{\frac{1}{3}}, & \hbox{$z\in\mathcal{\Lie}_2,\quad\mathrm{Re}z>0$;} \\
           F_1(z)-F_2(z)-\tau^{\frac{4}{3}}(-z)^{\frac{1}{3}}, & \hbox{$z\in\mathcal{\Lie}_2,\quad\mathrm{Re}z<0$;} \\
           F_2(z)-F_3(z)-\tau^{\frac{4}{3}}(-z)^{\frac{1}{3}}, & \hbox{$z\in\mathcal{\Lie}_3,\quad\mathrm{Re}z>0$;} \\
           F_2(z)-F_3(z)+\tau^{\frac{4}{3}}z^{\frac{1}{3}}, & \hbox{$z\in\mathcal{\Lie}_3,\quad\mathrm{Re}z<0$;} \\
           F_3(z)+e^{\frac{4\pi i}{3}}\tau^{\frac{4}{3}}z^{\frac{1}{3}}, & \hbox{$z\in\mathcal{\Lie}_4,\quad\mathrm{Im}z>0$;} \\
           F_3(z)+e^{\frac{2\pi i}{3}}\tau^{\frac{4}{3}}z^{\frac{1}{3}}, & \hbox{$z\in\mathcal{\Lie}_4,\quad\mathrm{Im}z<0$.}
         \end{array}
       \right.
\end{equation}
has an extension to a meromorphic function (also denoted by $\xi$)
on $\mathcal{\Lie}$. The meromorphic function has a pole of order
$\mathrm{deg}V-1$ at infinity on the first sheet, and a simple pole
at the other points at infinity. We shall denote the restriction of
$\xi(z)$ to the sheet $\mathcal{\Lie}_k$ by $\xi_k(z)$.
\end{lemma}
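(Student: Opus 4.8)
The plan is to verify three things: that each branch $\xi_j$ is analytic in the interior of its sheet $\mathcal{L}_j$, that on each of the slits $S_{\mu_1}$, $S_{\sigma-\mu_2}$, $S_{\mu_3}$ along which the sheets are glued the boundary values of $\xi$ from the two adjacent sheets agree (so that $\xi$ is single-valued and analytic across the slit), and that over $z=\infty$ the only singularities are poles of the stated orders. Analyticity away from the coordinate axes is immediate, since each $F_j$ is a Cauchy transform and hence analytic off $S_{\mu_j}$, while the cube-root terms are analytic off the relevant axis. All of the content therefore lies in the matching of boundary values across the real and imaginary axes, which I would extract from the Euler--Lagrange equations of the equilibrium problem (\ref{eq:EV}).

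For the gluing across the slits I would differentiate those variational equations. On $S_{\mu_1}$ this is already supplied by (\ref{eq:reg}): writing $\phi(x)=\frac{3}{4}\tau^{\frac{4}{3}}|x|^{\frac{4}{3}}$ one has $\phi'(x)=\tau^{\frac{4}{3}}|x|^{\frac{1}{3}}\sgn(x)$, which is precisely the cube-root term appearing in $\xi_2$ on the real axis. Differentiating (\ref{eq:reg}) along $S_{\mu_1}$, using $(U^{\mu})'(x)=-\tfrac12(F_{+}+F_{-})$ on the support together with the Plemelj relation $F_{1,+}-F_{1,-}=-2\pi i\rho_1$, gives $F_{1,+}+F_{1,-}=F_2+V'-\phi'$; substituting this into $\xi_{1,+}-\xi_{2,-}=-(F_{1,+}+F_{1,-})+V'+F_2-\phi'$ yields $\xi_{1,+}=\xi_{2,-}$ and, symmetrically, $\xi_{1,-}=\xi_{2,+}$, which is exactly the gluing of $\mathcal{L}_1$ to $\mathcal{L}_2$. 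For the slits $S_{\sigma-\mu_2}$ and $S_{\mu_3}$ I would first derive the corresponding Euler--Lagrange identities by varying $E_V$ in $\nu_2$ on the set where the constraint $\nu_2\le\sigma$ of Definition \ref{de:equ} is inactive (namely $S_{\sigma-\mu_2}$) and in $\nu_3$, obtaining $2U^{\mu_2}-U^{\mu_1}-U^{\mu_3}=\text{const}$ on $S_{\sigma-\mu_2}$ and $2U^{\mu_3}-U^{\mu_2}=\text{const}$ on $S_{\mu_3}$, and then repeat the same boundary-value computation to obtain $\xi_{2,\pm}=\xi_{3,\mp}$ and $\xi_{3,\pm}=\xi_{4,\mp}$.

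The step I expect to be the most delicate is the analyticity of each $\xi_j$ across the portions of the axes that are \emph{not} slits, because there the cube-root terms jump; the reason the formula for $\xi$ switches form across the axes is exactly to place those jumps where they can be absorbed by the Cauchy transforms through the constraint measure $\sigma$. On the segment $(-ic,ic)$, which by Theorem \ref{thm:equ} is not part of $S_{\sigma-\mu_2}$, one has $\mu_2=\sigma$, so the Plemelj jump of $-F_2$ there cancels the jump of the cube-root term, whose discontinuity across $i\mathbb{R}$ is, by a direct computation, $\sqrt{3}\,\tau^{\frac{4}{3}}|z|^{\frac{1}{3}}$ and matches (up to the orientation and arc-length factors $dz=i\,dy$) the density $\frac{\sqrt{3}}{2\pi}\tau^{\frac{4}{3}}|z|^{\frac{1}{3}}$ of $\sigma$ in (\ref{eq:sigma}); hence $\xi_2$ and $\xi_3$ extend analytically across $(-ic,ic)$. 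On $S_{\sigma-\mu_2}$ the uncancelled part of the jump is proportional to $\sigma-\mu_2$ and is exactly the slit jump used above. The analogous cancellation of the cube-root jump of $\xi_4$ against $F_3$ along $\mathbb{R}$ handles the last sheet. Carrying out this branch bookkeeping carefully --- tracking which determination of $z^{\frac{1}{3}}$, $(-z)^{\frac{1}{3}}$ and which root of unity $e^{\frac{2\pi i}{3}}$, $e^{\frac{4\pi i}{3}}$ is used in each region, and checking the origin, where $|x|^{\frac{4}{3}}$ and the cube roots are non-smooth, so that no singularity worse than a pole survives --- is the main technical obstacle.

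Finally, for the pole structure I would work locally at the points over $z=\infty$. On $\mathcal{L}_1$ the support $S_{\mu_1}$ is bounded, so this point is unramified; since $F_1(z)=O(z^{-1})$ we have $\xi_1(z)=V'(z)+O(z^{-1})$, a pole of order $\deg V-1$. Over the remaining sheets the slits $S_{\sigma-\mu_2}$ and $S_{\mu_3}$ are unbounded, and computing the monodromy around a large circle --- the product of the transpositions $(2\,3)$ and $(3\,4)$ picked up when crossing $i\mathbb{R}$ and $\mathbb{R}$ --- gives the three-cycle $(2\,3\,4)$, so $\mathcal{L}_2,\mathcal{L}_3,\mathcal{L}_4$ meet in a single ramification point of order three over infinity. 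In the local coordinate $t=z^{-\frac{1}{3}}$ there one has $F_2,F_3=O(t^{3})$ while the cube-root term equals $\tau^{\frac{4}{3}}z^{\frac{1}{3}}=\tau^{\frac{4}{3}}t^{-1}$, so $\xi$ has a simple pole. Combined with the analyticity established everywhere else, this shows that $\xi$ is meromorphic on $\mathcal{L}$ with the asserted divisor of poles.
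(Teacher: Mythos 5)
This lemma is imported by the paper from \cite{DK} (it is Lemma 5.1 there) and the paper itself supplies no proof, so there is nothing internal to compare against; judged on its own, your outline is correct and is essentially the argument of the original source. You identify exactly the right ingredients: the differentiated Euler--Lagrange identities $F_{1,+}+F_{1,-}=V'+F_2-\phi'$ on $S_{\mu_1}$, $F_{2,+}+F_{2,-}=F_1+F_3$ on $S_{\sigma-\mu_2}$, and $F_{3,+}+F_{3,-}=F_2$ on $\mathbb{R}$ for the gluings; the cancellation on $(-ic,ic)$ of the cube-root jump $\sqrt{3}\,\tau^{4/3}|z|^{1/3}$ against the Plemelj jump of $F_2$ where $\mu_2=\sigma$; and the $(2\,3\,4)$ monodromy making $\infty^2$ a single third-order ramification point with local coordinate $z^{-1/3}$, which gives the simple pole there and the pole of order $\deg V-1$ at $\infty^1$ (the symmetry $\mu_j(A)=\mu_j(-A)$ of Theorem \ref{thm:equ} is what makes the identities on the imaginary axis come out as statements about the complex boundary values, a detail worth making explicit when writing this up in full).
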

The Riemann surface $\mathcal{\Lie}$ is of genus $g$. Let us define
a set of canonical basis of cycle as in Figure \ref{fig:cycle}.
\begin{figure}
\centering \psfrag{lambda1}[][][1][0.0]{\tiny$\lambda_1$}
\psfrag{l2}[][][1][0.0]{\tiny$\lambda_2$}
\psfrag{l3}[][][1][0.0]{\tiny$\lambda_3$}
\psfrag{l4}[][][1][0.0]{\tiny$\lambda_4$}
\psfrag{l2g}[][][1][0.0]{\tiny$\lambda_{2g+2}$}
\psfrag{a1}[][][1][0.0]{\tiny$b_g$}
\psfrag{a2}[][][1][0.0]{\tiny$b_{g-1}$}
\psfrag{a3}[][][1][0.0]{\tiny$b_1$}
\psfrag{b1}[][][1][0.0]{\tiny$a_1$}
\psfrag{b2}[][][1][0.0]{\tiny$a_{g-1}$}
\psfrag{b3}[][][1][0.0]{\tiny$a_g$}
\psfrag{it}[][][1][0.0]{\tiny$ic$}
\psfrag{-it}[][][1][0.0]{\tiny$-ic$}
\includegraphics[scale=0.75]{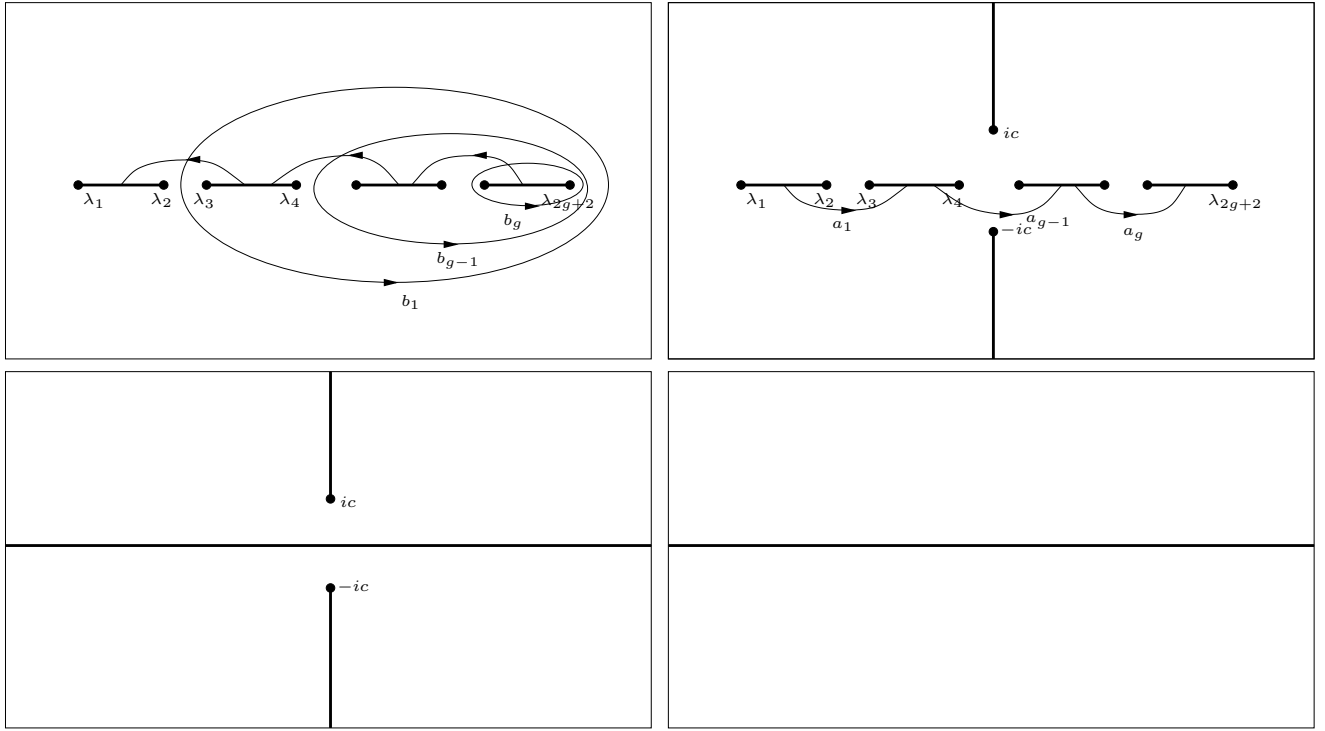}
\caption{The $a$ and $b$ cycle of Riemann surface
$\mathcal{\Lie}$.}\label{fig:cycle}
\end{figure}
The figure should be understood as follows. The top left rectangle
denotes the first sheet $\mathcal{\Lie}_1$, the top right rectangle
denotes $\mathcal{\Lie}_2$, the lower left one denotes
$\mathcal{\Lie}_3$ and the lower right one denotes
$\mathcal{\Lie}_4$. A $b$-cycle is a loop in $\mathcal{\Lie}_1$
around the branch cuts that is symmetric with respect to the real
axis, while an $a$-cycle $a_j$ consist of a path in the upper half
plane in $\mathcal{\Lie}_1$ that goes from $\Xi_{j+1}$ to $\Xi_{j}$
($\Xi_j$ is defined in (\ref{eq:cutgaps})), together with a path in
the lower half plane in $\mathcal{\Lie}_2$ that goes from $\Xi_{j}$
to $\Xi_{j+1}$. The loop formed by these 2 paths is an $a$-cycle. We
will also choose these 2 paths such that their projection on the
complex $z$-plane are mapped onto each other under complex
conjugation.

We can now define the basis of holomorphic differential that is dual
to this basis of cycle.

Let $\D\omega_j$ be holomorphic differential 1-forms on
$\mathcal{\Lie}$ such that
\begin{equation}
\oint_{a_k}\D\omega_j=\delta_{jk},\quad 1\leq j,k\leq g.
\end{equation}
The 1-forms $\D\omega_j$ are known as the holomorphic 1-forms that
are dual to the basis of cycles $(a,b)$.

Let the $b$-period of these 1-forms be $\Pi_{ij}$
\begin{equation}\label{eq:permatrix}
\oint_{b_i}\D\omega_j=\Pi_{ij},
\end{equation}
then the $g\times g$ matrix $\Pi$ with entries $\Pi_{ij}$ is
symmetric and $\mathrm{Im}({\Pi})>0$.
\subsection{Theta function and its properties}

The theta function $\theta:\mathbb{C}^g\longrightarrow \mathbb{C}$
associated to the Riemann surface $\mathcal{\Lie}$ and this choice
of basis is defined by
\begin{equation}
  \label{eq:thetadef}
  \theta (\vec{s}) := \sum_{\vec{n}\in
    \mathbb{Z}^g} {\rm e}^{i\pi \vec{n}\cdot \Pi
    \vec{n} + 2i\pi \vec{s}\cdot
    \vec{n}}.
\end{equation}
The theta function has the following quasi-periodic property, which
will be important to the construction of the parametrix.
\begin{proposition}
\label{pro:per} The theta function is quasi-periodic with the
following properties:
\begin{equation}
  \label{eq:period}
\begin{split}
  \theta (\vec{s}+\vec{M})=&
\theta(\vec{s}), \\
\theta (\vec{s}+\Pi\vec{M})=& \exp \left[ 2\pi
i\left(-\left<\vec{M},\vec{s}\right>-\left<\vec{M},{\Pi\over{2}}\vec{M}\right>\right)\right]
\theta(\vec{s}),
\end{split}
\end{equation}
\end{proposition}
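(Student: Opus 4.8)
The plan is to prove both lines of (\ref{eq:period}) directly from the series definition (\ref{eq:thetadef}), using throughout that $\vec{M}\in\mathbb{Z}^g$, that $\Pi$ is symmetric, and that $\mathrm{Im}(\Pi)>0$ (stated after (\ref{eq:permatrix})). The last fact is what makes the series in (\ref{eq:thetadef}) absolutely convergent, so that every reindexing of the sum below is legitimate; I would note this at the outset and otherwise treat convergence as standard.

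First I would dispose of the periodicity under $\vec{M}$. Substituting $\vec{s}\mapsto\vec{s}+\vec{M}$ in (\ref{eq:thetadef}) multiplies the general summand by $e^{2i\pi\vec{M}\cdot\vec{n}}$. Since $\vec{M}$ and $\vec{n}$ are both integer vectors, $\vec{M}\cdot\vec{n}\in\mathbb{Z}$ and this factor equals $1$, so the sum is unchanged. This gives the first line of (\ref{eq:period}) immediately.

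The second identity is the substantive one, and the standard device is to complete the square and shift the summation index. Substituting $\vec{s}\mapsto\vec{s}+\Pi\vec{M}$ yields a summand with exponent $i\pi\,\vec{n}\cdot\Pi\vec{n}+2i\pi\,\vec{s}\cdot\vec{n}+2i\pi\,(\Pi\vec{M})\cdot\vec{n}$. I would then reindex by $\vec{n}=\vec{n}'-\vec{M}$, which is a bijection of $\mathbb{Z}^g$, and expand $i\pi(\vec{n}'-\vec{M})\cdot\Pi(\vec{n}'-\vec{M})$ together with the two linear pieces. The terms linear in $\vec{n}'$ coming from the quadratic expansion, namely $-2i\pi\,\vec{M}\cdot\Pi\vec{n}'$, cancel exactly against the term $+2i\pi\,(\Pi\vec{M})\cdot\vec{n}'$ from the shift; here the symmetry $\Pi=\Pi^{T}$ is precisely what lets one rewrite $(\Pi\vec{M})\cdot\vec{n}'=\vec{M}\cdot\Pi\vec{n}'$ so that the cancellation is exact. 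What survives in the summand is then exactly the $\theta(\vec{s})$-exponent $i\pi\,\vec{n}'\cdot\Pi\vec{n}'+2i\pi\,\vec{s}\cdot\vec{n}'$, multiplied by an $\vec{n}'$-independent prefactor.

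Collecting the remaining constant terms gives the prefactor $\exp\!\left[-2i\pi\,\vec{s}\cdot\vec{M}-i\pi\,\vec{M}\cdot\Pi\vec{M}\right]$, where the $i\pi\,\vec{M}\cdot\Pi\vec{M}$ from the quadratic and the $-2i\pi\,\vec{M}\cdot\Pi\vec{M}$ from the shift combine. Recognizing that $-2\pi i\langle\vec{M},\tfrac{\Pi}{2}\vec{M}\rangle=-i\pi\,\vec{M}\cdot\Pi\vec{M}$ and $-2\pi i\langle\vec{M},\vec{s}\rangle=-2i\pi\,\vec{s}\cdot\vec{M}$, this is exactly the factor appearing in the second line of (\ref{eq:period}); summing the surviving summands over $\vec{n}'$ reconstitutes $\theta(\vec{s})$, which completes the proof. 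I do not expect a genuine obstacle: the only real care is the bookkeeping of the completed square and the repeated, correct use of the symmetry of $\Pi$ to cancel the index-dependent cross terms, with absolute convergence from $\mathrm{Im}(\Pi)>0$ underwriting the reindexing.
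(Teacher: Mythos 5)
Your proof is correct: the integer-shift argument for the first identity and the complete-the-square reindexing $\vec{n}=\vec{n}'-\vec{M}$ for the second, with the symmetry of $\Pi$ cancelling the cross terms and $\mathrm{Im}(\Pi)>0$ justifying the rearrangement, all check out and yield exactly the prefactor in (\ref{eq:period}). The paper itself states Proposition \ref{pro:per} without proof, treating it as a classical property of theta functions, and your computation is precisely the standard derivation that underlies it.
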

where $ \left \langle \cdot, \cdot \right \rangle$ denotes the usual
inner product in $\mathbb{C}^g$.

We will now define the Abel map on $\mathcal{\Lie}$. The Abel map
$u: \mathcal{\Lie}\rightarrow\mathbb{C}^g$ is defined by
\begin{equation}\label{eq:abel}
  u(x)=\left(u_1(x),\ldots,u_g(x)\right)^T=\left(\int_{x_0}^x\D\omega_1,\ldots,\int_{x_0}^x\D\omega_g\right)^T,
\end{equation}
where $x_0$ is a point on $\mathcal{\Lie}$. We will choose $x_0$ so
that $x_0$ is the point on $\mathcal{\Lie}_1$ that projects to
$\lambda_{2g+2}$ in $\overline{\mathbb{C}}$. We will denote this
point by $\lambda_{2g+2}^1$.

The composition of the theta function with the Abel map is then a
multi-valued function from $\mathcal{\Lie}$ to $\mathbb{C}$. It is
either identically zero or it has $g$ zeros on $\mathcal{\Lie}$. The
following lemma tells us where the zeros are.
\begin{lemma}
\label{le:zero} Let $D=\sum_{i=1}^gd_i$ be a non special divisor of
degree $g$ on $\mathcal{\Lie}$, then the multi-valued function
\[
  \theta (u(x)-u(D)-\vec{K})
\]
has precisely $g$ zeros located at the points $d_i$, $i=1,\ldots,g$.
The vector $\vec{K}=(K_1,\ldots,K_g)^T$ is the Riemann constant
\[
  K_j={{\Pi_{jj}}\over 2}-\sum_{l=
  1}^g\int_{a_l}(\D\omega_l(x)\int_{\lambda_{2g+2}^1}^x\D\omega_j).
\]
\end{lemma}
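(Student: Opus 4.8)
The plan is to recognize Lemma~\ref{le:zero} as the classical theorem of Riemann on the zeros of the theta function, and to prove it by a zero-counting contour argument combined with an Abel-map computation of the zero divisor, invoking the non-speciality of $D$ at the two decisive points. Throughout I write $f(x)=\theta\bigl(u(x)-u(D)-\vec K\bigr)$, regarded first as a multi-valued holomorphic function on $\mathcal{\Lie}$. The first step is to make $f$ single-valued: dissecting $\mathcal{\Lie}$ along the canonical basis $(a_1,b_1,\ldots,a_g,b_g)$ of Figure~\ref{fig:cycle} yields a simply connected $4g$-gon $\hat{\mathcal{\Lie}}$ with boundary $a_1b_1a_1^{-1}b_1^{-1}\cdots a_gb_ga_g^{-1}b_g^{-1}$, on whose interior the Abel map \eqref{eq:abel} is single-valued, so that $f$ becomes a genuine holomorphic function there.

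Assuming for the moment that $f\not\equiv 0$, the number of its zeros equals $\frac{1}{2\pi i}\oint_{\partial\hat{\mathcal{\Lie}}}d\log f$, which I would evaluate by pairing each side of $\hat{\mathcal{\Lie}}$ with its partner. The two copies of $a_j$ are identified through a $b_j$-translation, so the $\theta$-argument differs there by the $b$-period $\Pi e_j$, whereas the two copies of $b_j$ are identified through an $a_j$-translation, shifting it by the unit lattice vector $e_j$. The first (genuinely periodic) relation of Proposition~\ref{pro:per} makes the $b_j$-pairs contribute only integer winding, while the second relation, whose exponent is affine in $\vec s$, makes each $a_j$-pair contribute $\oint_{a_j}d\omega_j=1$; summing over $j$ gives exactly $g$ zeros $p_1,\ldots,p_g$.

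To locate them I would run the companion computation $\frac{1}{2\pi i}\oint_{\partial\hat{\mathcal{\Lie}}}u_k(x)\,d\log f(x)$. By the argument principle its value is $\sum_{i=1}^g u_k(p_i)$, while evaluating the boundary integral with the same quasi-periodic bookkeeping (now the periods \eqref{eq:permatrix} enter linearly) returns $u_k(D)$ modulo the period lattice. Hence $\sum_i u(p_i)\equiv u(D)$, so by Abel's theorem the effective divisor $\sum_i p_i$ is linearly equivalent to $D$. To upgrade this equivalence to an equality I use non-speciality: since $\deg D=g$, Riemann--Roch gives $\dim L(D)=1+i(D)$ with $i(D)$ the index of speciality, so $D$ non-special means $\dim L(D)=1$; the complete linear system $|D|$ then contains $D$ as its only effective divisor, forcing $\sum_i p_i=D$, that is, the zeros are precisely the $d_i$ with the correct multiplicities.

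It remains to justify $f\not\equiv 0$, and here non-speciality is again exactly what is needed: by Riemann's vanishing theorem the identity $\theta\bigl(u(x)-u(D)-\vec K\bigr)\equiv 0$ in $x$ would force $D$ to be special, contrary to hypothesis. As an independent check that the $d_i$ are indeed among the zeros, one can use the evenness $\theta(-\vec s)=\theta(\vec s)$ visible from \eqref{eq:thetadef} to write $\theta\bigl(u(d_i)-u(D)-\vec K\bigr)=\theta\bigl(\sum_{j\neq i}u(d_j)+\vec K\bigr)$, which vanishes because $\sum_{j\neq i}d_j$ is effective of degree $g-1$ and $\vec K+u$ of any effective divisor of degree $g-1$ lies on the theta divisor. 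I expect the main obstacle not to be the contour integrals, which are the routine core of Riemann's theorem, but the two appeals to non-speciality: ensuring $f\not\equiv 0$ and, above all, passing from the linear equivalence $\sum_i p_i\sim D$ to the identity $\sum_i p_i=D$ with multiplicities, which is precisely where the degree-$g$ non-special hypothesis is indispensable. The cited results of \cite{FK} and \cite{Fay} supply exactly these vanishing and uniqueness statements.
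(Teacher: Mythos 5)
Your proposal is correct in substance, but there is no paper proof to compare it with: the paper states Lemma \ref{le:zero} without proof, treating it as the classical Riemann vanishing theorem imported from \cite{FK}, and the only related ingredient it discusses is the equivalence (stated just after the lemma, and in Theorem \ref{thm:RR}) between non-speciality of $D$ and $\theta(u(x)-u(D)-\vec{K})\not\equiv 0$. What you give is the standard textbook proof: dissect $\mathcal{\Lie}$ along the canonical homology basis into a $4g$-gon, count zeros via $\frac{1}{2\pi i}\oint d\log f$ using the quasi-periodicity of Proposition \ref{pro:per}, locate them via $\frac{1}{2\pi i}\oint u_k\,d\log f$ and Abel's theorem, and upgrade the resulting linear equivalence $\sum_i p_i\sim D$ to equality using $\dim L(D)=1+i(D)=1$ from Riemann--Roch. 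This is sound, and it is indeed the argument behind the citation. Three points deserve tightening. First, on the identified $b_j$-sides the function $f$ takes \emph{equal} values (the shift is by a lattice vector $\vec{e}_j$, under which $\theta$ is genuinely periodic), so those boundary contributions cancel exactly; saying they ``contribute only integer winding'' is too weak, since a nonzero integer there would ruin the count of $g$. Second, the step you dismiss as ``quasi-periodic bookkeeping'' in the second contour integral is precisely where the explicit expression $K_j=\frac{\Pi_{jj}}{2}-\sum_l\oint_{a_l}\bigl(\D\omega_l(x)\int_{\lambda_{2g+2}^1}^x\D\omega_j\bigr)$ is produced: the boundary integral evaluates to $u_k(D)+K_k$ minus exactly that quantity, so this computation cannot be skipped if one wants the lemma with the paper's normalization of $\vec{K}$ (which is tied to the base point $\lambda_{2g+2}^1$ of the Abel map \eqref{eq:abel}). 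Third, your two remaining appeals --- that $f\equiv 0$ would force $D$ special, and that $\vec{K}$ plus the Abel image of any effective divisor of degree $g-1$ lies on the theta divisor --- are the deeper half of Riemann's theorem; they are legitimate citations (and the first is exactly the equivalence the paper itself records after the lemma, so no circularity arises in this context), but in a self-contained write-up they would require their own proofs from \cite{FK}, not just the contour argument you outline.
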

Recall that a divisor of degree $m$ is a formal sum of $m$ points
(counting multiplicity) on the Riemann surface and that two divisors
$D_1$ and $D_2$ are equivalent if and only if there exists a
meromorphic function $f(x)$ on $\mathcal{\Lie}$ with poles at the
points of $D_1$ and zeros at the points of $D_2$. A divisor
$\sum_{i=1}^gd_i$ is special if there exists a non-constant
meromorphic function on $\mathcal{\Lie}$ with $g$ poles at the
points $d_1,\ldots,d_g$. The condition of $D$ being non special is
equivalent to the condition that $\theta (u(x)-u(D)-\vec{K})$ does
not vanish identically.
\begin{theorem}\label{thm:RR} Let $d_1,\ldots,d_g$ be $g$ points on a Riemann
surface and let the multiplicity of $d_j$ within these $g$-tuple of points be $k_j$.
Then the function $\theta(u(z)-\sum_{j=1}^gu(d_g)-K)$ is
identically zero if and only if there exist a function $f(z)$
that has poles of order $k_j$ at $d_j$ for $j=1,\ldots,g$ and holomorphic elsewhere.
\end{theorem}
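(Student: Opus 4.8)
The plan is to read Theorem \ref{thm:RR} as the classical dictionary---furnished by the Riemann--Roch theorem together with Riemann's vanishing theorem---between specialty of the effective degree-$g$ divisor $D=\sum_{j=1}^g d_j=\sum_j k_j\,d_j$ and the identical vanishing of $z\mapsto\theta(u(z)-u(D)-\vec{K})$. Write $\ell(D)$ for the dimension of the space of meromorphic functions $f$ on $\mathcal{\Lie}$ with $(f)+D\ge 0$, and $i(D)$ for the number of independent holomorphic differentials vanishing on $D$. Since $\deg D=g$, Riemann--Roch gives $\ell(D)=1+i(D)$; hence a non-constant $f$ with poles of order $k_j$ at $d_j$ and holomorphic elsewhere exists precisely when $\ell(D)\ge 2$, equivalently $i(D)\ge 1$, which is exactly the condition that $D$ be special. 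It therefore suffices to prove that $\theta(u(z)-u(D)-\vec{K})\equiv 0$ if and only if $\ell(D)\ge 2$.

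For the direction ``identical vanishing $\Rightarrow$ existence of $f$'' I would argue by contraposition. If $\ell(D)=1$, i.e. $D$ is non-special, then Lemma \ref{le:zero} applies verbatim and exhibits exactly $g$ zeros of $z\mapsto\theta(u(z)-u(D)-\vec{K})$ at $d_1,\dots,d_g$; in particular the function is not identically zero. Thus $\theta(u(z)-u(D)-\vec{K})\equiv 0$ forces $\ell(D)\ge 2$, and the Riemann--Roch bookkeeping above then produces the required non-constant function.

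For the converse, suppose $\ell(D)\ge 2$. For an arbitrary $p\in\mathcal{\Lie}$, imposing one additional pole condition lowers the dimension by at most one, so $\ell(D-p)\ge\ell(D)-1\ge 1$ and there is a nonzero meromorphic $f_p$ with $(f_p)+D-p\ge 0$. Put $E_p:=(f_p)+D-p$, an effective divisor of degree $g-1$ with $D\sim E_p+p$, so that $u(p)-u(D)\equiv -u(E_p)$ modulo the period lattice; by the quasi-periodicity in Proposition \ref{pro:per} the vanishing of $\theta$ depends only on this residue class, giving
\[
\theta\bigl(u(p)-u(D)-\vec{K}\bigr)=0\quad\Longleftrightarrow\quad\theta\bigl(-u(E_p)-\vec{K}\bigr)=0 .
\]
The input I would invoke here is the half of Riemann's vanishing theorem (\cite{FK}, \cite{Fay}) stating that $\theta(-u(E)-\vec{K})=0$ for every effective divisor $E$ of degree $g-1$; this is consistent with Lemma \ref{le:zero} and can in fact be recovered from it by writing $E=D_0-d_i$ for a generic non-special extension $D_0=E+d_i$ of degree $g$, for which $d_i$ is one of the zeros located by that lemma. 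Hence $\theta(u(p)-u(D)-\vec{K})=0$ for every $p\in\mathcal{\Lie}$, i.e. the function vanishes identically.

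The dimension counting, the drop inequality $\ell(D-p)\ge\ell(D)-1$, and the passage to effective representatives are all routine. I expect the one genuinely delicate point to be the normalization: verifying that the vanishing statement $\theta(-u(E)-\vec{K})=0$ holds with exactly the base point $\lambda_{2g+2}^1$, sign, and shift by $\vec{K}$ fixed in (\ref{eq:abel}) and Lemma \ref{le:zero}. Rather than re-deriving Riemann's theorem from scratch, I would pin this compatibility down by matching against the zero locus already computed in Lemma \ref{le:zero}, which encodes precisely the needed convention.
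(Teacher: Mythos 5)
Your main argument is correct, but note that the paper does not actually prove Theorem \ref{thm:RR}: after stating it, the author adds only the sentence that it ``is a consequence of the Riemann--Roch theorem'', treating it as the classical vanishing theorem of Riemann (as in \cite{FK}, \cite{Fay}). Your proposal therefore supplies a proof the paper omits, and it assembles it from the right ingredients. In particular you correctly recognize that Riemann--Roch alone does not suffice: one needs (i) the bookkeeping $\ell(D)=1+i(D)$ for $\deg D=g$, converting ``non-constant function with poles on $D$'' into ``$D$ special''; (ii) the paper's Lemma \ref{le:zero}, in contrapositive form, for the implication ``identical vanishing $\Rightarrow\ \ell(D)\ge 2$''; and (iii) Riemann's vanishing theorem ($\theta$ vanishes at $-u(E)-\vec{K}$ for every effective divisor $E$ of degree $g-1$), combined with Abel's theorem and the quasi-periodicity of Proposition \ref{pro:per}, for the converse. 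The individual steps you give (the drop $\ell(D-p)\ge \ell(D)-1$, the passage to the effective divisor $E_p=(f_p)+D-p$, the lattice-invariance of the zero set of $\theta$) are all sound, and your normalization check against Lemma \ref{le:zero} is exactly the right way to pin down the sign and base-point conventions.

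Two caveats. First, your side remark that the degree-$(g-1)$ vanishing ``can be recovered'' from Lemma \ref{le:zero} by choosing a non-special extension $D_0=E+d_i$ fails precisely when $i(E)\ge 2$ (equivalently $\ell(E)\ge 2$, e.g.\ $E$ a hyperelliptic pair on a genus-$3$ surface): for such $E$ \emph{every} extension $E+p$ is special, so no choice of $d_i$ exists, and one would instead have to approximate $E$ by generic effective divisors and use continuity of $\theta\circ u$. Since you invoke the vanishing theorem from \cite{FK}, \cite{Fay} as the actual input, this does not damage the proof, but the remark as stated is incorrect. Second, you (following the paper's own loose phrasing) assert that a non-constant $f$ with poles of order \emph{exactly} $k_j$ at each $d_j$ exists ``precisely when'' $\ell(D)\ge 2$. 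This is not literally true: on a genus-$3$ hyperelliptic surface with $P+Q$ a hyperelliptic pair and $R$ generic, the divisor $D=P+Q+R$ has $\ell(D)=2$ but $L(D)=L(P+Q)$, so no function in $L(D)$ has any pole at $R$, even though $\theta(u(z)-u(D)-\vec{K})$ does vanish identically. Theorem \ref{thm:RR} is thus only correct under the reading ``poles of order at most $k_j$'', which is the reading your Riemann--Roch step actually uses and the one the paper intends; it would be worth saying this explicitly rather than inheriting the imprecision.
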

This is a consequence of the Riemann-Roch theorem. In general, for a
given $g+l$ points (counting multiplicity)  on a Riemann surface,
there are $l$ independent meromorphic functions with poles exactly
at these points. This can be thought of as an extension of the
Liouville's theorem.

Let $\phi(z)$ be the anti-holomorphic involution on $\mathcal{\Lie}$
defined by
\begin{equation}\label{eq:phi}
\phi(z):(z,\xi(z))\rightarrow
\left(\overline{z},\xi\left(\overline{z}\right)\right)
\end{equation}
where $\xi(z)$ is the function on $\mathcal{\Lie}$ given by Lemma
\ref{le:RiemSur}.

Then by the definition of the cycles in Figure \ref{fig:cycle}, we
see that under the involution $\phi$, we have
\begin{equation}\label{eq:involcyc}
\phi(b_j)=-b_j,\quad \phi(a_j)\sim a_j,\quad j=1,\ldots, g,
\end{equation}
where the symbol $\sim$ means that $\phi(a_j)$ is homologic to
$a_j$.

In particular, if we consider the holomorphic 1-forms
$\overline{\D\omega_j(\phi(x))}$ on $\mathcal{\Lie}$, we have
\begin{equation*}
\begin{split}
\oint_{a_k}\overline{\D\omega_j(\phi(x))}&=\oint_{\phi(a_k)}\overline{\D\omega_j(x)}=
\oint_{a_k}\overline{\D\omega_j(x)}=\delta_{jk}.
\end{split}
\end{equation*}
Hence by the uniqueness of holomorphic 1-forms that is dual to the
cycles $(a,b)$, we have
$\overline{\D\omega_j(\phi(x))}=\omega_j(x)$. By computing the
$b$-periods of $\overline{\D\omega_j(\phi(x))}$ and making use of
(\ref{eq:involcyc}), we obtain the following.
\begin{lemma}\label{le:imag}
The period matrix $\Pi$ of $\mathcal{\Lie}$ is purely imaginary.
\end{lemma}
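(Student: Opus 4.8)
The plan is to mirror the argument just carried out for the $a$-periods, but now applied to the $b$-cycles, exploiting the fact that the involution $\phi$ reverses their orientation. The two ingredients already in hand are the pointwise identity $\overline{\D\omega_j(\phi(x))}=\D\omega_j(x)$, obtained from the uniqueness of the normalized holomorphic differentials, and the homological relations (\ref{eq:involcyc}), in particular $\phi(b_i)=-b_i$.

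I would compute the $b$-periods of the holomorphic $1$-form $\overline{\D\omega_j(\phi(x))}$ in two different ways. Substituting the identity $\overline{\D\omega_j(\phi(x))}=\D\omega_j(x)$ directly gives
\begin{equation*}
\oint_{b_i}\overline{\D\omega_j(\phi(x))}=\oint_{b_i}\D\omega_j(x)=\Pi_{ij}.
\end{equation*}
Alternatively, pulling the cycle back through the involution by the change of variables $x\mapsto\phi(x)$, and using $\phi(b_i)=-b_i$ together with the fact that the anti-holomorphic $1$-form $\overline{\D\omega_j}$ is closed, so that its integral depends only on the homology class, I would obtain
\begin{equation*}
\oint_{b_i}\overline{\D\omega_j(\phi(x))}=\oint_{\phi(b_i)}\overline{\D\omega_j(x)}=\oint_{-b_i}\overline{\D\omega_j(x)}=-\overline{\oint_{b_i}\D\omega_j(x)}=-\overline{\Pi_{ij}}.
\end{equation*}
Equating the two expressions yields $\Pi_{ij}=-\overline{\Pi_{ij}}$, whence $\mathrm{Re}(\Pi_{ij})=0$ for all $i,j$; that is, $\Pi$ is purely imaginary, as claimed.

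The computation is short, so no genuine difficulty remains once the two preliminary facts are granted; the only point requiring care is the bookkeeping of signs. The orientation reversal $\phi(b_i)=-b_i$ contributes one minus sign, while the complex conjugation built into $\overline{\D\omega_j}$ contributes $\overline{\Pi_{ij}}$, and it is the interplay of the two that produces $\Pi_{ij}=-\overline{\Pi_{ij}}$ rather than $\Pi_{ij}=\overline{\Pi_{ij}}$. This contrasts with the $a$-cycle case, where $\phi(a_k)\sim a_k$ preserves orientation and the conjugation merely reproduces the real normalization $\delta_{jk}$; here the extra sign is exactly what forces the periods onto the imaginary axis.
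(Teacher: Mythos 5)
Your proposal is correct and coincides with the paper's own argument: the paper likewise first establishes $\overline{\D\omega_j(\phi(x))}=\D\omega_j(x)$ from the $a$-period computation and uniqueness of the dual basis, and then deduces the lemma precisely by evaluating the $b$-periods of this form using $\phi(b_i)=-b_i$ from (\ref{eq:involcyc}), which is your two-way computation giving $\Pi_{ij}=-\overline{\Pi_{ij}}$. Your sign bookkeeping matches the paper's intended (but unwritten) final step, so there is nothing to add.
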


In particular, by (\ref{eq:thetadef}), we see that $\theta(0)$ is
real and positive and from Lemma \ref{le:zero}, we see that
$\theta(u(x))$ has $g$ zeros on $\mathcal{\Lie}$. Let us denote
these zeros by $\iota_1,\ldots,\iota_g$. Then by the following
result in \cite{FK}, we can simplify the expression of the Riemann
constant $\vec{K}$.
\begin{proposition}\label{pro:constant}
(See p.308-309 of $\cite{FK}$) Suppose $\theta(u(x))$ is not
identically zero. Then it has $g$ zeros in $\mathcal{\Lie}$. Let
$\iota_1,\ldots,\iota_g$ be its zeros, then the Riemann constant is
given by
\begin{equation}
\vec{K}=-\sum_{j=1}^gu(\iota_j).
\end{equation}
\end{proposition}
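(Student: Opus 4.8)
The statement is the classical corollary of Riemann's theorem on the zeros of the theta function that expresses the vector of Riemann constants through the zero divisor of $\theta\circ u$; it is proved by the residue computation of \cite{FK}, which I now outline. Since $\theta(u(x))\not\equiv 0$ on $\mathcal{\Lie}$, the differential $\D\log\theta(u(x))$ is meromorphic on the $4g$-gon $\hat{\mathcal{\Lie}}$ obtained by cutting $\mathcal{\Lie}$ along the canonical cycles $a_1,b_1,\ldots,a_g,b_g$ of Figure \ref{fig:cycle}, with a simple pole of residue $1$ at each zero $\iota_j$. The plan is to evaluate two boundary integrals over $\partial\hat{\mathcal{\Lie}}$: the first counts the zeros, and the second locates their images under the Abel map.

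First I would count the zeros. By the argument principle, the number of zeros equals $\frac{1}{2\pi i}\oint_{\partial\hat{\mathcal{\Lie}}}\D\log\theta(u(x))$. The boundary $\partial\hat{\mathcal{\Lie}}$ consists of paired sides, each cycle appearing twice with opposite orientations; matching the two copies using the quasi-periodicity of Proposition \ref{pro:per} and the normalization $\oint_{a_k}\D\omega_j=\delta_{jk}$, $\oint_{b_k}\D\omega_j=\Pi_{kj}$, the contributions collapse to $g$. Hence $\theta(u(x))$ has exactly $g$ zeros $\iota_1,\ldots,\iota_g$, as asserted. Next I would locate them. Applying the residue theorem to the vector-valued integrand $u(x)\D\log\theta(u(x))$ gives
\begin{equation*}
\sum_{j=1}^g u(\iota_j)=\frac{1}{2\pi i}\oint_{\partial\hat{\mathcal{\Lie}}}u(x)\,\D\log\theta(u(x)).
\end{equation*}
On each pair of $a$-sides the value of $u$ differs by a $b$-period while $\theta(u(x))$ is single-valued by the first line of (\ref{eq:period}); on each pair of $b$-sides $u$ differs by an $a$-period while $\log\theta(u(x))$ jumps by the affine exponent in the second line of (\ref{eq:period}). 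Collecting these contributions modulo the period lattice, the quadratic terms reproduce $\frac{\Pi_{jj}}{2}$ and the remaining $a$-cycle integrals reproduce $\sum_{l=1}^g\oint_{a_l}\left(\D\omega_l(x)\int_{\lambda_{2g+2}^1}^x\D\omega_j\right)$, so that $\sum_{j=1}^g u(\iota_j)\equiv-\vec{K}$ with $\vec{K}$ the combination appearing in Lemma \ref{le:zero}. Rearranging yields $\vec{K}=-\sum_{j=1}^g u(\iota_j)$.

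The main obstacle is the bookkeeping in this last step: both $u(x)$ and $\log\theta(u(x))$ jump genuinely across the cuts, so the two copies of each side do not simply cancel but combine into exactly the lattice and $a$-period terms that constitute $\vec{K}$; tracking the signs and the quadratic self-intersection terms that produce the $\frac{\Pi_{jj}}{2}$ is the delicate part, and it is precisely here that the choice of base point $x_0=\lambda_{2g+2}^1$ in the Abel map enters. This is the computation carried out on p.308--309 of \cite{FK}, which we invoke. I note that the argument does not require $\sum_j\iota_j$ to be non-special: the case $\vec{e}=0$ is computed directly by the contour integral, independently of the reformulation in Lemma \ref{le:zero}.
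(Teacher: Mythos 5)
The paper itself offers no proof of Proposition \ref{pro:constant}: it is quoted directly from p.308--309 of \cite{FK}. Your decision to reconstruct the classical contour-integral proof of Riemann's vanishing theorem therefore attempts more than the paper does, and your overall strategy --- the argument principle for the count of zeros, the residue theorem applied to $u(x)\,\D\log\theta(u(x))$ on the canonical $4g$-gon, and the quasi-periodicity of Proposition \ref{pro:per} to evaluate the boundary terms --- is exactly the computation carried out in \cite{FK}.

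However, there is a concrete error in your bookkeeping, and it is not cosmetic. You correctly state that identified points on the two copies of $a_j$ differ in the Abel map by the $b$-period $\Pi\vec{e}^j$, and that identified points on the two copies of $b_j$ differ by the integer vector $\vec{e}^j$. But you then attach the wrong line of (\ref{eq:period}) to each pairing: a shift by $\Pi\vec{e}^j$ invokes the \emph{second} line, so it is across the paired $a$-sides that $\log\theta(u(x))$ jumps by the affine exponent $-2\pi i\left(u_j(x)+\frac{\Pi_{jj}}{2}\right)$, whereas a shift by the integer vector $\vec{e}^j$ invokes the \emph{first} line, so $\theta(u(x))$ is single-valued across the paired $b$-sides and those contributions cancel. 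Your text asserts the opposite assignment. This matters: with your assignment the zero count would come out as $\sum_j\frac{1}{2\pi i}\oint_{b_j}2\pi i\,\D\omega_j=\sum_j\Pi_{jj}$, which by Lemma \ref{le:imag} is purely imaginary and in particular not $g$; and the $\frac{\Pi_{jj}}{2}$ terms in $\vec{K}$, which arise precisely from the affine jumps on the $a$-sides, would never appear. The argument escapes outright failure only because you defer the delicate computation to \cite{FK}; as a self-contained sketch it breaks at exactly the step you identify as ``the delicate part.'' A minor further point: $\D\log\theta(u(x))$ has residue equal to the order of vanishing at each zero, so the $\iota_j$ must be counted with multiplicity for the residue computation to read as you wrote it.
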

\begin{remark}\label{re:iota}
 Let $\tilde{\Xi}^l_{j}$ be the intervals in
$\mathcal{\Lie}_l$ that projects onto the gaps $\tilde{\Xi}_j$ in
(\ref{eq:cutgaps}), then as we shall see in Corollary
\ref{pro:fay2}, there is exactly one point $\iota_j$ that belongs to
$\tilde{\Xi}^1_j\cup\tilde{\Xi}^2_j$ for $j=1,\ldots,g$. We shall
label the $\iota_j$ such that
$\iota_j\in\tilde{\Xi}^1_j\cup\tilde{\Xi}^2_j$.
\end{remark}
We would like to express the function $\theta(u(x))$ as a
meromorphic function on $\mathbb{C}$ with jump discontinuities. To
do so, we need to define the contour of integration in the Abel map
(\ref{eq:abel}) such that the integral can be defined without
ambiguity. We will define the contour of integration as in Figure
\ref{fig:contour}.
\begin{figure}
\centering \psfrag{lambda1}[][][1][0.0]{\tiny$\lambda_1$}
\psfrag{l2}[][][1][0.0]{\tiny$\lambda_2$}
\psfrag{l3}[][][1][0.0]{\tiny$\lambda_3$}
\psfrag{l4}[][][1][0.0]{\tiny$\lambda_4$}
\psfrag{l2g}[][][1][0.0]{\tiny$\lambda_{2g+2}$}
\psfrag{r1}[][][1][0.0]{\tiny$\Sigma_1$}
\psfrag{r2}[][][1][0.0]{\tiny$\Sigma_2$}
\psfrag{r3}[][][1][0.0]{\tiny$\Sigma_3$}
\psfrag{r4}[][][1][0.0]{\tiny$\Sigma_4$}
\psfrag{it}[][][1][0.0]{\tiny$ic$}
\psfrag{-it}[][][1][0.0]{\tiny$-ic$}
\psfrag{ia}[][][1][0.0]{\tiny$i\zeta$}
\psfrag{-ia}[][][1][0.0]{\tiny$-i\zeta$}
\includegraphics[scale=0.75]{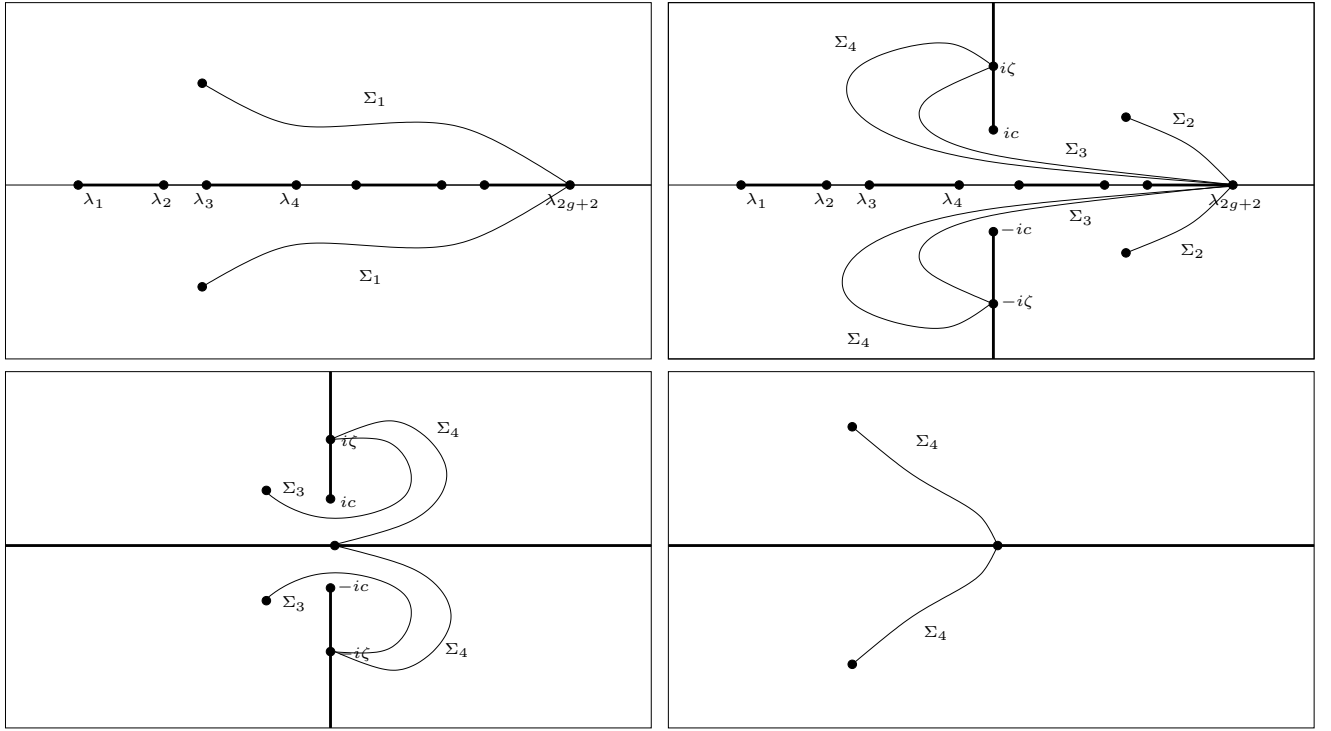}
\caption{The contour of integration for the Abel map
$u$.}\label{fig:contour}
\end{figure}

For a point $z$ in $\mathcal{\Lie}_1$ ($\mathcal{\Lie}_2$), the
contour of integration $\Sigma_1$ goes from $\lambda_{2g+2}$ to $z$
in $\mathcal{\Lie}_1$ ($\mathcal{\Lie}_2$) without intersecting
$(-\infty,\lambda_{2g+2})$ and the branch cuts on the imaginary
axis. For a point $z$ in the upper (lower) half plane in
$\mathcal{\Lie}_3$, the contour of integration $\Sigma_3$ consists
of 2 parts. The first part lies in $\mathcal{\Lie}_2$, goes from
$\lambda_{2g+2}$ to a point $i\zeta$ ($-i\zeta$) on the branch cut
in the imaginary axis without intersecting
$(-\infty,\lambda_{2g+2})$ in $\mathcal{\Lie}_2$ and enter the
branch cut from the left hand side of this point. The second part
lies in the upper (lower) half plane in $\mathcal{\Lie}_3$, goes
from the right hand side of $i\zeta$ ($-i\zeta$) to the point $z$.
For a point $z$ in the upper (lower) half plane in
$\mathcal{\Lie}_4$, the contour of integration consists of 3 parts.
The first part lies in $\mathcal{\Lie}_2$, goes from
$\lambda_{2g+2}$ to a point $-i\zeta$ ($i\zeta$) on the branch cut
in the imaginary axis without intersecting
$(-\infty,\lambda_{2g+2})$ in $\mathcal{\Lie}_2$ and enter the
branch cut from the left hand side of this point. The second part
lies in the lower (upper) half plane in $\mathcal{\Lie}_3$, goes
from the right hand side of $-i\zeta$ ($i\zeta$) to the origin. The
last part lies in the upper (lower) half plane in
$\mathcal{\Lie}_4$, goes from the origin to the point $z$. The
choice of the point $\pm i\zeta$ in the construction is immaterial
as long as it lies on the branch cut on the imaginary axis.

Let $z^j$ be the point on $\mathcal{\Lie}_j$ that projects to $z$ in
$\overline{\mathbb{C}}$ and $\vec{A}$ be a $g$-dimensional constant
vector. We can now define four functions $\theta^j(u(z)+\vec{A})$ on
the complex $z$-plane by
\begin{equation}
\theta^j(u(z)+\vec{A})=\theta(u(z^j)+\vec{A}).
\end{equation}
These functions will have jump discontinuities in the complex
$z$-plane. By using the periodicity of the theta function
(\ref{eq:period}), we can compute their jump discontinuities.
\begin{proposition}\label{pro:thetajump}
Let $\vec{A}=(A_1,\ldots,A_g)^T$ be a $g$ dimensional vector. The
functions $\theta^j(u(z)+\vec{A})$ are analytic in
$\mathbb{C}\setminus(\mathbb{R}\cup S_{\sigma-\mu_2})$. On
$\mathbb{R}\cup S_{\sigma-\mu_2}$, they satisfy the following
conditions
\begin{equation}
\begin{split}
\theta_{\pm}^1(u(z)+\vec{A})&=\theta_{\mp}^2(u(z)+\vec{A}),\quad
z\in\Xi_j,\quad j=1,\ldots, g+1,\\
\theta_+^l(u(z)+\vec{A})&=\theta_-^l(u(z)+\vec{A})e^{(-1)^l2\pi
i\left(u_j(z)+A_j+\frac{\Pi_{jj}}{2}\right)},\quad
z\in\tilde{\Xi}_j,\quad j=1,\ldots, g,\quad l=1,2,\\
\theta_+^l(u(z)+\vec{A})&=\theta_-^l(u(z)+\vec{A}),\quad
z\in\left(\tilde{\Xi}_0\cup\tilde{\Xi}_{g+1}\right),\quad l=1,2,\\
\theta^2_{\pm}(u(z)+\vec{A})&=\theta^3_{\mp}(u(z)+\vec{A}),\quad z\in S_{\sigma-\mu_2},\\
\theta^3_{\pm}(u(z)+\vec{A})&=\theta^4_{\mp}(u(z)+\vec{A}),\quad
z\in\mathbb{R},\\
\theta^l_{+}(u(z)+\vec{A})&=\theta^l_-(u(z)+\vec{A}),\quad z\in
S_{\sigma-\mu_2},\quad l=1,4.
\end{split}
\end{equation}
\end{proposition}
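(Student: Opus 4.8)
The plan is to read off every jump of $\theta^j(u(z)+\vec{A})$ from two mechanisms only: transitions between the sheets $\mathcal{\Lie}_l$ at the loci where they are glued, and changes in the homology class of the integration contour of the Abel map (\ref{eq:abel}) when $z$ crosses a gap, the latter being converted into an explicit exponential factor by the quasi-periodicity relation (\ref{eq:period}) of Proposition \ref{pro:per}. First I would note analyticity: on each sheet the base-pointed contour of Figure \ref{fig:contour} can be chosen to vary holomorphically with $z$, and $\theta$ is entire, so $\theta^j(u(z)+\vec{A})=\theta(u(z^j)+\vec{A})$ is holomorphic except where $z^j$ cannot be continued or the contour is forced to jump, namely over the projections of the branch cuts, which are exactly $\mathbb{R}\cup S_{\sigma-\mu_2}$.

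Next I would dispose of the three sheet-transition relations, the first, fourth and fifth in the list. On $S_{\mu_1}=\cup_j\Xi_j$, on $S_{\sigma-\mu_2}$, and on $\mathbb{R}=S_{\mu_3}$ the pairs of sheets $(1,2)$, $(2,3)$, $(3,4)$ are glued, so the one-sided limits $z^l_+$ and $z^{l+1}_-$ represent the \emph{same} point of $\mathcal{\Lie}$. With the base point placed at $\lambda_{2g+2}^1$ and the contour conventions of Figure \ref{fig:contour}, the contour reaching $z^l_+$ and the contour reaching $z^{l+1}_-$ differ only by a contractible loop, so $u(z^l_+)=u(z^{l+1}_-)$ and hence $\theta^l_\pm=\theta^{l+1}_\mp$, with the pairing of $+$ to $-$ dictated by the orientation of the gluing.

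The heart of the matter is the second relation, the period jumps across the interior gaps $\tilde{\Xi}_j$, $j=1,\dots,g$. Here both one-sided limits lie on the same sheet $l\in\{1,2\}$, so no sheet transition occurs; instead the contour to $z^l_+$ and the contour to $z^l_-$ close up into a loop homologous to $\pm b_j$, the sign being fixed by the sheet. By (\ref{eq:permatrix}) this gives $u(z^l_+)-u(z^l_-)=\pm\Pi\vec{e}_j$ with $\left(\Pi\vec{e}_j\right)_k=\Pi_{kj}$, and substituting this shift into the second line of (\ref{eq:period}) with $\vec{M}=\pm\vec{e}_j$ produces precisely the factor $e^{(-1)^l2\pi i\left(u_j(z)+A_j+\frac{\Pi_{jj}}{2}\right)}$, once one fixes which one-sided boundary value of $u_j$ is meant on each sheet. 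The opposite signs of the shift on the two sheets, and therefore the $(-1)^l$, come from the orientation reversal $\phi(b_j)=-b_j$ recorded in (\ref{eq:involcyc}). Finally, for the two no-jump relations, the third and sixth, I would argue that the loop formed by the two one-sided contours is homologically trivial: on $\tilde{\Xi}_{g+1}$ and across $S_{\sigma-\mu_2}$ on the sheets $l=1,4$ that are not glued there, the loop bounds a region free of branch cuts, while on $\tilde{\Xi}_0$ a loop encircling all finite cuts on one sheet is contractible through the point at infinity, so no period is acquired and $\theta^l_+=\theta^l_-$.

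I expect the main obstacle to be the homological bookkeeping rather than any analytic difficulty: verifying that each interior gap contributes \emph{exactly} the single cycle $b_j$ (so that only $\Pi_{jj}$, and not a combination of periods, appears) with the stated sign pattern $(-1)^l$, and confirming that none of the sheet-transition relations secretly carry a residual $a$- or $b$-period. Both points rest on carefully matching the explicit contours of Figure \ref{fig:contour} to the cycle basis of Figure \ref{fig:cycle}, keeping track of which boundary value of $u_j(z)$ is used on each sheet, and exploiting the involution symmetry (\ref{eq:involcyc}).
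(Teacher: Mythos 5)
Your overall strategy---contour bookkeeping for the Abel map combined with the (quasi-)periodicity of $\theta$---is the same as the paper's, and your treatment of the gap jumps across $\tilde{\Xi}_j$ (loop homologous to $\pm b_j$, shift $\pm\Pi\vec{e}^j$ in the Abel map, quasi-periodicity producing the factor $e^{(-1)^l2\pi i\left(u_j(z)+A_j+\frac{\Pi_{jj}}{2}\right)}$) is exactly what the paper does. However, there is a genuine gap in your handling of the first relation, the sheet transition across the cuts $\Xi_j$. You assert that the contour reaching $z^1_{\pm}$ and the contour reaching $z^2_{\mp}$ ``differ only by a contractible loop, so $u(z^1_{\pm})=u(z^2_{\mp})$''. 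That is true only for $\Xi_{g+1}$, the cut containing the base point $\lambda_{2g+2}^1$. For $z\in\Xi_j$ with $j=1,\ldots,g$, the loop formed by the two contours consists of a path in the upper half plane of $\mathcal{\Lie}_1$ from $\Xi_{g+1}$ to $\Xi_j$ followed by a return path in the lower half plane of $\mathcal{\Lie}_2$; by the very definition of the cycle basis in Figure \ref{fig:cycle} this loop is homologous to $a_j+a_{j+1}+\cdots+a_g$, which is not null-homologous. Consequently $u(z^1_{\pm})=u(z^2_{\mp})+\sum_{k=j}^g\vec{e}^k$ --- this is precisely equation (\ref{eq:ak}) in the paper --- and your claimed equality of Abel map values is false.

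The conclusion $\theta^1_{\pm}=\theta^2_{\mp}$ nevertheless holds, but for a reason absent from your write-up: the discrepancy is an \emph{integer} vector, and the first line of the periodicity relation (\ref{eq:period}) makes $\theta$ invariant under translations by $\mathbb{Z}^g$. In other words, the sheet-transition relations do ``secretly carry a residual $a$-period''---exactly the contingency you proposed to rule out in your final paragraph---and the correct repair is not to rule it out (which is impossible) but to observe that $a$-period shifts are invisible to $\theta$. As written, your verification plan would fail at this step. A smaller imprecision of the same kind: across $\tilde{\Xi}_0$ on $\mathcal{\Lie}_2$ the loop cannot be contracted ``through the point at infinity'', since $\infty^2$ is a branch point and $S_{\sigma-\mu_2}$ runs out to it; that loop is only null-homologous (e.g.\ by telescoping the loops around the individual cuts $\Xi_1,\ldots,\Xi_{g+1}$), which again suffices because periods of holomorphic differentials depend only on the homology class.
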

\begin{proof} Let us first consider the discontinuities of
$\theta^1(u(z)+\vec{A})$ and $\theta^2(u(z)+\vec{A})$ across
$\Xi_j$. Let $\pi:\mathcal{\Lie}\rightarrow\overline{\mathbb{C}}$ be
the projection of $\mathcal{\Lie}$ onto the Riemann sphere. Suppose
$z$ is a point in $\Xi_j$. Let $z\in\mathbb{C}$ and define the
points $z_{\pm\epsilon}^j\in\mathcal{\Lie}_j$ to be
\begin{equation}\label{eq:zepsilon}
\pi(z_{\pm i\epsilon}^j)=z\pm i\epsilon,\quad z_{\pm
i\epsilon}^j\in\mathcal{\Lie}_j.
\end{equation}
We will now choose $\epsilon>0$ to be real and positive and let
$z\in\Xi_j$. From the definition of the integration contour in
Figure \ref{fig:contour} and the canonical basis of cycles in Figure
\ref{fig:cycle}, we have the following relation between the points
$z_{\pm i\epsilon}^j$ as $\epsilon\rightarrow 0$.
\begin{equation}\label{eq:ak}
\begin{split}
u(z_{\pm i\epsilon}^1)&=u(z_{\mp i\epsilon}^2),\quad
z\in\Xi_{g+1},\\
u(z_{\pm i\epsilon}^1)&=u(z_{\mp i\epsilon}^2)+\sum_{k=j}^{g}\oint_{a_k}\D\omega,\\
&=u(z_{\mp i\epsilon}^2)+\sum_{k=j}^{g}\vec{e}^k,\quad
z\in\Xi_j,\quad j=1,\ldots, g.
\end{split}
\end{equation}
where $\vec{e}^k$ is a vector with 1 in the $k^{th}$ entry and zero
elsewhere and $\D\omega$ is the vector
\begin{equation}
\D\omega=(\D\omega_1,\ldots,\D\omega_g)^T.
\end{equation}
From this and the periodicity of the theta function
(\ref{eq:period}), we obtain
\begin{equation}
\theta_{\pm}^1(u(z)+\vec{A})=\theta_{\mp}^2(u(z)+\vec{A}),\quad
z\in\Xi_j,\quad j=1,\ldots, g+1.
\end{equation}
Let us now consider a point $z$ in $\tilde{\Xi}_j$. Again, from the
definition of the integration contour and the canonical basis, we
have, as $\epsilon\rightarrow 0$, the following
\begin{equation}
\begin{split}
u(z_{i\epsilon}^l)&=u(z_{-i\epsilon}^l)+(-1)^{l+1}\oint_{b_j}\D\omega,\\
&=u(z_{-i\epsilon}^l)+(-1)^{l+1}\Pi\vec{e}^j, \quad
z\in\tilde{\Xi}_j,\quad j=1,\ldots, g.
\end{split}
\end{equation}
where $l=1,2$. From this and the periodicity of the theta function,
we see that
\begin{equation}
\begin{split}
\theta_+^l(u(z)+\vec{A})&=\theta_-^l(u(z)+\vec{A})e^{(-1)^l2\pi
i\left(u_j(z)+A_j+\frac{\Pi_{jj}}{2}\right)},\quad
z\in\tilde{\Xi}_j,\quad j=1,\ldots, g.
\end{split}
\end{equation}
From the definition of the integration contour, it is clear that
$\theta^1(u(z)+\vec{A})$ and $\theta^2(u(z)+\vec{A})$ are analytic
across $\mathbb{R}\setminus \left(\lambda_1,\lambda_{2g+2}\right)$
and that $\theta^1(u(z)+\vec{A})$ is analytic across
$S_{\sigma-\mu_2}$.

Let us now consider the discontinuities of $\theta^3(u(z)+\vec{A})$
and $\theta^2(u(z)+\vec{A})$ on $S_{\sigma-\mu_2}$. Let $z$ be a
point on $S_{\sigma-\mu_2}$, from the definition of the contours, it
follows immediately that
\begin{equation}
\begin{split}
\theta_{+}^2(u(z)+\vec{A})&=\theta_{-}^3(u(z)+\vec{A}),\quad z\in
S_{\sigma-\mu_2}.
\end{split}
\end{equation}
Let us now consider the boundary value of $\theta^2(u(z)+\vec{A})$
on the minus side of $S_{\sigma-\mu_2}$. For small and positive
$\epsilon\rightarrow0$, we have
\begin{equation}
\begin{split}
u(z^2+\epsilon)+\oint_{\Sigma}\D\omega&=u(z^3-\epsilon),
\end{split}
\end{equation}
where $\Sigma$ is the close loop on $\mathcal{\Lie}$ depicted in
Figure \ref{fig:gamma}. Since this loop is contractible, we have
\begin{equation}
\begin{split}
\theta_{-}^2(u(z)+\vec{A})&=\theta_{+}^3(u(z)+\vec{A}),\quad z\in
S_{\sigma-\mu_2}.
\end{split}
\end{equation}
Finally, the conditions
\begin{equation}
\begin{split}
\theta^3_{\pm}(u(z)+\vec{A})&=\theta^4_{\mp}(u(z)+\vec{A}),\quad
z\in\mathbb{R},\\
\theta^4_{+}(u(z)+\vec{A})&=\theta^4_-(u(z)+\vec{A}),\quad z\in
S_{\sigma-\mu_2}.
\end{split}
\end{equation}
follow directly from the definition of the contour of integration.
\end{proof}
\begin{figure}
\centering \psfrag{lambda1}[][][1][0.0]{\tiny$\lambda_1$}
\psfrag{l2}[][][1][0.0]{\tiny$\lambda_2$}
\psfrag{l3}[][][1][0.0]{\tiny$\lambda_3$}
\psfrag{l4}[][][1][0.0]{\tiny$\lambda_4$}
\psfrag{l2g}[][][1][0.0]{\tiny$\lambda_{2g+2}$}
\psfrag{gamma}[][][1][0.0]{\tiny$\Sigma$}
\psfrag{z}[][][1][0.0]{\tiny$z$} \psfrag{it}[][][1][0.0]{\tiny$ic$}
\psfrag{-it}[][][1][0.0]{\tiny$-ic$}
\psfrag{ia}[][][1][0.0]{\tiny$i\zeta$}
\psfrag{-ia}[][][1][0.0]{\tiny$-i\zeta$}
\includegraphics[scale=0.75]{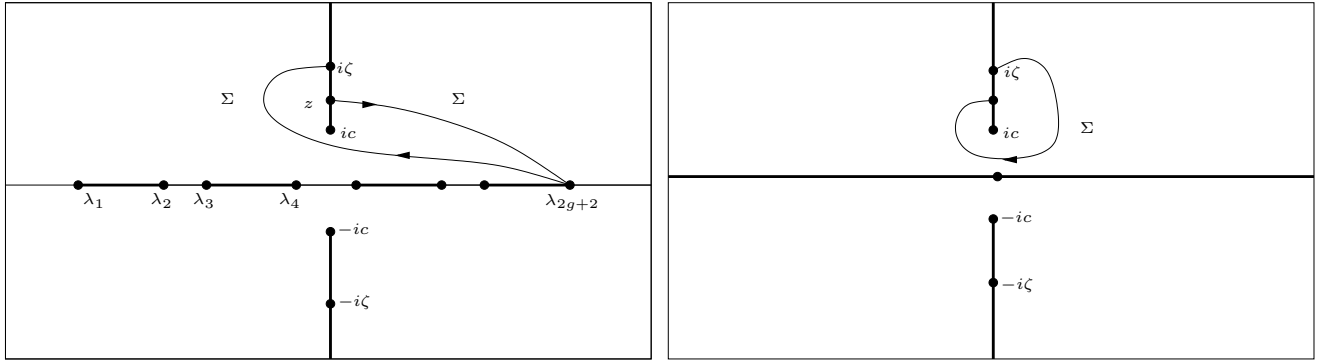}
\caption{The contour $\Sigma$.}\label{fig:gamma}
\end{figure}
\subsection{Meromorphic differentials}
Another key ingredient in the construction of the parametrix is
meromorphic differentials on the Riemann surface. Most of the
results that we will be using can be found in \cite{FK}.

\begin{proposition}\label{pro:mero} Let $d_1,\ldots, d_k$ be $k$ distinct points on a
Riemann surface $\mathcal{\Lie}$. Let $c_1,\ldots, c_k$ be complex
numbers with $\sum_{j=1}^kc_k=0$. Then there exists a meromorphic
1-form $\D\Omega$ on $\mathcal{\Lie}$, holomorphic on
$\mathcal{\Lie}\setminus\{d_1,\ldots,d_k\}$ such that
\begin{equation}
\D\Omega(x)=\left(\frac{c_j}{x_j}+O(1)\right)dx_j,\quad x\rightarrow
d_j,\quad j=1,\ldots, k.
\end{equation}
where $x_j$ is a local coordinate near $d_j$ such that $x_j=0$ at
$d_j$.
\end{proposition}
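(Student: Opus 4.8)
The plan is to realize $\D\Omega$ as a superposition of elementary differentials of the third kind. Recall that on a compact Riemann surface the residue theorem forces $\sum_{p}\Res_{p}\omega=0$ for every meromorphic differential $\omega$; hence the hypothesis $\sum_{j=1}^{k}c_j=0$ is not merely natural but necessary, and the substance of the proposition is that it is also sufficient. I would first dispose of the trivial case $k=1$ (where $c_1=0$ and $\D\Omega=0$ serves) and assume $k\ge 2$.

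For the construction, fix an auxiliary point $q_0\in\mathcal{\Lie}$ distinct from $d_1,\ldots,d_k$, and let $\tau_{p,q}$ denote an elementary differential of the third kind having a simple pole at $p$ with residue $+1$, a simple pole at $q$ with residue $-1$, and no other singularities (its existence is classical; see \cite{FK}). I would then set
\begin{equation*}
\D\Omega=\sum_{j=1}^{k}c_j\,\tau_{d_j,q_0}.
\end{equation*}
At each $d_j$ the only contribution to the singular part comes from $\tau_{d_j,q_0}$, producing a simple pole with residue $c_j$, which is exactly the prescribed behaviour $\left(\frac{c_j}{x_j}+O(1)\right)\D x_j$ in a local coordinate $x_j$ vanishing at $d_j$. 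At the auxiliary point $q_0$ the residues accumulate to $-\sum_{j=1}^{k}c_j=0$ by hypothesis, so the potential pole at $q_0$ cancels and $\D\Omega$ extends holomorphically across $q_0$. Thus $\D\Omega$ is holomorphic on $\mathcal{\Lie}\setminus\{d_1,\ldots,d_k\}$ with precisely the required simple poles.

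The one genuine obstacle is establishing the building block $\tau_{p,q}$ itself, equivalently that every zero-sum residue assignment is attained; this is where Riemann--Roch enters. I would justify it by a dimension count. The space of meromorphic differentials with at worst simple poles along $D=\sum_{j=1}^{k}d_j$ and holomorphic elsewhere has dimension $g+k-1$, which follows from the Riemann--Roch theorem (Theorem \ref{thm:RR} and \cite{FK}). The residue map sending such a differential to its vector of residues in $\mathbb{C}^k$ has kernel exactly the $g$-dimensional space of holomorphic differentials, and by the residue theorem its image is contained in the hyperplane $\{\,c\in\mathbb{C}^k:\sum_j c_j=0\,\}$. Comparing dimensions, the image has dimension $(g+k-1)-g=k-1$, which coincides with the dimension of that hyperplane; hence the residue map surjects onto it. This simultaneously yields the existence of $\tau_{p,q}$ (the case $k=2$ with residues $\pm 1$) and gives a self-contained proof of the proposition, the verification of the stated local normal form being immediate from the definition of a simple pole with residue $c_j$.
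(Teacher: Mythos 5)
Your proof is correct, but it takes a genuinely different route from the paper, which offers no argument at all for this proposition: the text simply cites \cite{FK} (p.~52, Theorem II.5.3), where the existence of differentials of the third kind is established analytically, via the theory of harmonic differentials with prescribed singularities. Your two-step argument --- superposing elementary differentials $c_j\tau_{d_j,q_0}$ so that the simple poles at the auxiliary point $q_0$ cancel because their residues sum to $-\sum_{j}c_j=0$, and then producing $\tau_{p,q}$ by a dimension count (the residue map on the $(g+k-1)$-dimensional space of differentials with at worst simple poles along $D$ has the $g$-dimensional space of holomorphic differentials as kernel, hence $(k-1)$-dimensional image, which fills the residue hyperplane) --- is valid: simple poles at a common point with vanishing total residue do cancel exactly, and the count $\ell(K+D)=g+k-1$ is right. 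The one caveat is logical, not mathematical: in \cite{FK}, the very reference you and the paper lean on, the Riemann--Roch theorem is itself proved \emph{using} the existence theorems for differentials of the second and third kind, so read against that development your argument is circular; it is a genuine self-contained proof only if Riemann--Roch is imported from a treatment independent of these existence theorems (e.g.\ the sheaf-theoretic proof via Serre duality). Relatedly, Theorem~\ref{thm:RR} of the paper, which you invoke, is not Riemann--Roch but a theta-divisor consequence of it, and it does not yield the dimension $g+k-1$; you need Riemann--Roch proper for that step. With the provenance of Riemann--Roch made precise, your proof stands, and it is arguably more informative than the paper's bare citation, since it exhibits $\D\Omega$ explicitly as a combination of elementary third-kind differentials.
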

This result can be found for example, in \cite{FK}. (p.52, theorem
II.5.3)

A meromorphic 1-form with simple poles only is called a meromorphic
1-form of the third type. Let $\D\Omega$ be a meromorphic 1-form of
the third type. In order to define the periods of $\D\Omega$
unambiguously, one has to define the periods to be integrals around
close loops $\hat{a}_j$ and $\hat{b}_j$ that are homologic to the
$a$ and $b$-cycles in Figure \ref{fig:cycle} in
$\mathcal{\Lie}\setminus\Omega_{pole}$, where $\Omega_{pole}$ is the
set of poles of $\D\Omega$.

By adding suitable multiples of holomorphic 1-forms to a given
meromorphic 1-form, we can obtain meromorphic 1-forms with arbitrary
$a$-periods. For example, if a meromorphic 1-form given by
Proposition \ref{pro:mero} has the following $a$-periods
\begin{equation*}
\oint_{a_j}\D\Omega=\mathcal{A}_j,\quad j=1,\ldots,g.
\end{equation*}
Then the meromorphic 1-form
\begin{equation*}
\D\tilde{\Omega}=\D\Omega+\sum_{j=1}^g(\tilde{\mathcal{A}}_j-\mathcal{A}_j)\D\omega_j
\end{equation*}
will have $a$-periods
\begin{equation*}
\oint_{a_j}\D\tilde{\Omega}=\tilde{\mathcal{A}}_j,\quad
j=1,\ldots,g.
\end{equation*}
but with the same pole structure and residues. Of course, we can not
control both the $a$ and the $b$-periods of the 1-form. In fact,
meromorphic 1-forms with prescribed $a$-period and pole structure is
uniquely defined. A meromorphic 1-form with all $a$-periods zero is
called a normalized meromorphic 1-form.
\begin{proposition} Let $d_1,\ldots, d_k$ be $k$ distinct points on a
Riemann surface $\mathcal{\Lie}$. Let $c_1,\ldots, c_k$ be complex
numbers with $\sum_{j=1}^kc_k=0$ and let
$\mathcal{A}_1,\ldots,\mathcal{A}_g$ be arbitrary complex numbers.
Then there exists a unique meromorphic 1-form of the third type
$\D\Omega$ on $\mathcal{\Lie}$, holomorphic on
$\mathcal{\Lie}\setminus\{d_1,\ldots,d_k\}$ such that
\begin{equation}\label{eq:mero}
\begin{split}
\Res_{z=d_j}\D\Omega=c_j,\quad j=1,\ldots, k,\\
\oint_{a_j}\D\Omega=\mathcal{A}_j,\quad j=1,\ldots, g.
\end{split}
\end{equation}
\end{proposition}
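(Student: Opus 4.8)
The plan is to build the desired 1-form in two stages, combining the existence result of Proposition \ref{pro:mero} with a correction by holomorphic 1-forms, and then to establish uniqueness by a standard vanishing argument. For existence, I would first invoke Proposition \ref{pro:mero} to obtain some meromorphic 1-form $\D\Omega_0$ on $\mathcal{\Lie}$, holomorphic away from $d_1,\ldots,d_k$, with the prescribed simple poles and residues $\Res_{z=d_j}\D\Omega_0=c_j$. This 1-form will have some $a$-periods, say $\oint_{a_j}\D\Omega_0=\mathcal{A}_j^0$ for $j=1,\ldots,g$. Then, exactly as in the discussion preceding the statement, I would set
\begin{equation*}
\D\Omega=\D\Omega_0+\sum_{l=1}^g(\mathcal{A}_l-\mathcal{A}_l^0)\D\omega_l,
\end{equation*}
where the $\D\omega_l$ are the normalized holomorphic 1-forms dual to the $a$-cycles. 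Since the $\D\omega_l$ are holomorphic, adding them does not change the pole structure or the residues, so $\Res_{z=d_j}\D\Omega=c_j$ still holds. Using the normalization $\oint_{a_j}\D\omega_l=\delta_{jl}$, I compute $\oint_{a_j}\D\Omega=\mathcal{A}_j^0+(\mathcal{A}_j-\mathcal{A}_j^0)=\mathcal{A}_j$, giving the required $a$-periods. This produces a 1-form satisfying (\ref{eq:mero}).

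For uniqueness, suppose $\D\Omega$ and $\D\Omega'$ both satisfy (\ref{eq:mero}). Their difference $\D\eta=\D\Omega-\D\Omega'$ has no poles: at each $d_j$ the residues cancel and, since both have only simple poles there with equal residues, the difference is in fact holomorphic at $d_j$ (the principal parts agree). Hence $\D\eta$ is a globally holomorphic 1-form on $\mathcal{\Lie}$. Moreover its $a$-periods vanish, $\oint_{a_j}\D\eta=\mathcal{A}_j-\mathcal{A}_j=0$ for all $j$. But the space of holomorphic 1-forms is $g$-dimensional with the $\D\omega_l$ as a basis, and writing $\D\eta=\sum_l c_l\,\D\omega_l$, the vanishing of all $a$-periods forces $c_j=\oint_{a_j}\D\eta=0$. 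Therefore $\D\eta=0$, i.e. $\D\Omega=\D\Omega'$.

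The only genuinely substantive input is Proposition \ref{pro:mero}, which already guarantees the existence of a meromorphic 1-form with the prescribed poles and residues; everything else is linear-algebraic bookkeeping with the holomorphic 1-forms. Thus the main point to be careful about is the uniqueness step, specifically the claim that a difference of two 1-forms with identical simple poles and residues is holomorphic everywhere: one must check that matching residues at simple poles indeed cancels the entire principal part, which is immediate since a simple pole has only the residue term in its local Laurent expansion. Given that, the argument closes cleanly, and I do not anticipate any serious obstacle beyond confirming these local computations.
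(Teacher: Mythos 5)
Your proposal is correct and follows essentially the same route as the paper: existence by taking the 1-form from Proposition \ref{pro:mero} and correcting its $a$-periods with the holomorphic basis $\D\omega_l$ (which is exactly the paper's preceding discussion, invoked as ``we have already shown the existence part''), and uniqueness by observing that the difference of two solutions is holomorphic with vanishing $a$-periods, hence zero. The only cosmetic difference is that the paper cites \cite{FK} for the last vanishing fact, while you rederive it from the basis expansion in the $\D\omega_l$.
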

\begin{proof} We have already shown the existence part. To see the
uniqueness part, let $\D\Omega$ and $\D\Omega^{\prime}$ be 2
meromorphic 1-forms of the third type with the properties
(\ref{eq:mero}). Let $\D\tilde{\Omega}$ be their difference. Then,
since both $\D\Omega$ and $\D\Omega^{\prime}$ have the same singular
behavior at the points $d_j$, the 1-form $\D\tilde{\Omega}$ does not
have any pole and is therefore holomorphic. Moreover, all its
$a$-periods vanish. Since a holomorphic 1-form with vanishing
$a$-periods has to be zero itself, (See, for example, \cite{FK},
p.65, Proposition III.3.3) the proposition is proven.
\end{proof}
We will conclude this section with a result that relates the periods
of a normalized meromorphic 1-form to the values of the Abel map at
its poles.
\begin{theorem}\label{thm:RB} (See e.g. \cite{FK}, p.65, III.3)
Let $\eta$ be a meromorphic differential of the third type with
simple poles at the points $d_i\in\mathcal{\Lie}$ and $\tilde{\eta}$
be a holomorphic differential. Let $\Pi^i$ and $\tilde{\Pi}^i$ be
their periods
\begin{equation}
\begin{split}
\int_{a_i}\eta&=\Pi^i, \quad \int_{b_i}\eta=\Pi^{i+g}\\
\int_{a_i}\tilde{\eta}&=\tilde{\Pi}^i, \quad
\int_{b_i}\tilde{\eta}=\tilde{\Pi}^{i+g}
\end{split}
\end{equation}
Then the Riemann bilinear relation is the following
\begin{equation}\label{eq:RB}
\sum_{i=1}^g\tilde{\Pi}^i\Pi^{i+g}-\tilde{\Pi}^{g+i}\Pi^{i}=2\pi
i\sum_{d_i}\Res_{d_i}(\eta)\int_{x_0}^{d_i}\tilde{\eta},
\end{equation}
where $x_0$ is an arbitrary point on $\mathcal{\Lie}$.
\end{theorem}

\section{Construction of the outer parametrix}\label{se:outer}
We will now construct the local parametrix with the the theta
function and meromorphic 1-forms.

Let us now define a local coordinate $w$ near $\infty^2$, the point
on $\mathcal{\Lie}_2$, $\mathcal{\Lie}_3$ and $\mathcal{\Lie}_4$
that projects onto $\infty$ in the Riemann sphere.
\begin{equation}\label{eq:w}
\begin{split}
w&=\left\{
    \begin{array}{ll}
      z^{-\frac{1}{3}}, & \hbox{in the first and fourth quadrants of $\mathcal{\Lie}_2$;} \\
      \omega^{2} z^{-\frac{1}{3}}, & \hbox{in the second quadrant of $\mathcal{\Lie}_2$;} \\
      \omega z^{-\frac{1}{3}}, & \hbox{in the third quadrant of
$\mathcal{\Lie}_2$.}
    \end{array}
  \right.\\
w&=\left\{
     \begin{array}{ll}
       \omega^{2} z^{-\frac{1}{3}}, & \hbox{in the first quadrant of $\mathcal{\Lie}_3$;} \\
       z^{-\frac{1}{3}}, & \hbox{in the second and third quadrants of $\mathcal{\Lie}_3$;} \\
       \omega z^{-\frac{1}{3}}, & \hbox{in the fourth quadrant of $\mathcal{\Lie}_3$.}
     \end{array}
   \right.\\
w&=\left\{
     \begin{array}{ll}
       \omega z^{-\frac{1}{3}}, & \hbox{in the first and second quadrants of $\mathcal{\Lie}_4$;} \\
       \omega^{2} z^{-\frac{1}{3}}, & \hbox{in the third and fourth quadrants of $\mathcal{\Lie}_4$.}
     \end{array}
   \right.
\end{split}
\end{equation}
where $\omega=e^{\frac{2\pi i}{3}}$ and the branch of
$z^{\frac{1}{3}}$ is chosen such that $\arg z\in(-\pi,\pi)$. One can
check that $w$ is indeed holomorphic in $\mathcal{\Lie}$ in a
neighborhood of $\infty^2$.

Let us now define four meromorphic 1-forms of the third type
$\D\Delta_j$, $j=1,\ldots,4$ by the following properties.
\begin{definition}\label{de:deltaj}
The normalized meromorphic 1-forms $\D\Delta_j$ are holomorphic in
\begin{equation*}
\mathcal{\Lie}\setminus\{\pm it,
\lambda_{1}^1,\ldots,\lambda_{2g+2}^1,\iota_1,\ldots,\iota_g,\infty^1,\infty^2\}.
\end{equation*}
where $\pm it$ are the points in $\mathcal{\Lie}_2$ that project
onto $\pm ic$ and $\iota_k$ are the zeros of $\theta(u(x))$. At
these points they have simple poles with residues
\begin{equation}\label{eq:deltaj}
\begin{split}
\Res_{\lambda_k^1}\D\Delta_j&=\Res_{\pm
it}\D\Delta_j=-\frac{1}{2},\quad k=1,\ldots,2g+2,\\
\Res_{\iota_k}\D\Delta_j&=1,\quad k=1,\ldots,g,\\
\Res_{\infty^1}\D\Delta_1&=0,\quad \Res_{\infty^2}\D\Delta_1=2,\\
\Res_{\infty^1}\D\Delta_2&=3,\quad \Res_{\infty^2}\D\Delta_2=-1,\\
\Res_{\infty^1}\D\Delta_3&=2,\quad \Res_{\infty^2}\D\Delta_1=0,\\
\Res_{\infty^1}\D\Delta_4&=1,\quad \Res_{\infty^2}\D\Delta_4=1,
\end{split}
\end{equation}
provided none of the $\iota_l$ is equal to $\lambda_{k}^1$ for some
$k$. If some $\iota_l$ is equal to $\lambda_k^1$ for some $k$, then
the residue at $\iota_l$ will be $\frac{1}{2}$.

These 1-forms are then uniquely defined. We will denote the
$b$-period of these 1-forms by $\beta_j$.
\begin{equation}
\beta_j=\left(\oint_{b_1}\D\Delta_j,\ldots,\oint_{b_g}\D\Delta_j\right)^T,\quad
j=1,\ldots, 4.
\end{equation}
To avoid ambiguity in the $b$-periods, let $\pi(\iota_k)$ be the
projection of $\iota_k$ on $\tilde{\Xi}_k$ (See remark
\ref{re:iota}). Then the $b$-periods are computed as integrals on
$b$-cycles in $\mathcal{\Lie}_1$ that intersects $\tilde{\Xi}_k$ at
any point $x<\pi(\iota_k)$ if $\pi(\iota_k)\neq\lambda_{2k}$. If
$\pi(\iota_k)=\lambda_{2k}$, then the $b$-cycle can intersect
$\tilde{\Xi}_k$ at any point $x\neq\lambda_{2k}$ in
$\mathcal{\Lie}_1$.
\end{definition}
We will now define four functions in the Riemann surface
$\mathcal{\Lie}$. First let $\Xi_{k}^{\pm}\in\mathcal{\Lie}$ be the
images of $\Xi_k$ under the maps $\xi_{1,\pm}(z)$, that is,
\begin{equation}\label{eq:xipm}
\begin{split}
\Xi_k^{\pm}=\left\{(z,\xi)|z\in\Xi_k,\quad
\xi=\xi_{1,\pm}(z)\right\},\quad k=1,\ldots,g+1.
\end{split}
\end{equation}
Let $z_0$ be a point in $\Xi_{g+1}^-$. The exact choice of $z_0$ is
immaterial to the construction as long as $z_0\neq\lambda_{2g+1}^1$
or $\lambda_{2g+2}^1$. We will now define the functions $N_j(z)$ on
$\mathcal{\Lie}$ as follows.
\begin{equation}\label{eq:Nj}
\begin{split}
N_j(z)&=e^{\Delta_j(z)}\frac{\theta\left(u(z)+\frac{\beta_j}{2\pi
i}+n\vec{\alpha}\right)}
{\theta\left(u(z)\right)}\\
&=e^{\Delta_j(z)}\Theta_j(z),\quad
\vec{\alpha}=\left(\alpha_1,\ldots,\alpha_{g}\right)^T,\quad
j=1,\ldots,4.
\end{split}
\end{equation}
where the function $\Delta_j(z)$ is given by
$\Delta_j(z)=\int_{z_0}^z\D\Delta_j$ and the path of integration is
defined in the same way as the ones for the Abel map, except that
every path now starts at $z_0$.

Let $z^k$ be the point on $\mathcal{\Lie}_k$ that projects to $z$ in
$\overline{\mathbb{C}}$. As before, we will now define four
functions $e^{\Delta_j^k(z)}$ on the complex $z$-plane by
\begin{equation}
e^{\Delta_j^k(z)}=e^{\Delta_j(z^k)}.
\end{equation}
Then these functions have the following jump discontinuities in the
complex $z$-plane.
\begin{proposition}\label{pro:deltajump}
The functions $e^{\Delta_j^l(z)}$ are analytic in
$\mathbb{C}\setminus(\mathbb{R}\cup S_{\sigma-\mu_2})$. On
$\mathbb{R}\cup S_{\sigma-\mu_2}$, they satisfy the following
conditions
\begin{equation}\label{eq:deltajump}
\begin{split}
e^{\Delta_{j,\pm}^1(z)}&=\mp e^{\Delta_{j,\mp}^2(z)},\quad
z\in\Xi_k,\quad k=1,\ldots, g+1,\\
e^{\Delta_{j,+}^l(z)}&=e^{\Delta_{j,-}^l(z)+(-1)^{l-1}\left(\beta_j\right)_k},\quad
z\in\tilde{\Xi}_k,\quad k=1,\ldots, g,\quad l=1,2,\\
e^{\Delta_{j,+}^l(z)}&=e^{\Delta_{j,-}^l(z)},\quad
z\in\left(\tilde{\Xi}_0\cup\tilde{\Xi}_{g+1}\right),\quad l=1,2,\\
e^{\Delta_{j,\pm}^2(z)}&=\pm e^{\Delta_{j,\mp}^3(z)},\quad z\in S_{\sigma-\mu_2},\\
e^{\Delta_{j,\pm}^3(z)}&=e^{\Delta_{j,\mp}^4(z)},\quad
z\in\mathbb{R},\\
e^{\Delta_{j,+}^l(z)}&=e^{\Delta_{j,-}^l(z)},\quad z\in
S_{\sigma-\mu_2},\quad l=1,4.
\end{split}
\end{equation}
where $\left(\beta_j\right)_k$ is the $k^{th}$ component of the
vector $\beta_j$.
\end{proposition}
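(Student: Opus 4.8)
The plan is to follow the proof of Proposition~\ref{pro:thetajump} almost line for line, replacing the quasi-periodicity of the theta function by the additivity of the line integral $\Delta_j(z)=\int_{z_0}^z\D\Delta_j$ under deformation of its contour. The only genuinely new feature is that $\D\Delta_j$ is \emph{meromorphic} rather than holomorphic, so that deforming a contour across one of its poles contributes a residue; these residues are precisely what generate the sign factors in (\ref{eq:deltajump}), whereas the theta function saw no such contributions. First I would dispose of analyticity: since $\D\Delta_j$ is holomorphic on $\mathcal{\Lie}$ away from the poles listed in Definition~\ref{de:deltaj}, and all of these poles --- the $\lambda_k^1$, the points $\pm it$, the $\iota_k$ and $\infty^1,\infty^2$ --- project onto $\mathbb{R}\cup S_{\sigma-\mu_2}$, while the branch cuts of the sheet structure also project there, the primitive $\Delta_j(z^l)$ is analytic on each sheet away from these arcs, so every $e^{\Delta_j^l(z)}$ is analytic in $\mathbb{C}\setminus(\mathbb{R}\cup S_{\sigma-\mu_2})$.

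For each arc I would then compare the contour reaching the relevant point from one sheet/side with the contour reaching the glued point from the adjacent sheet/side, using exactly the relations (\ref{eq:ak}) (and their analogues on $\tilde\Xi_k$, on $S_{\sigma-\mu_2}$ and on $\mathbb{R}$) that were established for the Abel map. The difference of the two contours is a closed loop $\gamma$ on $\mathcal{\Lie}$, and hence
\[
\Delta_j^{a}(z)-\Delta_j^{b}(z)=\oint_{\gamma}\D\Delta_j=\sum_{m}n_m\oint_{a_m}\D\Delta_j+\sum_{m}\ell_m\oint_{b_m}\D\Delta_j+2\pi i\sum_{p}w_p\,\Res_{p}\D\Delta_j,
\]
where $n_m,\ell_m$ are the homology coefficients of $\gamma$ and $w_p$ its winding numbers about the poles $p$. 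Because the $\D\Delta_j$ are normalized, $\oint_{a_m}\D\Delta_j=0$, so only the $b$-periods and the residue terms survive.

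Applying this master formula arc by arc reproduces (\ref{eq:deltajump}). On a finite gap $\tilde\Xi_k$ the loop is the $b$-cycle $b_k$ traversed on a single sheet $l$, oriented with the factor $(-1)^{l-1}$ exactly as for $\D\omega$ in Proposition~\ref{pro:thetajump}; since the defining convention for $\beta_j$ in Definition~\ref{de:deltaj} fixes how $b_k$ passes the pole $\iota_k$, this loop encloses no pole and the jump is exactly $(-1)^{l-1}(\beta_j)_k$, while on the unbounded gaps $\tilde\Xi_0,\tilde\Xi_{g+1}$ the two contours coincide and no jump occurs. Across $\Xi_k$ (sheets $1\leftrightarrow2$) and across $S_{\sigma-\mu_2}$ (sheets $2\leftrightarrow3$, with $\Sigma$ as in Figure~\ref{fig:gamma}) the homology part is a combination of $a$-cycles whose $\D\Delta_j$-periods vanish, but the upper and lower approaches straddle the endpoint branch points $\lambda_k^1$, respectively $\pm it$, which are poles of residue $-\tfrac12$; the two deformation loops therefore differ by one full turn about such a pole, i.e.\ by $e^{2\pi i(-1/2)}=-1$, which is exactly the \emph{asymmetric} sign $e^{\Delta_{j,\pm}^1}=\mp e^{\Delta_{j,\mp}^2}$ and $e^{\Delta_{j,\pm}^2}=\pm e^{\Delta_{j,\mp}^3}$. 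Along the sheet $3\leftrightarrow4$ cut $\mathbb{R}$ the relevant branch locus carries no finite half-integer pole --- the point over infinity $\infty^2$ has \emph{integer} residue, so $e^{2\pi i\,\Res}=1$ --- and hence $e^{\Delta_{j,\pm}^3}=e^{\Delta_{j,\mp}^4}$; finally on $S_{\sigma-\mu_2}$ the sheets $1$ and $4$ carry neither a branch cut nor a pole of $\D\Delta_j$, giving $e^{\Delta_{j,+}^l}=e^{\Delta_{j,-}^l}$ for $l=1,4$.

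The main obstacle I expect is pinning down the signs exactly, rather than up to an overall ambiguity. The factor in the first relation is the asymmetric $\mp$, not a constant $-1$, so it is not enough to note that $\Xi_k$ is bordered by half-residue poles: I must determine, \emph{separately for the upper and lower approaches} and with correct orientation, which branch points the deformation loop $\gamma$ actually encircles, using the explicit contour prescription of Figure~\ref{fig:contour} and the local behaviour $e^{\Delta_j}\sim(z-\lambda_k)^{-1/4}$ forced by $\Res_{\lambda_k^1}\D\Delta_j=-\tfrac12$. Keeping this bookkeeping consistent with the base point $z_0\in\Xi_{g+1}^-$, and verifying that the gap loop is genuinely homologous to $b_k$ with the crossing point dictated by the $\beta_j$-convention (so that no stray residue from $\iota_k$ contaminates the jump), is the delicate part; once the orientation and winding of each loop are fixed, all six relations follow from the residue computation above.
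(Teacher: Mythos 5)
Your strategy is in substance the paper's own proof: write each jump as the exponential of the integral of $\D\Delta_j$ over the closed loop formed by the two integration contours, use the normalization $\oint_{a_m}\D\Delta_j=0$ to kill the homology part along $a$-cycles, read the gap jumps off the $b$-periods $\beta_j$, and obtain the sign factors from the half-integer residues of $\D\Delta_j$. Your handling of the cuts $\Xi_k$, of $S_{\sigma-\mu_2}$, and of the finite gaps (including the care about where $b_k$ crosses $\tilde{\Xi}_k$ relative to $\iota_k$, which is exactly the convention fixed in Definition \ref{de:deltaj}) matches the paper; the only cosmetic difference is which half-residue pole you attribute the $-1$ to. In the paper's deformation, the pair $e^{\Delta_{j,+}^1},e^{\Delta_{j,-}^2}$ picks up the pole at $\lambda_{2g+2}^1$ adjacent to the base point $z_0\in\Xi_{g+1}^-$, while the pair $e^{\Delta_{j,-}^1},e^{\Delta_{j,+}^2}$ passes the poles $\lambda_{2k}^1,\ldots,\lambda_{2g+1}^1$ and $\iota_k,\ldots,\iota_g$, whose residues cancel; the net effect is the asymmetric $\mp$ you describe.

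There is, however, one concrete slip. On the unbounded gap $\tilde{\Xi}_0$ the two contours do \emph{not} coincide and the loop they bound is \emph{not} contractible: on $\mathcal{\Lie}_1$ it is homologous to a small loop around $\infty^1$, and on $\mathcal{\Lie}_2$ to a loop around $\infty^2$ together with \emph{both} points $\pm it$. The absence of a jump there is therefore not automatic; it requires checking that $\Res_{\infty^1}\D\Delta_j$ is an integer and that the sum
\begin{equation*}
\Res_{\infty^2}\D\Delta_j+\Res_{it}\D\Delta_j+\Res_{-it}\D\Delta_j
\end{equation*}
is an integer, which is precisely what the residue table (\ref{eq:deltaj}) guarantees (e.g. $2-\tfrac{1}{2}-\tfrac{1}{2}=1$ for $j=1$). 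This is not a decorative detail: the integrality of the residues assigned at $\infty^1,\infty^2$ is part of the design of $\D\Delta_j$, and with other choices the proposition would simply be false, so this verification cannot be skipped. The same caveat applies, more mildly, to your claim that the finite-gap loop "encloses no pole": for the boundary values on $\mathcal{\Lie}_2$ the paper's deformation of $\Sigma$ into $-b_k$ does enclose the poles $\iota_k,\ldots,\iota_g$ and $\lambda_{2k+1}^1,\ldots,\lambda_{2g+2}^1$ (or a similar collection), and one must again check that the total enclosed residue is an integer so that the exponential is unaffected; only $\tilde{\Xi}_{g+1}$ gives a genuinely contractible loop. Once these residue checks are inserted, your argument coincides with the paper's.
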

\begin{proof} The proof follows similar argument as the ones used in
the proof of Proposition \ref{pro:thetajump}. First let us consider
the jump discontinuities on $\Xi_k$. Let $z$ be a point in $\Xi_k$
and define the points $z_{\pm i\epsilon}^l$ as in
(\ref{eq:zepsilon}) in the proof of Proposition \ref{pro:thetajump}.
First consider the boundary values $e^{\Delta_{j,+}^1(z)}$ and
$e^{\Delta_{j,-}^2(z)}$. Choose integration contours $\Sigma_+$ and
$\Sigma_-$ from $z_0$ to the points $z_{i\epsilon}^1$ and
$z_{-i\epsilon}^2$ as in Figure \ref{fig:gammadelta}. Let
$\Sigma=\Sigma_+-\Sigma_-$. Then $\Sigma$ can be deformed into the
sum $\sum_{l=k}^ga_l$ of the $a$-cycles and a loop $\Sigma_{2g+2}$
around the point $\lambda_{2g+2}^1$ in
$\mathcal{\Lie}\setminus\Delta_{pole}$, where $\Delta_{pole}$ is the
set of poles of $\D\Delta_j$. (See Figure \ref{fig:loop}).
\begin{equation}\label{eq:deltapole}
\Delta_{pole}=\{\pm it,
\lambda_{1}^1,\ldots,\lambda_{2g+2}^1,\iota_1,\ldots,\iota_g,\infty^1,\infty^2\}.
\end{equation}
\begin{figure}
\centering \psfrag{lambda1}[][][1][0.0]{\tiny$\lambda_1$}
\psfrag{l2}[][][1][0.0]{\tiny$\lambda_2$}
\psfrag{l3}[][][1][0.0]{\tiny$\lambda_3$}
\psfrag{l4}[][][1][0.0]{\tiny$\lambda_4$}
\psfrag{l2g}[][][1][0.0]{\tiny$\lambda_{2g+2}$}
\psfrag{g1}[][][1][0.0]{\tiny$\Sigma_+$}
\psfrag{g2}[][][1][0.0]{\tiny$\Sigma_-$}
\psfrag{it}[][][1][0.0]{\tiny$ic$}
\psfrag{-it}[][][1][0.0]{\tiny$-ic$}
\includegraphics[scale=1]{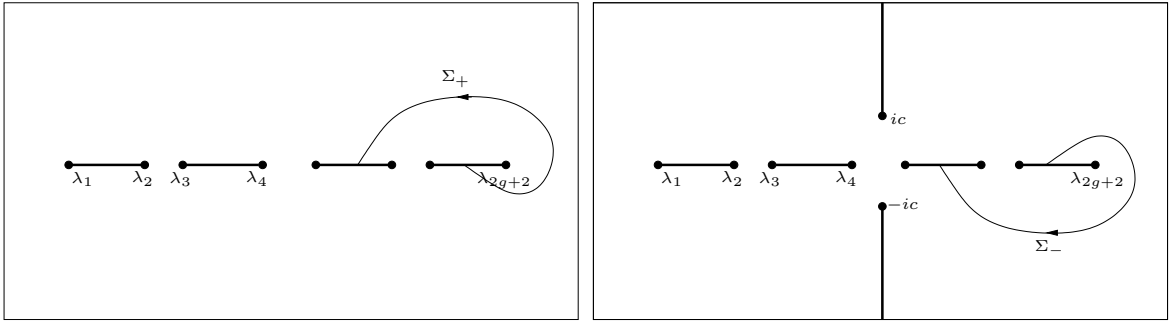}
\caption{The contours $\Sigma_{\pm}$ for $e^{\Delta_{j,+}^1(z)}$ and
$e^{\Delta_{j,-}^2(z)}$.}\label{fig:gammadelta}
\end{figure}
\begin{figure}
\centering \psfrag{lambda1}[][][1][0.0]{\tiny$\lambda_1$}
\psfrag{l2}[][][1][0.0]{\tiny$\lambda_2$}
\psfrag{l3}[][][1][0.0]{\tiny$\lambda_3$}
\psfrag{l4}[][][1][0.0]{\tiny$\lambda_4$}
\psfrag{l2g}[][][1][0.0]{\tiny$\lambda_{2g+2}$}
\psfrag{g1}[][][1][0.0]{\tiny$\Sigma$}
\psfrag{g2}[][][1][0.0]{\tiny$\Sigma_{2g+2}$}
\psfrag{it}[][][1][0.0]{\tiny$ic$}
\psfrag{-it}[][][1][0.0]{\tiny$-ic$}
\includegraphics[scale=1]{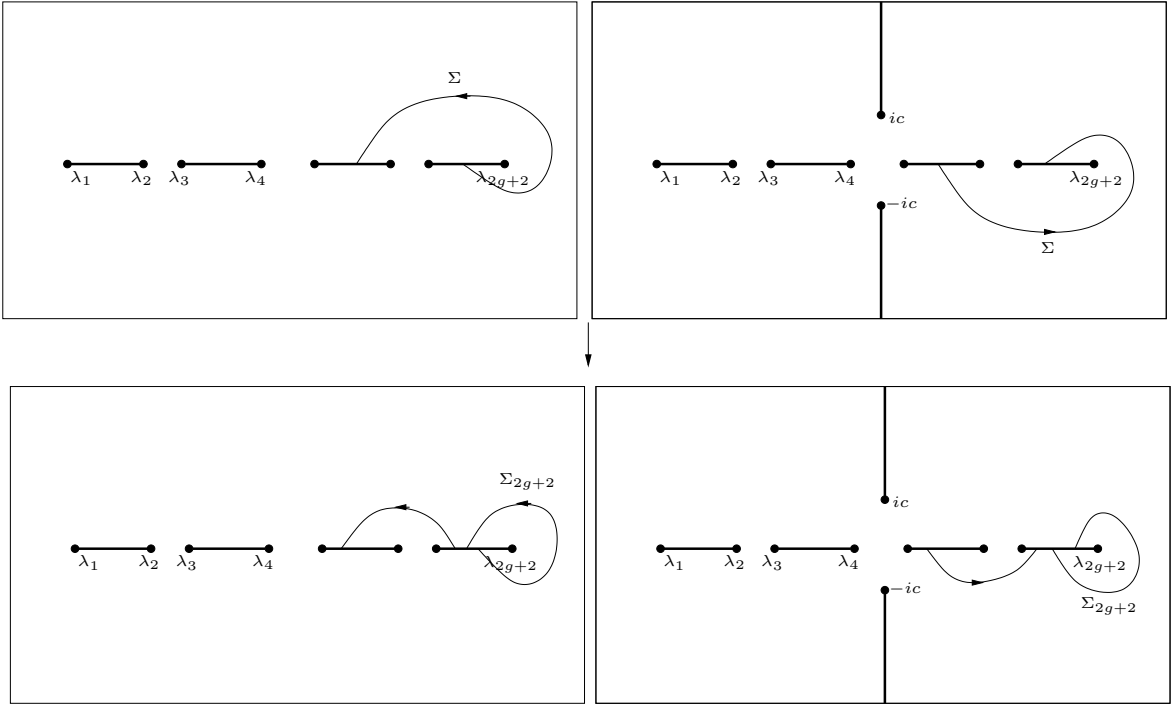}
\caption{The contours deformation of the loop $\Sigma$ for
$e^{\Delta_{j,+}^1(z)}$ and
$e^{\Delta_{j,-}^2(z)}$.}\label{fig:loop}
\end{figure}
Therefore we have
\begin{equation}\label{eq:akdel}
\begin{split}
\exp\left(\Delta_j(z_{i\epsilon}^1)\right)&=\exp\left(\Delta_j(z_{-i\epsilon}^2)+\sum_{l=k}^{g}\oint_{a_l}\D\Delta_j
+\oint_{\Sigma_{2g+2}}\D\Delta_j\right),\\
&=-\exp\left(\Delta_j(z_{ -i\epsilon}^2)\right),\quad
z\in\Xi_k,\quad k=1,\ldots, g.
\end{split}
\end{equation}
where the last equality follows from the fact that $\D\Delta_j$ has
residue $-\frac{1}{2}$ at the point $\lambda_{2g+2}^1$.

Let us now consider the boundary values $e^{\Delta_{j,-}^1(z)}$ and
$e^{\Delta_{j,+}^2(z)}$ on $\Xi_k$. Since $z_0\in\Xi_{g+1}^-$, the
integration contours $\Sigma_-$ and $\Sigma_+$ can now be chosen to
lie in the lower (upper) half plane of $\mathcal{\Lie}_1$
($\mathcal{\Lie}_2$). The loop $\Sigma=\Sigma_+-\Sigma_-$ can now be
deformed into the sum $-\sum_{l=k}^ga_l$ of the $a$-cycles. However,
such a deformation will necessarily go pass the poles
$\lambda_{2k}^1,\ldots,\lambda_{2g+1}^1$ and
$\iota_k,\ldots,\iota_{g}$ of $\D\Delta_j$ (Recall that by the
remark after Proposition \ref{pro:constant}, there is exactly one
point $\iota_k$ that belongs to
$\tilde{\Xi}_k^1\cup\tilde{\Xi}_k^2$). Since the residues of
$\D\Delta_j$ at these points are given by $-\frac{2(g-k)}{2}$ from
the $\lambda_l^1$ and $g-k$ from the $\iota_l$ when all $\iota_l$
and $\lambda_m^1$ are distinct, the total residue at these points is
zero. It is clear from Definition \ref{de:deltaj} that, when some
$\iota_l$ coincide with the $\lambda_m^1$, the total residue at
these points remains unchanged. Hence we have
\begin{equation*}
\begin{split}
\exp\left(\Delta_j(z_{-i\epsilon}^1)\right)&=\exp\left(\Delta_j(z_{i\epsilon}^2)-\oint_{\Sigma}\D\Delta_j\right),\\
&=\exp\left(\Delta_j(z_{i\epsilon}^2)\right),\quad z\in\Xi_k,\quad
k=1,\ldots, g.
\end{split}
\end{equation*}
Let us now consider the boundary values on the gaps $\tilde{\Xi}_k$.
Let $z\in\tilde{\Xi}_k$. For the boundary values
$e^{\Delta_{j,\pm}^1(z)}$, we choose $\Sigma_{\pm}$ to be
integration contours that go from $z_0$ to $z_{\pm i\epsilon}^1$ in
$\mathcal{\Lie}_1$ without intersecting $(-\infty,\lambda_{2g+2}^1)$
except at $z_0$ and $z_{\pm i\epsilon}^1$. Let the loop $\Sigma$ be
$\Sigma=\Sigma_+-\Sigma_-$, then for $k=1,\ldots,g$, $\Sigma$ can be
deformed into the $b$-cycle $b_k$ without passing any pole of
$\D\Delta_j$, except possibly $\iota_k$ (Recall the definition of
the $b$-periods of $\D\Delta_j$ in Definition \ref{de:deltaj}). When
$\iota_k$ is not equal to $\lambda_{2k}^1$ or $\lambda_{2k+1}^1$,
$\D\Delta_j$ has integer residue at $\iota_k$, and when $\iota_k$ is
equal to either $\lambda_{2k}^1$ or $\lambda_{2k+1}^1$, the
deformation from $\Sigma$ to $b_k$ will not have to go pass
$\iota_k$. This implies
\begin{equation*}
e^{\Delta_{j,+}^1(z)}=e^{\Delta_{j,-}^1(z)+\left(\beta_j\right)_k},\quad
z\in\tilde{\Xi}_k,\quad k=1,\ldots, g.
\end{equation*}
For $k=0$, the loop $\Sigma$ can be deformed into a small loop
around the point $\infty^1$. Since the 1-forms $\D\Delta_j$ have
integer residues at $\infty^1$, we have
\begin{equation*}
e^{\Delta_{j,+}^1(z)}=e^{\Delta_{j,-}^1(z)},\quad z\in\tilde{\Xi}_0.
\end{equation*}
If $k=g+1$, then the loop $\Sigma$ will be contractible in
$\mathcal{\Lie}\setminus\Delta_{pole}$. Hence we have
\begin{equation*}
e^{\Delta_{j,+}^1(z)}=e^{\Delta_{j,-}^1(z)},\quad
z\in\tilde{\Xi}_{g+1}.
\end{equation*}
On the other hand, for the boundary values $e^{\Delta_{j,\pm}^2(z)}$
on $\tilde{\Xi}_k$, let us consider $\Sigma_{\pm}$ to be integration
contours that go from $z_0$ to $z_{\pm i\epsilon}^2$ in
$\mathcal{\Lie}_2$ without intersecting $(-\infty,\lambda_{2g+2}^1)$
except at $z_0$ and $z_{\pm i\epsilon}^2$. Let $k=1,\ldots, g$, and
let $\pi(\iota_k)$ be the projection of $\iota_k$ onto the
$z$-plane. Then depending on the relative positions of $z$,
$\pi(\iota_k)$ and $\lambda_{2k}^1$, the loop
$\Sigma=\Sigma_+-\Sigma_-$ can be deformed into $-b_k$, together
with small loops around the poles $\iota_k,\ldots,\iota_g$ and
$\lambda_{2k+1}^1,\ldots,\lambda_{2g+2}^1$ in
$\mathcal{\Lie}\setminus\Delta_{pole}$; or it can be deformed into
the sum of $-b_k$ and small loops around the poles
$\iota_{k+1},\ldots,\iota_g$ and
$\lambda_{2k+1}^1,\ldots,\lambda_{2g+2}^1$ in
$\mathcal{\Lie}\setminus\Delta_{pole}$. In either cases, the total
residue of $\D\Delta_j$ at these points will be an integer.
Therefore we have
\begin{equation*}
e^{\Delta_{j,+}^2(z)}=e^{\Delta_{j,-}^2(z)-\left(\beta_j\right)_k},\quad
z\in\tilde{\Xi}_k,\quad k=1,\ldots, g.
\end{equation*}
Similarly, for $k=0$, the loop $\Sigma$ can be deformed into a small
loop around the points $\infty^2$ and $\pm it$. Since the total
residue the 1-form $\D\Delta_j$ at these points is an integer, we
have
\begin{equation*}
e^{\Delta_{j,+}^2(z)}=e^{\Delta_{j,-}^2(z)},\quad z\in\tilde{\Xi}_0.
\end{equation*}
If $k=g+1$, then the loop $\Sigma$ will be contractible in
$\mathcal{\Lie}\setminus\Delta_{pole}$. Hence we have
\begin{equation*}
e^{\Delta_{j,+}^2(z)}=e^{\Delta_{j,-}^2(z)},\quad
z\in\tilde{\Xi}_{g+1}.
\end{equation*}
We now consider the boundary values $e^{\Delta_{j,-}^2(z)}$ and
$e^{\Delta_{j,+}^3(z)}$ at $S_{\sigma-\mu_2}$, let $z\in
S_{\sigma-\mu_2}$. Let us again denote by $\Sigma_+$ and $\Sigma_-$
contours of integration from $z_0$ to $z-\epsilon$ in
$\mathcal{\Lie}_3$ and $z+\epsilon$ in $\mathcal{\Lie}_2$. Then
depending on whether $z$ is in the upper or lower half plane, the
loop $\Sigma=\Sigma_+-\Sigma_-$ can be deformed into to a small loop
around the pole $it$ or $-it$ in
$\mathcal{\Lie}\setminus\Delta_{pole}$ (See Figure \ref{fig:gamma}.
The loop $\Sigma$ in this case is the same except that it begins and
ends at $z_0$ instead of $\lambda_{2g+2}^1$). Since the residue of
$\D\Delta_j$ around $it$ or $-it$ is $-\frac{1}{2}$, we have
\begin{equation}
e^{\Delta_{j,+}^3(z)}=-e^{\Delta_{j,-}^2(z)},\quad z\in
S_{\sigma-\mu_2}.
\end{equation}
The rest of the jump discontinuities in (\ref{eq:deltajump}) now
follow directly from the definition of the integration contours as
in the proof of Proposition \ref{pro:thetajump}.
\end{proof}

Let us denote by $N_j^k(z)$ the projection of $N_j(z)$ onto the
$k^{th}$-sheet, that is,
$N_j^k(z)=N_j(z^k)=N_j\left(z,\xi_k(z)\right)$, where $\xi_k(z)$ is
the function $\xi(z)$ on $\mathcal{\Lie}_k$. Then we have the
following.
\begin{theorem}\label{thm:outer}
Let $N(z)$ be the $4\times4$ matrix whose elements are given by
\begin{equation}\label{eq:nmatrix}
N_{jk}(z)=\left\{
            \begin{array}{ll}
              N_j^k(z), & \hbox{$\mathrm{Im} z>0$;} \\
              (-1)^{\delta_{4,k}}N_j^k(z), & \hbox{$\mathrm{Im}z<0$,}
            \end{array}
          \right.
\end{equation}
where $N_j(z)$ are defined in (\ref{eq:Nj}). Suppose we have
\begin{equation}
\theta\left(u(\infty^1)+\frac{\beta_j}{2\pi i}+n\vec{\alpha}\right)
\theta\left(u(\infty^2)+\frac{\beta_j}{2\pi
i}+n\vec{\alpha}\right)\neq 0.
\end{equation}
Let the constants $L_j$ be
\begin{equation}\label{eq:Lj}
\begin{split}
L_1&=N_{1}^{-1}(\infty^1),\quad L_2=\lim_{w\rightarrow
0}N_2^{-1}(w)w^{-1},\\L_3&=\lim_{w\rightarrow 0}N_3^{-1}(w), \quad
L_4=\lim_{w\rightarrow0}N_4^{-1}(w)w,
\end{split}
\end{equation}
where $w$ is the local coordinate near $\infty^2$ defined in
(\ref{eq:w}) and the limits in $L_2$, $L_3$ and $L_4$ are taken as
$z\rightarrow\infty^2$ in the first quadrant of $\mathcal{\Lie}_2$.
Then the matrix
\begin{equation}\label{eq:sol}
S_{\infty}(z)=\begin{pmatrix}1&0&0&0\\0&-\frac{i}{\sqrt{3}}&0&-\frac{i\kappa}{\sqrt{3}}\\0&0&-\frac{i}{\sqrt{3}}&0\\0&0&0&-\frac{i}{\sqrt{3}}\end{pmatrix}
\diag\left(L_1,L_2,L_3,L_4\right)N(z)
\end{equation}
satisfies the Riemann-Hilbert problem (\ref{eq:outer}), where
$\kappa$ is the following in the expansion of $N_2(z)$ at
$z=\infty^2$
\begin{equation*}
N_2(z)=L_2^{-1}\left(w^{-1}+\kappa_{2,0}-\kappa w+O(w^{2})\right),
\end{equation*}
as $z\rightarrow\infty^2$ in the first quadrant of
$\mathcal{\Lie}_2$.
\end{theorem}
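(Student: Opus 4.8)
The plan is to verify the four defining properties of the Riemann-Hilbert problem (\ref{eq:outer}) directly for the matrix $S_{\infty}(z)$ in (\ref{eq:sol}). The first observation is that the two left factors in (\ref{eq:sol}) — the fixed matrix containing the entries $-i/\sqrt{3}$ and $\kappa$, and $\diag(L_1,L_2,L_3,L_4)$ — are independent of $z$, so they do not alter any jump relation: if $N_+(z)=N_-(z)J_N(z)$ then $S_{\infty,+}(z)=S_{\infty,-}(z)J_N(z)$ as well. Hence I would first reduce condition 2 to showing that the matrix $N(z)$ of (\ref{eq:nmatrix}) has exactly the jumps $J_M(z)$ of (\ref{eq:JM}), and only afterwards invoke the constant factors, together with the behaviour of the $N_j$ at $\infty^1$ and $\infty^2$, to pin down the normalisation demanded by condition 3.

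For the jumps I would compute the boundary values of each entry $N_{jk}(z)=e^{\Delta_j^k(z)}\Theta_j^k(z)$ by multiplying the jump relations of Proposition \ref{pro:deltajump} for $e^{\Delta_j^k}$ with those of Proposition \ref{pro:thetajump} for the theta ratio $\Theta_j^k$. Across each arc of $S_{\mu_1}=\cup_j\Xi_j$ one has $\Theta^1_\pm=\Theta^2_\mp$ and $e^{\Delta^1_\pm}=\mp e^{\Delta^2_\mp}$, so columns $1$ and $2$ of $N$ interchange with the signs producing the top-left block $\left(\begin{smallmatrix}0&1\\-1&0\end{smallmatrix}\right)$ of $J_M$; the analogous interchange of sheets $3$ and $4$ across $\mathbb{R}=S_{\mu_3}$, together with the sign $(-1)^{\delta_{4,k}}$ built into (\ref{eq:nmatrix}) for $\mathrm{Im}\,z<0$, produces the lower-right block, and the coupling across $S_{\sigma-\mu_2}$ handles the middle block. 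The crucial point is on the gaps $\tilde{\Xi}_k$: the theta jump contributes $e^{(-1)^l 2\pi i(u_k+(\beta_j)_k/(2\pi i)+n\alpha_k+\Pi_{kk}/2)}$ in the numerator while the denominator $\theta(u(z))$ contributes $e^{(-1)^l 2\pi i(u_k+\Pi_{kk}/2)}$, and the exponential factor contributes $e^{-(-1)^l(\beta_j)_k}$; the $(\beta_j)_k$ and $\Pi_{kk}/2$ terms all cancel, leaving exactly the diagonal phase $e^{(-1)^l 2\pi i n\alpha_k}$ appearing in $J_M$. This cancellation is the whole reason the $b$-periods $\beta_j$ were prescribed in Definition \ref{de:deltaj}, and it is what turns the theta-divisor jumps into the constant phases of (\ref{eq:JM}).

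Next I would check analyticity (condition 1) and the endpoint behaviour (condition 4). Away from $\mathbb{R}\cup S_{\sigma-\mu_2}$ the only possible singularities of $N_j$ are at the zeros $\iota_1,\dots,\iota_g$ of $\theta(u(z))$ in the denominator; but $\D\Delta_j$ has residue $+1$ there, so $e^{\Delta_j}$ has a simple zero at each $\iota_k$ that cancels the simple pole of $1/\theta(u(z))$ (and the modified residue $\tfrac12$ in Definition \ref{de:deltaj} handles the coincidences $\iota_l=\lambda_k^1$), leaving $N_j$ analytic. At the branch points the residues $-\tfrac12$ of $\D\Delta_j$ at $\lambda_k^1$ and $\pm it$ give $e^{\Delta_j}=O(t^{-1/2})$ in the ramified local coordinate $t=(z-\lambda_k)^{1/2}$, respectively $t=(z\mp ic)^{1/2}$, that is $O((z-\lambda_k)^{-1/4})$ and $O((z\mp ic)^{-1/4})$, which is precisely condition 4; the theta factor stays bounded there and so does not affect the order.

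The main obstacle is the asymptotic normalisation at infinity (condition 3), and I expect most of the work to lie there. Near $\infty^1$ the residues in (\ref{eq:deltaj}) give $e^{\Delta_j}\sim z^{-r_j}$ with $r_j$ the residue at $\infty^1$, while near the triple ramification point $\infty^2$ the residues $2,-1,0,1$ (for $j=1,\dots,4$) give $N_j\sim w^{\Res_{\infty^2}\D\Delta_j}$ in the coordinate $w=z^{-1/3}$ of (\ref{eq:w}); the constant $L_1$ at $\infty^1$ and $L_2,L_3,L_4$ at $\infty^2$ in (\ref{eq:Lj}) are exactly the normalisers of these leading terms, and the hypothesis $\theta(u(\infty^1)+\tfrac{\beta_j}{2\pi i}+n\vec{\alpha})\,\theta(u(\infty^2)+\tfrac{\beta_j}{2\pi i}+n\vec{\alpha})\neq0$ guarantees that $\Theta_j$ is finite and nonzero there so that the $L_j$ are well defined. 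The delicate step is to track the factors $\omega,\omega^2$ in the definition of $w$ across the four quadrants of sheets $2,3,4$ and to show that the three branches of each $N_j$ assemble, after multiplication by $\diag(L_1,\dots,L_4)$ and the fixed matrix, into $\diag(1,z^{1/3},1,z^{-1/3})\left(\begin{smallmatrix}1&0\\0&A_j\end{smallmatrix}\right)$ with the explicit matrices $A_j$ — whose cube-roots of unity originate precisely from these $\omega$-rotations. Finally, the entry $-i\kappa/\sqrt3$ in the fixed matrix is chosen to cancel the subleading term $-\kappa w=O(z^{-1/3})$ in the expansion of $N_2$ at $\infty^2$, which is what upgrades the remainder to the required $I+O(z^{-1})$. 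Matching the subleading coefficients across all three branches simultaneously, and confirming consistency with the symmetry (\ref{eq:symmetry}), is the part I expect to be the most technical.
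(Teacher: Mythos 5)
Your handling of conditions 1, 2 and 4 of (\ref{eq:outer}) is correct and is essentially the paper's own first step: the jumps follow by multiplying the relations of Propositions \ref{pro:thetajump} and \ref{pro:deltajump} (with the $u_k$, $(\beta_j)_k$ and $\Pi_{kk}/2$ contributions cancelling between numerator, denominator and exponential factor, leaving $e^{(-1)^l2\pi in\alpha_k}$), the poles of $1/\theta(u(z))$ at the $\iota_k$ are cancelled by the zeros of $e^{\Delta_j}$, and the residues $-\tfrac12$ at $\lambda_k^1$ and $\pm it$ give the fourth-root bounds. The genuine gap is in condition 3, and it is not a deferrable technicality --- it is the central point of the paper's proof. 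A priori the expansions at $\infty^2$ read $L_2N_2=w^{-1}+\kappa_{2,0}-\kappa w+O(w^2)$ and $L_3N_3=1+\kappa_{3,0}w+O(w^2)$. Writing out $(I+O(z^{-1}))\diag(1,z^{1/3},1,z^{-1/3})\bigl(\begin{smallmatrix}1&0\\0&A_j\end{smallmatrix}\bigr)$, the admissible powers of $z^{-1/3}$ in columns $2$--$4$ are $z^{1/3},z^{-2/3},z^{-1},\dots$ in the second row and $1,z^{-2/3},z^{-1},\dots$ in the third row: no constant term is allowed in row 2 and no $z^{-1/3}$ term in row 3, because the off-diagonal entries of the error matrix must be $O(z^{-1})$, not $O(1)$. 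The $-i\kappa/\sqrt3$ entry of the prefactor removes only the $-\kappa w$ term of $L_2N_2$ (since $L_4N_4=w+O(w^2)$); it does nothing to $\kappa_{2,0}$, and nothing in your argument addresses $\kappa_{3,0}$ at all. Hence $S_{\infty}(z)$ as defined in (\ref{eq:sol}) satisfies condition 3 only if $\kappa_{2,0}=0$ and $\kappa_{3,0}=0$, which is exactly the content of the asymptotics (\ref{eq:Nbeha}), and these vanishings require proof --- there is no reason for them to hold without exploiting the $z\mapsto-z$ symmetry.

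The paper proves these vanishings by a symmetry/Riemann--Roch argument that your proposal never invokes, and this is where the evenness of $V$ (which you only gesture at via ``consistency with the symmetry'') actually enters. One sets $\hat N_j(z)=N_j(-z)$ on $\mathcal{\Lie}_1\cup\mathcal{\Lie}_3\cup\mathcal{\Lie}_4$ and $\hat N_j(z)=-N_j(-z)$ on $\mathcal{\Lie}_2$, as in (\ref{eq:hatN}); the symmetry $\alpha_k=1-\alpha_{g+1-k}$ of (\ref{eq:symmetry}) shows that $\hat N_j$ has exactly the same jumps as $N_j$, so the ratio $\tilde N_j=\hat N_j/N_j$ of (\ref{eq:tildeN}) is meromorphic on $\mathcal{\Lie}$ with poles at most at the $g$ zeros of $\theta\left(u(z)+\frac{\beta_j}{2\pi i}+n\vec{\alpha}\right)$. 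By Theorem \ref{thm:RR}, a nonconstant function with exactly those $g$ poles exists only if this theta function vanishes identically, which the hypothesis of the theorem excludes; hence $\tilde N_j\equiv\mathcal{K}_j$ is constant. Comparing the expansion of $\hat N_j$ at $\infty^2$ computed through $z\mapsto -z$, which effectively flips the sign of $z^{\pm1/3}$ (see (\ref{eq:Nhatinf})), with the expansion obtained from $\hat N_j=\mathcal{K}_jN_j$ (see (\ref{eq:Nhatinf2})) gives $\mathcal{K}_2=-1$, $\mathcal{K}_3=1$, and forces $\kappa_{2,0}=-\kappa_{2,0}$ and $\kappa_{3,0}=-\kappa_{3,0}$, i.e.\ both vanish. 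Without this step the normalisation at infinity cannot be achieved by the constant prefactor in (\ref{eq:sol}), so your proof is incomplete at precisely the point the theorem was written to settle.
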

\begin{remark}
The constants $L_j$, $j=1,\ldots,4$ can be represented as
\begin{equation}\label{eq:Lexp}
\begin{split}
L_1&=e^{-\Delta_1(\infty^1)}\frac{\theta\left(u(\infty^1)\right)}
{\theta\left(u(\infty^1)+\frac{\beta_1}{2\pi i}+n\vec{\alpha}\right)
},\\
L_2&=\left(\lim_{w\rightarrow
0}e^{-\Delta_2(w)}w^{-1}\right)\frac{\theta\left(u(\infty^2)\right)}
{\theta\left(u(\infty^2)+\frac{\beta_2}{2\pi i}+n\vec{\alpha}\right)
},\\
L_3&=\lim_{w\rightarrow
0}e^{-\Delta_3(w)}\frac{\theta\left(u(\infty^2)\right)}
{\theta\left(u(\infty^2)+\frac{\beta_3}{2\pi i}+n\vec{\alpha}\right)
},\\
L_4&=\left(\lim_{w\rightarrow
0}e^{-\Delta_4(w)}w\right)\frac{\theta\left(u(\infty^2)\right)}
{\theta\left(u(\infty^2)+\frac{\beta_4}{2\pi i}+n\vec{\alpha}\right)
}.
\end{split}
\end{equation}
where the limits are taken as $z\rightarrow\infty^2$ in the first
quadrant of $\mathcal{\Lie}_2$.
\end{remark}
\begin{proof} First note
that, by using Proposition \ref{pro:thetajump} and Proposition
\ref{pro:deltajump}, one can verify that $N(z)$ does satisfy the
jump discontinuities in (\ref{eq:outer}).

Since $N(z)$ satisfies the jump discontinuities of (\ref{eq:outer}),
the matrix $M(z)N^{-1}(z)$ does not have any jump discontinuities in
$\mathbb{C}$. Moreover, this matrix does not grow faster than
$z^{\frac{2}{3}}$ at $z=\infty$ and has at worst square root
singularities at the points $\lambda_j$ and $\pm it$. Since it has
no jump discontinuities, all these singularities are removable and
therefore we have $M(z)=HN(z)$ for some constant matrix $H$. To
determine the constant matrix $H$, we will have to study the
behavior of $N(z)$ as $z\rightarrow\infty$.

The behavior of $M(z)$ is given by the following
\begin{equation}
M(z)=\begin{pmatrix}1+O(z^{-1})
&O(z^{-\frac{2}{3}})&O(z^{-\frac{2}{3}})&O(z^{-\frac{2}{3}})\\
O(z^{-1})&*&*&*\\
O(z^{-1})&*&*&*\\
O(z^{-1})&*&*&*
\end{pmatrix}
\end{equation}
where the $3\times 3$ lower right block is given by
\begin{equation}
\begin{pmatrix}-\frac{i}{\sqrt{3}}z^{\frac{1}{3}}(1+O(z^{-1}))&\frac{\omega
i}{\sqrt{3}}z^{\frac{1}{3}}(1+O(z^{-1}))&\frac{\omega^2i}{\sqrt{3}}z^{\frac{1}{3}}(1+O(z^{-1}))\\
-\frac{i}{\sqrt{3}}(1+O(z^{-\frac{2}{3}}))&\frac{
i}{\sqrt{3}}(1+O(z^{-\frac{2}{3}}))&\frac{i}{\sqrt{3}}(1+O(z^{-\frac{2}{3}}))\\
-\frac{i}{\sqrt{3}}z^{-\frac{1}{3}}(1+O(z^{-\frac{1}{3}}))&\frac{\omega^2
i}{\sqrt{3}}z^{-\frac{1}{3}}(1+O(z^{-\frac{1}{3}}))&\frac{\omega
i}{\sqrt{3}}z^{-\frac{1}{3}}(1+O(z^{-\frac{1}{3}}))
\end{pmatrix}
\end{equation}
for $z\rightarrow\infty$ in the first quadrant. From the relation
between the local coordinate $w$ and $z$ in (\ref{eq:w}) and the
jump discontinuities of $N(z)$ near $\infty^2$, we see that, if we
can show that the functions $N_j(z)$ behave as
\begin{equation}\label{eq:Nbeha}
\begin{split}
N_1(z)&=L_1^{-1}(1+O(z^{-1})),\quad z\rightarrow\infty^1,\quad
N_1(z)=O(w^2),\quad z\rightarrow\infty^2,\\
N_2(z)&=O(z^{-1}),\quad z\rightarrow\infty^1,\quad
N_2(z)=L_2^{-1}\left(w^{-1}-\kappa w+O(w^{2})\right),\quad z\rightarrow\infty^2,\\
N_3(z)&=O(z^{-1}),\quad z\rightarrow\infty^1,\quad
N_3(z)=L_3^{-1}(1+O(w^{2})),\quad z\rightarrow\infty^2,\\
N_4(z)&=O(z^{-1}),\quad z\rightarrow\infty^1,\quad
N_4(z)=L_4^{-1}w(1+O(w)),\quad z\rightarrow\infty^2,
\end{split}
\end{equation}
when $z\rightarrow\infty^2$ in the first quadrant of
$\mathcal{\Lie}_2$, then the matrix in (\ref{eq:sol}) will be the
unique solution of the Riemann-Hilbert problem (\ref{eq:outer}). The
asymptotic behavior of $N_1(z)$ and $N_4(z)$ follows immediately
from the definition of the functions $N_j(z)$ (\ref{eq:Nj}), the
constants $L_j$ (\ref{eq:Lj}) and behavior of the 1-forms
$\D\Delta_j$ (\ref{eq:deltaj}).

We will now prove the equations in (\ref{eq:Nbeha}) for $N_2(z)$ and
$N_3(z)$. Let the involution $\varrho$ on $\mathcal{\Lie}$ be
$\varrho(z,\xi(z))=(-z,\xi(-z))$. To simplify the notation, we shall
simply denote $\varrho(z,\xi(z))$ by $-z$. Let us consider the
functions $N_j(-z)$ for $j=2,3$. The singularity structure of this
function is the same as $N_j(z)$. By Proposition \ref{pro:thetajump}
and \ref{pro:deltajump} and the expression of $N_j(z)$
(\ref{eq:Nj}), we see that the functions $N_2(z)$ and $N_3(z)$
satisfies the following jump discontinuities on $\mathcal{\Lie}$.
\begin{equation}\label{eq:jump23}
\begin{split}
N_{j,+}(z)&=-N_{j,-}(z),\quad z\in \Xi_k^+,\quad k=1,\ldots, g+1,\\
N_{j,+}(z)&=e^{(-1)^l2\pi i n\alpha_k}N_{j,-}(z),\quad z\in \tilde{\Xi}_k^l,\quad k=1,\ldots, g,\quad l=1,2,\\
N_{j,+}(z)&=-N_{j,-}(z),\quad z\in S_{\sigma-\mu_2}^+.
\end{split}
\end{equation}
where $\Xi_k^{\pm}$ is defined in (\ref{eq:xipm}), $\tilde{\Xi}_k^l$
is the interval on $\mathcal{\Lie}_l$ that projects to
$\tilde{\Xi}_k$. That is
\begin{equation*}
\begin{split}
\tilde{\Xi}_k^l=\left\{(z,\xi)|z\in\tilde{\Xi}_k,\quad
\xi=\xi_l(z)\right\}.
\end{split}
\end{equation*}
The intervals $S_{\sigma-\mu_2}^{\pm}$ are defined to be
\begin{equation*}
\begin{split}
S_{\sigma-\mu_2}^{\pm}=\left\{(z,\xi)|z\in S_{\sigma-\mu_2},\quad
\xi=\xi_{3,\pm}(z)\right\}.
\end{split}
\end{equation*}
On the other hand, from (\ref{eq:jump23}), we see that the function
$N_j(-z)$ has the following jump discontinuities
\begin{equation}\label{eq:-jump23}
\begin{split}
N_{j,+}(-z)&=-N_{j,-}(-z),\quad z\in \Xi_k^-,\quad k=1,\ldots, g+1,\\
N_{j,+}(-z)&=e^{(-1)^{l+1}2\pi i n\alpha_{g+1-k}}N_{j,-}(-z),\quad z\in \tilde{\Xi}_k^l,\quad k=1,\ldots, g,\quad l=1,2,\\
N_{j,+}(-z)&=-N_{j,-}(-z),\quad z\in S_{\sigma-\mu_2}^-.
\end{split}
\end{equation}
Note that the union of the contours
$\left(\cup_{k=1}^{g+1}\Xi_k^+\right)$ and
$\left(\cup_{k=1}^{g+1}\Xi_k^-\right)$ divides $\mathcal{\Lie}$ into
2 disjoint sets, which are the first sheet $\mathcal{\Lie}_1$ and
the union of the other sheets
$\mathcal{\Lie}_2\cup\mathcal{\Lie}_3\cup\mathcal{\Lie}_4$.
Similarly, the contour $S_{\sigma-\mu_2}^+\cup S_{\sigma-\mu_2}^-$
divides $\mathcal{\Lie}$ into the sets
$\mathcal{\Lie}_1\cup\mathcal{\Lie}_2$ and
$\mathcal{\Lie}_3\cup\mathcal{\Lie}_4$. Let $\hat{N}_j(z)$ be
\begin{equation}\label{eq:hatN}
\hat{N}_j(z)=\left\{
              \begin{array}{ll}
                N_j(-z), & \hbox{$z\in\mathcal{\Lie}_1\cup\mathcal{\Lie}_3\cup\mathcal{\Lie}_4$;} \\
                -N_j(-z), & \hbox{$z\in\mathcal{\Lie}_2$.}
              \end{array}
            \right.
\end{equation}
Since the constants $\alpha_k$ satisfy the symmetry
$\alpha_k=1-\alpha_{g+1-k}$ (\ref{eq:symmetry}), from
(\ref{eq:jump23}), (\ref{eq:-jump23}) and (\ref{eq:hatN}) we see
that the function
\begin{equation}\label{eq:tildeN}
\tilde{N}_j(z)=\frac{\hat{N}_j(z)}{N_j(z)}
\end{equation}
is either a meromorphic function on $\mathcal{\Lie}$ with poles
exactly at the $g$ zeros of
\begin{equation*}
\theta\left(u(z)+\frac{\beta_j}{2\pi i}+n\vec{\alpha}\right)
\end{equation*}
or it is a constant. By the assumption of the theorem, this theta
function is not identically zero. Hence by Theorem \ref{thm:RR}, we
see that $\tilde{N}_j(z)$ must be a constant $\mathcal{K}_j$. By
using the jump discontinuities (\ref{eq:jump23}), (\ref{eq:-jump23})
of the $N_j(z)$ near $z=\infty$, and the relation between the
coordinate $z^{\frac{1}{3}}$ and $w$, we have the following behavior
of $N_j(z)$ near $\infty^2$
\begin{equation}\label{eq:Ninft}
\begin{split}
N_2(z)&=L_2^{-1}\left(z^{\frac{1}{3}}+\kappa_{2,0}-\kappa
z^{-\frac{1}{3}}+O(w^2)\right),\\
N_3(z)&=L_3^{-1}\left(1+\kappa_{3,0}z^{-\frac{1}{3}}+O(w^2)\right),
\end{split}
\end{equation}
as $z\rightarrow\infty^2$ in the first quadrant of
$\mathcal{\Lie}_2$, where the branch cut of $z^{\frac{1}{3}}$ is
chosen to be the negative real axis. On the other hand, since
$-z\rightarrow\infty^2$ in the third quadrant when
$z\rightarrow\infty^2$ in the first quadrant, the functions
$\hat{N}_j(z)$ have the following behavior
\begin{equation}\label{eq:Nhatinf}
\begin{split}
\hat{N}_2(z)&=L_2^{-1}\left(-z^{\frac{1}{3}}+\kappa_{2,0}+\kappa
z^{-\frac{1}{3}}+O(w^2)\right),\\
\hat{N}_3(z)&=L_3^{-1}\left(1-\kappa_{3,0}z^{-\frac{1}{3}}+O(w^2)\right),
\end{split}
\end{equation}
as $z\rightarrow\infty^2$ in the first quadrant. On the other hand,
since $\tilde{N}_j(z)$ in (\ref{eq:tildeN}) is a constant
$\mathcal{K}_j$, we also have
\begin{equation}\label{eq:Nhatinf2}
\begin{split}
\hat{N}_2(z)&=\mathcal{K}_2L_2^{-1}\left(z^{\frac{1}{3}}+\kappa_{2,0}-\kappa
z^{-\frac{1}{3}}+O(w^2)\right),\\
\hat{N}_3(z)&=\mathcal{K}_3L_3^{-1}\left(1+\kappa_{3,0}z^{-\frac{1}{3}}+O(w^2)\right),
\end{split}
\end{equation}
as $z\rightarrow\infty^2$ in the first quadrant. By comparing
(\ref{eq:Nhatinf}) and (\ref{eq:Nhatinf2}), we obtain
(\ref{eq:Nbeha}). This concludes the proof of the theorem.
\end{proof}
\section{The non-vanishing of the theta function}\label{se:nonvan}
We will now prove that the normalization constants
$\theta\left(u(\infty^k)+\frac{\beta_j}{2\pi
i}+n\vec{\alpha}\right)$, $j=1,\ldots,4$ and $k=1,2$ does not vanish
for any $n\in\mathbb{N}$. Then the solution $S^{\infty}(z)$ of the
Riemann-Hilbert problem (\ref{eq:outer}) constructed in Theorem
\ref{thm:outer} exists and is well-defined. We will then show that
it satisfies the conditions in Theorem \ref{thm:DK}. We will use the
results in Chapter 6 of \cite{Fay}.

First let us define a contour $\Gamma$ that divides the Riemann
surface $\mathcal{\Lie}$ into 2 halves. Let $\Gamma$ be the set of
points that is fixed under the map $\phi$ in (\ref{eq:phi}). That
is,
\begin{equation}\label{eq:gammadef}
\Gamma=\left\{x\in\mathcal{\Lie}|\quad \phi(x)=x\right\}
\end{equation}
Then $\Gamma$ is a disjoint union of $g+1$ closed curves $\Gamma_j$,
$j=0,\ldots,g$ on $\mathcal{\Lie}$, given by the followings.
\begin{equation}\label{eq:Gamma}
\begin{split}
\Gamma&=\cup_{j=0}^{g}\Gamma_j,\\
\Gamma_0&=\cup_{k=1}^2\left(\tilde{\Xi}_0^k\cup\tilde{\Xi}_{g+1}^k\right),\\
\Gamma_j&=\cup_{k=1}^2\tilde{\Xi}_j^k,\quad j=1,\ldots, g.
\end{split}
\end{equation}
where $\tilde{\Xi}_j^k$ is the contour on $\mathcal{\Lie}_k$ that
projects to $\tilde{\Xi}_j$. That is
\begin{equation*}
\begin{split}
\tilde{\Xi}_j^k=\left\{(z,\xi)|z\in\tilde{\Xi}_j,\quad
\xi=\xi_k(z)\right\}.
\end{split}
\end{equation*}
In other words, the contours $\Gamma_j$ are the closed loops on
$\mathcal{\Lie}$ that start from the branch point $\lambda_{2j}^1$,
going through the interval $[\lambda_{2j},\lambda_{2j+1}]$ on
$\mathcal{\Lie}_1$, then enters $\mathcal{\Lie}_2$ at
$\lambda_{2j+1}^1$ and go back to $\lambda_{2j}^1$ through the
interval $[\lambda_{2j},\lambda_{2j+1}]$ on $\mathcal{\Lie}_2$. The
contour $\Gamma_0$ starts at $\lambda_1^1$, goes to $-\infty$ on the
real axis on $\mathcal{\Lie}_1$, then from $+\infty$ to
$\lambda_{2g+2}^1$ on the real axis on $\mathcal{\Lie}_1$, from
which it enters $\mathcal{\Lie}_2$ and goes to $+\infty$ along the
real axis on $\mathcal{\Lie}_2$, then goes back from $-\infty$ on
$\mathcal{\Lie}_2$ to $\lambda_{1}^1$ along the real axis. (See
Figure \ref{fig:Gamma}).

\begin{figure}
\centering \psfrag{lambda1}[][][1][0.0]{\tiny$\lambda_1$}
\psfrag{l2}[][][1][0.0]{\tiny$\lambda_2$}
\psfrag{l3}[][][1][0.0]{\tiny$\lambda_3$}
\psfrag{l4}[][][1][0.0]{\tiny$\lambda_4$}
\psfrag{l2g}[][][1][0.0]{\tiny$\lambda_{2g+2}$}
\psfrag{g0}[][][1][0.0]{\tiny$\Gamma_0$}
\psfrag{g1}[][][1][0.0]{\tiny$\Gamma_1$}
\psfrag{g2}[][][1][0.0]{\tiny$\Gamma_2$}
\psfrag{g3}[][][1][0.0]{\tiny$\Gamma_{g}$}
\psfrag{it}[][][1][0.0]{\tiny$ic$}
\psfrag{-it}[][][1][0.0]{\tiny$-ic$}
\includegraphics[scale=1]{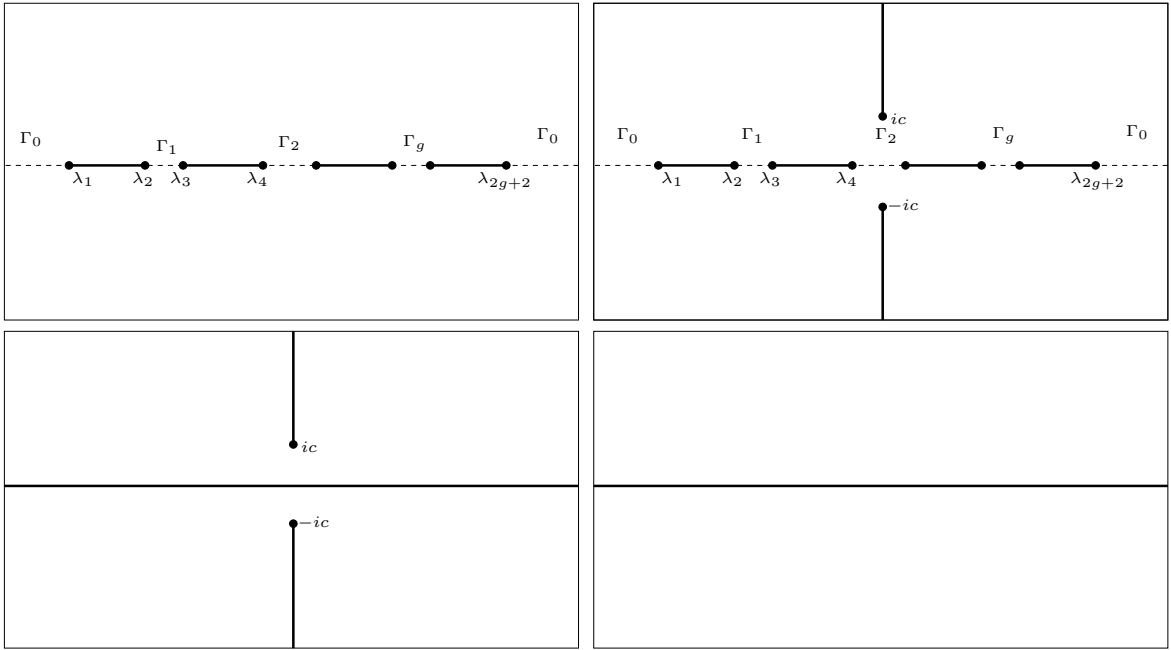}
\caption{The dash lines indicate the loops
$\Gamma_j$.}\label{fig:Gamma}
\end{figure}

Note that the images of the cuts $\Xi_j$ on $\mathcal{\Lie}_1$ and
$\mathcal{\Lie}_2$ do not belong to $\Gamma$. For example, let
$x=(z,\xi_{1,+}(z))$ be a point on $\Xi_j^1$, then
\begin{equation*}
\phi(x)=(\overline{z},\xi_{1,+}(\overline{z}))=(z,\xi_{1,-}(z))\neq
x.
\end{equation*}
Similarly, the images of the real axis on $\mathcal{\Lie}_3$ and
$\mathcal{\Lie}_4$ do not belong to $\Gamma$ either.

The curve $\Gamma$ divides the Riemann surface $\mathcal{\Lie}$ into
2 halves, $\mathcal{\Lie}_+$ and $\mathcal{\Lie}_-$, each of which
is an open Riemann surface with boundary $\Gamma$. The Riemann
surface $\mathcal{\Lie}_{\pm}$ consists of the upper (lower) half
planes of $\mathcal{\Lie}_1$, $\mathcal{\Lie}_2$ and
$\mathcal{\Lie}_3$ and the lower (upper) half plane of
$\mathcal{\Lie}_4$. The Riemann surface $\mathcal{\Lie}$ can now be
thought of as a union of $\mathcal{\Lie}_+$, $\mathcal{\Lie}_-$ and
$\Gamma$. Moreover, the $a$-cycles defined in Figure \ref{fig:cycle}
is homologic to the contours $\Gamma_j$. That is, we have
\begin{equation}\label{eq:gammaa}
\Gamma_j\sim a_j,\quad j=1,\ldots, g.
\end{equation}
We can think of $\mathcal{\Lie}$ as the Riemann surface formed by
gluing two copies of $\mathcal{\Lie}_+$ along the boundary $\Gamma$
with an anti-holomorphic involution $\phi$ that fixes $\Gamma$ and
maps $\mathcal{\Lie}_+$ onto $\mathcal{\Lie}_-$. A Riemann surface
formed in this way is called a Schottky double. Since
$\mathcal{\Lie}$ is a Schottky double, we can apply the results in
Chapter 6 of \cite{Fay} to the theta function of $\mathcal{\Lie}$.

Let us define the tori $\mathbb{S}_{\chi}$ and $\mathbb{T}_{\chi}$
as in Propositions 6.2 and 6.8 of \cite{Fay}.
\begin{definition}\label{de:tori}
Let
$\chi=(\chi_1,\ldots,\chi_{g})^T\in\left(\mathbb{Z}/2\mathbb{Z}\right)^g$
and let $\mathbb{J}_0$ be the torus
\begin{equation}
\mathbb{J}_0=\mathbb{C}^g/\Lambda,\quad
\Lambda=\mathbb{Z}^g+\mathbb{Z}^g\Pi.
\end{equation}
The tori $\mathbb{S}_{\chi}$ and $\mathbb{T}_{\chi}$ are tori in
$\mathbb{J}_0$ defined by
\begin{equation}\label{eq:ST}
\begin{split}
\mathbb{S}_{\chi}&=\left\{\vec{s}\in\mathbb{J}_0,|\quad
\vec{s}=\frac{1}{2}\chi+\Pi\varsigma,\quad
\varsigma\in\mathbb{R}^g.\right\},\\
\mathbb{T}_{\chi}&=\left\{\vec{t}\in\mathbb{J}_0,|\quad
\vec{t}=\varsigma+\frac{1}{2}\Pi\chi,\quad
\varsigma\in\mathbb{R}^g.\right\}.
\end{split}
\end{equation}
\end{definition}
Note that this definition is different from the one in \cite{Fay}
because the theta function in \cite{Fay} is defined differently.

We can now apply the results in \cite{Fay}. The first result tells
us where the zeros $\iota_j$ of the function $\theta(u(x))$ are
located.
\begin{proposition}\label{pro:prefay2}
(Proposition 6.4 of $\cite{Fay}$) For any point $x_0\in\Gamma_0$,
$\vec{s}\in S_{\chi}$, the function $\theta(u(x)-u(x_0)-\vec{s})$
either vanishes identically or has modulo 2, $1+\chi_k$ zeros on
$\Gamma_k$, where $\chi_k$ is the $k^{th}$ component of the vector
$\chi$.
\end{proposition}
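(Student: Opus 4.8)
The plan is to recognize that $\mathcal{\Lie}$, equipped with the anti-holomorphic involution $\phi$, is precisely a real (dividing) Riemann surface of the kind treated in Chapter 6 of \cite{Fay}: its fixed-point set $\Gamma=\cup_{j=0}^{g}\Gamma_j$ consists of the maximal number $g+1$ of disjoint ovals, it separates $\mathcal{\Lie}$ into the two halves $\mathcal{\Lie}_{\pm}$, and its period matrix $\Pi$ is purely imaginary by Lemma \ref{le:imag}. Under these hypotheses the statement is exactly Proposition 6.4 of \cite{Fay}, so one route is to verify the hypotheses and cite it. Below I outline the underlying mechanism, which is a real theta-divisor zero count carried out oval by oval.

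First I would record the two reality properties that drive the argument. Because $\Pi=iB$ with $B$ real and positive definite, the exponent $i\pi\vec{n}\cdot\Pi\vec{n}$ in (\ref{eq:thetadef}) is real, so $\overline{\theta(\vec{w})}=\theta(\overline{\vec{w}})$. Combined with the identity $\overline{\D\omega_j(\phi(x))}=\D\omega_j(x)$ established just before Lemma \ref{le:imag}, integrating from the $\phi$-fixed base point $x_0\in\Gamma_0$ gives $u(\phi(x))\equiv\overline{u(x)}\pmod{\Lambda}$. Consequently $\phi$ descends to an anti-holomorphic involution on $\mathbb{J}_0=\mathbb{C}^g/\Lambda$, and the two families of real tori $\mathbb{S}_{\chi}$ and $\mathbb{T}_{\chi}$ of Definition \ref{de:tori} are exactly the connected components of the fixed loci of the two natural real structures on $\mathbb{J}_0$; the hypothesis $\vec{s}\in\mathbb{S}_{\chi}$ places the shift on one such component, indexed by the characteristic $\chi$. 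I would also note that the dichotomy ``vanishes identically or \dots'' is the distinction between the special and non-special divisor cases governed by Theorem \ref{thm:RR}.

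Next I would restrict $F(x)=\theta(u(x)-u(x_0)-\vec{s})$ to a single oval $\Gamma_k$, which is homologous to $a_k$ by (\ref{eq:gammaa}). Since $u(x)$ increases by $\oint_{a_k}\D\omega=\vec{e}^{k}$ as $x$ traverses $\Gamma_k$ once, and $\theta$ is invariant under integer periods by the first line of (\ref{eq:period}), the restriction $F|_{\Gamma_k}$ is a single-valued continuous map from the circle to $\mathbb{C}$. On the oval one has $u(x)=\rho(x)+\tfrac12\Pi m_k$ with $\rho$ real-varying and $m_k\in\mathbb{Z}^g$ constant, so that $u(x)-u(x_0)-\vec{s}=\rho(x)-\rho_0-\tfrac12\chi+\Pi\mu$ for a fixed real $\mu$; rewriting this through a theta function with characteristic exhibits $F$ as a nowhere-vanishing factor times a real-analytic function of $\rho(x)$. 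The zeros of $F$ on $\Gamma_k$ are then the sign changes of that real factor, whose parity I would compute as the intersection number, in the real Jacobian, of the loop $u(\Gamma_k)$ with the theta divisor, or equivalently as the parity of the total phase increment of $F$ around the loop; using the quasi-periodicity (\ref{eq:period}) together with $\oint_{a_k}\D\omega=\vec{e}^{k}$, the half-integer component $\tfrac12\chi_k$ of $\vec{s}$ contributes exactly one extra half-turn, producing the asserted parity $1+\chi_k\pmod 2$.

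The hard part will be the last step: pinning down the precise nowhere-vanishing normalization on $\Gamma_k$ that makes $F$ (anti)real, and verifying that the resulting winding parity is exactly $1+\chi_k$ rather than $\chi_k$. This demands careful bookkeeping of the half-period shifts coming from the characteristic $\chi$, of the base-point contribution $u(x_0)$ with $x_0\in\Gamma_0$, and of possible zeros sitting at the branch points $\lambda_{2k}^{1},\lambda_{2k+1}^{1}$ that bound the oval (which is why the count is stated only modulo $2$). It is exactly this delicate parity computation that is executed in Chapter 6 of \cite{Fay}, so in practice I would complete the proof by checking the structural hypotheses above and invoking Proposition 6.4 of \cite{Fay} rather than reproducing that computation.
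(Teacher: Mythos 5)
Your proposal matches the paper's treatment: the paper does not prove this proposition but cites it verbatim as Proposition 6.4 of \cite{Fay}, having just verified (exactly as you do) that $\mathcal{\Lie}$ with the anti-holomorphic involution $\phi$ is a Schottky double whose fixed locus $\Gamma$ consists of the $g+1$ ovals $\Gamma_j$ homologous to the $a$-cycles, so that Chapter 6 of \cite{Fay} applies. Your sketch of the oval-by-oval parity mechanism is a useful gloss, but since you too ultimately defer that computation to \cite{Fay}, the two approaches are essentially identical.
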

As a corollary, we have the following concerning the locations of
the zeros $\iota_j$.
\begin{corollary}\label{pro:fay2}
The function $\theta(u(x))$ has $g$ zeros $\iota_1,\ldots,\iota_g$
such that $\iota_k\in\Gamma_k$, $k=1,\ldots,g$.
\end{corollary}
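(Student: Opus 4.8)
The plan is to apply Proposition \ref{pro:prefay2} with a single well-chosen base point and half-period, and then to convert the parity information it provides into an exact localization by comparing against the known total zero count. First I would take $x_0=\lambda_{2g+2}^1$, the base point of the Abel map; this point lies on $\Gamma_0$ by \eqref{eq:Gamma}, since $\tilde{\Xi}_{g+1}=[\lambda_{2g+2},\infty)$ and hence $\lambda_{2g+2}^1\in\tilde{\Xi}_{g+1}^1\subset\Gamma_0$. With this choice $u(x_0)=0$, so for $\vec s=0$ the function $\theta(u(x)-u(x_0)-\vec s)$ is exactly $\theta(u(x))$. Taking $\chi=0$, the point $\vec s=0$ lies in the torus $\mathbb{S}_0$ of \eqref{eq:ST} (it equals $\Pi\varsigma$ with $\varsigma=0$), so the hypotheses of Proposition \ref{pro:prefay2} are satisfied.

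Next I would check that $\theta(u(x))$ is not identically zero, so that the nontrivial alternative in Proposition \ref{pro:prefay2} applies. Because $\Pi$ is purely imaginary (Lemma \ref{le:imag}), the value $\theta(0)$ is real and strictly positive, and since $u(x_0)=0$ we get $\theta(u(x_0))=\theta(0)>0$; thus $\theta\circ u$ does not vanish identically. Proposition \ref{pro:prefay2} with $\chi=0$ then asserts that, modulo $2$, the function $\theta(u(x))$ has $1+\chi_k=1$ zero on each $\Gamma_k$ for $k=1,\ldots,g$. In other words, each of the $g$ pairwise disjoint curves $\Gamma_1,\ldots,\Gamma_g$ carries an odd, hence at least one, number of zeros.

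Finally I would close by counting. The function $\theta(u(x))$ has exactly $g$ zeros on $\mathcal{\Lie}$. The curves $\Gamma_1,\ldots,\Gamma_g$ are disjoint and each already contains at least one zero by the previous step, so together they account for at least $g$ zeros; since the total is exactly $g$, each $\Gamma_k$ must contain precisely one zero $\iota_k$, and none can lie on $\Gamma_0$ or off $\Gamma$. This gives $\iota_k\in\Gamma_k$ for $k=1,\ldots,g$, as claimed. The main obstacle is not a computation but making the bookkeeping airtight: Proposition \ref{pro:prefay2} delivers only the parity of the zero count on each $\Gamma_k$, so the exact localization rests entirely on matching that parity against the exact total of $g$ and on the disjointness of the $\Gamma_k$; verifying $x_0\in\Gamma_0$ and $\vec s=0\in\mathbb{S}_0$ are the small compatibility checks that make Fay's proposition applicable at all.
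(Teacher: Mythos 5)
Your proof is correct and follows essentially the same route as the paper: both apply Proposition \ref{pro:prefay2} with $x_0=\lambda_{2g+2}^1$, $\chi=0$, $\vec{s}=0$, and use the positivity of $\theta(0)$ (via Lemma \ref{le:imag}) to rule out identical vanishing. In fact you are slightly more careful than the paper, which passes directly from the mod-$2$ statement to ``one zero on each $\Gamma_k$''; your explicit step of matching the odd parity on each of the $g$ disjoint curves against the exact total of $g$ zeros is the bookkeeping that makes that conclusion airtight.
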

\begin{proof} Let us take
$x_0=\lambda_{2g+2}^1$, $\chi=0$ and $\vec{s}=0$ in Proposition
\ref{pro:prefay2}, then $u(x_0)=0$ and by the paragraph after Lemma
\ref{le:imag}, we see that $\theta(u(x))$ is not identically zero
and hence by Proposition \ref{pro:prefay2}, it has $1$ zero on each
of the contour $\Gamma_k$, $k=1,\ldots, g$.
\end{proof}
The next result shows that the theta function does not vanish when
its argument is real.
\begin{proposition}\label{pro:fay1}
(Corollary 6.13 of $\cite{Fay}$) Let $\hat{\mathbb{T}}_0$ be the
universal covering of $\mathbb{T}_0$,
\begin{equation}
\hat{\mathbb{T}}_{0}=\left\{\vec{t}\in\mathbb{C}^g,|\quad
\vec{t}=\varsigma,\quad \varsigma\in\mathbb{R}^g.\right\}.
\end{equation}
Then the theta function $\theta(\vec{t})$ is real and positive for
$\vec{t}\in\hat{\mathbb{T}}_0$. That is, $\theta(\vec{t})$ is real
and positive for all $\vec{t}\in\mathbb{R}^g$.
\end{proposition}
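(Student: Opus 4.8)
The plan is to split the statement into two genuinely different parts—reality, which is elementary, and non-vanishing, which carries all the weight—and to reduce positivity to a single sign evaluation via connectedness. First I would use Lemma~\ref{le:imag}: since the period matrix is purely imaginary I may write $\Pi=iB$ with $B$ real symmetric, and $\mathrm{Im}\,\Pi>0$ makes $B$ positive definite. Substituting into (\ref{eq:thetadef}), for $\vec{t}=\varsigma\in\mathbb{R}^g$ the generic summand becomes $\exp\left(-\pi\,\vec{n}\cdot B\vec{n}+2\pi i\,\varsigma\cdot\vec{n}\right)$, whose prefactor $\exp(-\pi\,\vec{n}\cdot B\vec{n})$ is real and positive. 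Conjugating and reindexing $\vec{n}\mapsto-\vec{n}$ (which leaves $\vec{n}\cdot B\vec{n}$ fixed) gives $\overline{\theta(\varsigma)}=\theta(\varsigma)$, so $\theta$ is real on $\hat{\mathbb{T}}_0=\mathbb{R}^g$. Evaluating at $\varsigma=0$ yields $\theta(0)=\sum_{\vec{n}}\exp(-\pi\,\vec{n}\cdot B\vec{n})>0$, a convergent sum of positive terms.

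By Proposition~\ref{pro:per} the function $\theta$ is invariant under $\varsigma\mapsto\varsigma+\vec{M}$ for $\vec{M}\in\mathbb{Z}^g$, so it descends to a continuous real-valued function on the compact connected torus $\mathbb{T}_0\cong(\mathbb{R}/\mathbb{Z})^g$ (here $\mathbb{R}^g\cap\Lambda=\mathbb{Z}^g$ because $\Pi$ is purely imaginary). Since a continuous real function that never vanishes on a connected set has constant sign, once I establish $\theta(\varsigma)\neq0$ for every $\varsigma\in\mathbb{R}^g$, the positive value $\theta(0)>0$ will force $\theta>0$ throughout $\hat{\mathbb{T}}_0$. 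Thus the entire statement rests on the non-vanishing of $\theta$ on the real torus.

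The non-vanishing is the main obstacle, and it is genuinely non-trivial: positivity of all the Fourier coefficients $\exp(-\pi\,\vec{n}\cdot B\vec{n})$ does not by itself keep $\theta$ away from zero, and for $g\geq2$ the theta divisor $\{\theta=0\}$—a complex hypersurface of real codimension $2$ in $\mathbb{C}^g/\Lambda$—has expected intersection dimension $g-2\geq0$ with the $g$-dimensional torus $\mathbb{T}_0$, so it would generically meet it. What rules this out is the real structure: $\mathcal{\Lie}$ is the Schottky double carrying the maximal number $g+1$ of real ovals $\Gamma_0,\ldots,\Gamma_g$ exhibited in (\ref{eq:Gamma}). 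The approach I would take is to assume $\theta(\varsigma)=0$ for some real $\varsigma$ and derive a contradiction through the theta divisor: by Riemann's vanishing theorem such a zero corresponds to an effective divisor class of degree $g-1$, and the reality of $\varsigma$ together with the involution $\phi$ forces this class to respect the real structure; counting the zeros the associated section must carry on each oval $\Gamma_k$—precisely the mod-$2$ count of Proposition~\ref{pro:prefay2} with $\chi=0$, already used in Corollary~\ref{pro:fay2}—then overshoots the total degree $g-1$. This oval-counting is exactly where the maximality of the real curve enters, and it is the delicate step I expect to absorb almost all the effort. (Since the statement is literally Corollary~6.13 of \cite{Fay}, an equally valid route is simply to verify that $\mathcal{\Lie}$ satisfies Fay's Schottky-double hypotheses, which (\ref{eq:Gamma}) does, and quote the result; but the mechanism above is the content being invoked.)

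Assembling the pieces—reality from Lemma~\ref{le:imag}, strict positivity at the origin, non-vanishing from the Schottky-double oval-counting, and constancy of sign from the connectedness of $\mathbb{T}_0$—then gives $\theta(\vec{t})>0$ for all $\vec{t}\in\hat{\mathbb{T}}_0=\mathbb{R}^g$, as required.
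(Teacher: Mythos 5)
Your parenthetical fallback is, in fact, the paper's entire proof: the paper states this proposition verbatim as Corollary 6.13 of \cite{Fay} and never proves it. The only work the paper does is in the text preceding Definition \ref{de:tori}, where the fixed-point set $\Gamma$ of (\ref{eq:Gamma}), the anti-holomorphic involution $\phi$, and the homologies $\Gamma_j\sim a_j$ are used to exhibit the Riemann surface as a Schottky double, so that Chapter 6 of \cite{Fay} applies. Your elementary reductions are correct and go beyond what the paper records: reality of $\theta$ on $\mathbb{R}^g$ from Lemma \ref{le:imag} (writing $\Pi=iB$, conjugating and reindexing), strict positivity at the origin, and the descent to the compact connected torus $(\mathbb{R}/\mathbb{Z})^g$ so that non-vanishing together with $\theta(0)>0$ forces positivity everywhere. (The paper itself only notes, after Lemma \ref{le:imag}, that $\theta(0)$ is real and positive.)

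However, the self-contained mechanism you sketch for the crucial non-vanishing step has a concrete flaw: Proposition \ref{pro:prefay2} cannot be invoked for real arguments. That proposition concerns $\vec{s}\in\mathbb{S}_\chi$, i.e. points of the form $\frac{1}{2}\chi+\Pi\varsigma$ with $\varsigma\in\mathbb{R}^g$; since $\Pi$ is purely imaginary (Lemma \ref{le:imag}), each torus $\mathbb{S}_\chi$ is transverse to the real torus $\mathbb{T}_0=\mathbb{R}^g$ and meets it only in the single half-integer point $\frac{1}{2}\chi$. A generic real $\varsigma$ therefore lies in no $\mathbb{S}_\chi$, and the ``mod 2 count of Proposition \ref{pro:prefay2} with $\chi=0$'' is simply not available for it; the analogous parity statement on the tori $\mathbb{T}_\chi$ is a separate result in Chapter 6 of \cite{Fay} (the one culminating in Corollary 6.13), not Proposition 6.4. (Note that Corollary \ref{pro:fay2} uses Proposition \ref{pro:prefay2} only at $\vec{s}=0$, which is exactly the intersection point $\mathbb{S}_0\cap\mathbb{T}_0$, so its use there is legitimate while yours is not.) As written, then, your contradiction does not go through; making it rigorous would require either proving the $\mathbb{T}_\chi$ oval-counting or carrying out the Riemann-vanishing-theorem/real-divisor argument in full, which is precisely the content that the citation route (yours and the paper's) sidesteps. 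Since you explicitly offered that route as an equally valid alternative, the proposal stands, but the sketched mechanism should be labelled a heuristic rather than a proof.
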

We can now prove that the periods $\beta_j$ of $\D\Delta_j$ in
Definition \ref{de:deltaj} are purely imaginary. This, together with
Proposition \ref{pro:fay1} will imply the non-vanishing of the theta
functions $\theta\left(u(\infty^k)+\frac{\beta_j}{2\pi
i}+n\vec{\alpha}\right)$ for $k=1,2$ and $j=1,\ldots,4$.
\begin{lemma}\label{pro:purim}
The periods $\beta_j$ of the 1-forms $\D\Delta_j$ defined in
Definition \ref{de:deltaj} are purely imaginary.
\end{lemma}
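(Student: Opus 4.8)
The plan is to exploit the anti-holomorphic involution $\phi$ of \ref{eq:phi} in exactly the way it was used to prove Lemma \ref{le:imag}, but now applied to the meromorphic differentials $\D\Delta_j$ instead of the holomorphic $\D\omega_l$. Concretely, I would form the conjugate pullback $\eta_j(x):=\overline{\D\Delta_j(\phi(x))}$, which is again a meromorphic $1$-form on $\mathcal{\Lie}$ and satisfies the basic identity $\int_\gamma \eta_j=\overline{\int_{\phi(\gamma)}\D\Delta_j}$ for every cycle $\gamma$. Granting that $\eta_j=\D\Delta_j$, the conclusion is immediate: since $\phi(b_i)=-b_i$ by \ref{eq:involcyc}, taking $\gamma=b_i$ gives $\oint_{b_i}\D\Delta_j=\oint_{b_i}\eta_j=\overline{\oint_{-b_i}\D\Delta_j}=-\overline{\oint_{b_i}\D\Delta_j}$, so each component $(\beta_j)_i$ is purely imaginary. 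Thus the whole problem reduces to identifying $\eta_j$ with $\D\Delta_j$.

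To do this I would invoke the uniqueness of a normalized third-kind differential with prescribed poles and residues (the Proposition preceding Theorem \ref{thm:RB}). First I would check that $\phi$ permutes the pole set $\Delta_{pole}$ of \ref{eq:deltapole}: the branch points $\lambda_k^1$ are fixed because the $\lambda_k$ are real, the points $\infty^1$ and $\infty^2$ are fixed because $\infty$ is fixed under conjugation, the pair $\pm it$ is interchanged, and each $\iota_k$ is fixed because $\iota_k\in\Gamma_k$ lies in the fixed-point locus $\Gamma$ of $\phi$ by Corollary \ref{pro:fay2}. Since every prescribed residue in \ref{eq:deltaj} is real, conjugation leaves it unchanged, and because $\phi$ respects the residue pattern (the interchanged points $\pm it$ carry equal residues), $\eta_j$ has exactly the same poles and the same residues as $\D\Delta_j$.

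The delicate step, and the one I expect to be the main obstacle, is matching the $a$-periods, i.e. showing $\oint_{a_k}\eta_j=\overline{\oint_{\phi(a_k)}\D\Delta_j}=0$. Because $\D\Delta_j$ has poles, the relation $\phi(a_k)\sim a_k$ holds only in $\mathcal{\Lie}$, and passing to $\mathcal{\Lie}\setminus\Delta_{pole}$ introduces a correction equal to $2\pi i$ times the total residue enclosed between $a_k$ and $\phi(a_k)$. The crucial point is that this enclosed set consists of $\iota_k$ together with the two branch points $\lambda_{2k}^1,\lambda_{2k+1}^1$ bounding the gap $\tilde\Xi_k$, whose residues $1,-\tfrac12,-\tfrac12$ sum to zero; this cancellation is precisely what the residue normalization in Definition \ref{de:deltaj} was engineered to produce, and it forces $\oint_{a_k}\eta_j=\overline{\oint_{a_k}\D\Delta_j}=0$. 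Uniqueness then yields $\eta_j=\D\Delta_j$.

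Finally I would revisit the $b$-period computation with due care for the $b$-period convention of Definition \ref{de:deltaj}. Since $b_i$ is symmetric about the real axis and crosses the gap $\tilde\Xi_i$ at a real point, while $\pi(\iota_i)$ is real, both the crossing point and its position relative to $\pi(\iota_i)$ are preserved by $\phi$; hence $\phi(b_i)=-b_i$ holds not merely homologically but as the prescribed representative, so the identity above is valid for the periods as actually defined. This gives that every component of $\beta_j$ is purely imaginary. An alternative route avoiding the form identity is Theorem \ref{thm:RB}, which yields $(\beta_j)_i=2\pi i\sum_d \Res_d(\D\Delta_j)\,u_i(d)$; the reality of the residues together with $u_i(\phi(d))=\overline{u_i(d)}$ reduces the claim to the same residue-cancellation bookkeeping, confirming that the pole structure of $\D\Delta_j$ is exactly what makes the statement true.
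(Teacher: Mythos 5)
Your proposal is correct and follows essentially the same route as the paper: form $\overline{\D\Delta_j(\phi(x))}$, show it has the same poles and residues as $\D\Delta_j$ and vanishing $a$-periods via the residue cancellation $1-\tfrac12-\tfrac12=0$ at $\iota_k,\lambda_{2k}^1,\lambda_{2k+1}^1$ enclosed between $a_k$ and $\phi(a_k)$, invoke uniqueness of normalized third-kind differentials, and then use the exact (not merely homological) relation $\phi(b_k)=-b_k$ to get $(\beta_j)_k=-\overline{(\beta_j)_k}$. Your added care about the $\pm it$ pair being swapped and about the $b$-cycle representative prescribed in Definition \ref{de:deltaj} only makes explicit what the paper leaves implicit.
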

\begin{proof} First note that, by Corollary \ref{pro:fay2}, all
the points $\iota_l$ and $\lambda_l^1$ are invariant under the
involution $\phi$. Hence the meromorphic 1-form
$\D\tilde{\Delta}_j=\overline{\D\Delta_j(\phi(x))}$ has the same
poles and residues as $\D\Delta_j(x)$. Let us show that the
$a$-periods of $\D\tilde{\Delta}_j$ are zero. We have
\begin{equation}\label{eq:deltaper}
\oint_{a_k}\overline{\D\Delta_j(\phi(x))}=\oint_{\phi(a_k)}\overline{\D\Delta_j(x)}
\end{equation}
\begin{figure}
\centering \psfrag{lambda1}[][][1][0.0]{\tiny$\lambda_1$}
\psfrag{l2}[][][1][0.0]{\tiny$\lambda_2$}
\psfrag{l3}[][][1][0.0]{\tiny$\lambda_3$}
\psfrag{l4}[][][1][0.0]{\tiny$\lambda_4$}
\psfrag{l2g}[][][1][0.0]{\tiny$\lambda_{2g+2}$}
\psfrag{i1}[][][1][0.0]{\tiny$\iota_1$}
\psfrag{it}[][][1][0.0]{\tiny$ic$}
\psfrag{-it}[][][1][0.0]{\tiny$-ic$}
\includegraphics[scale=0.75]{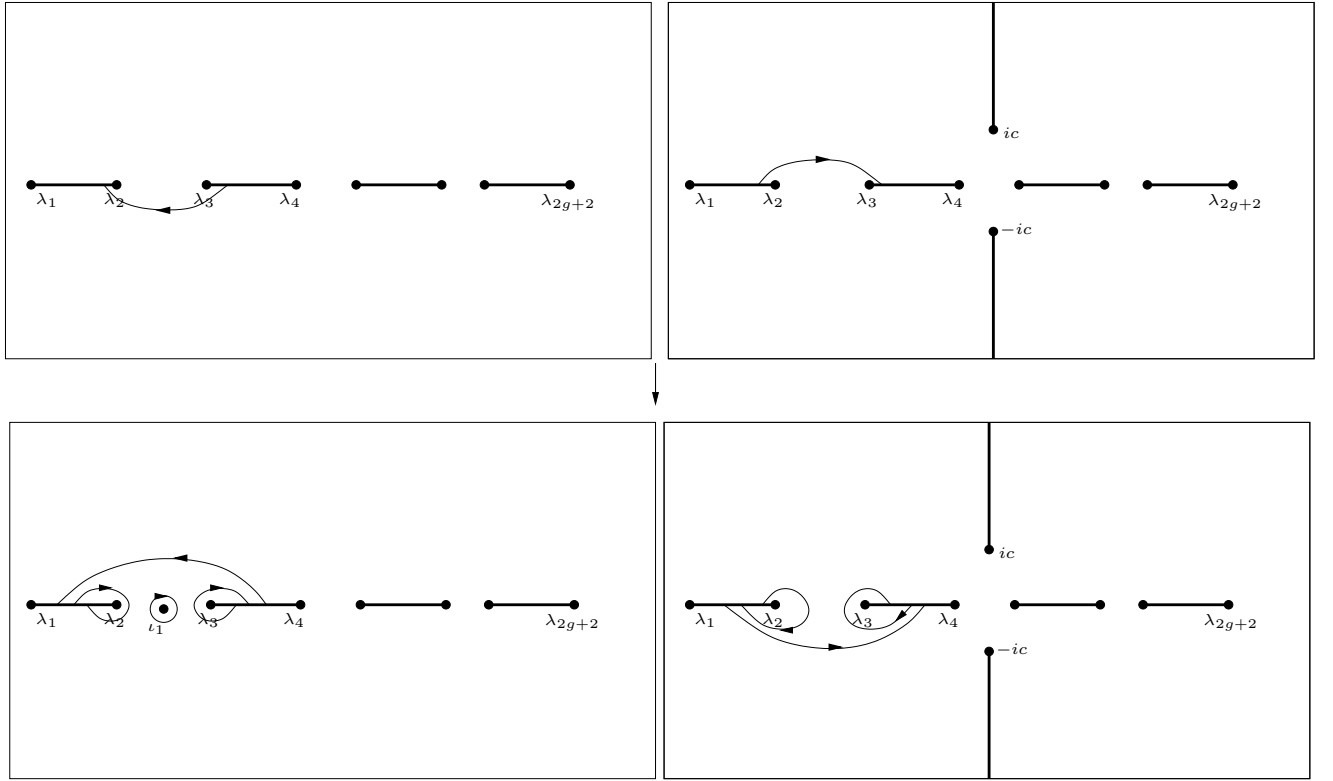}
\caption{The deformation from the cycle $\phi(a_1)$ to $a_1$ when
$\iota_1\in\tilde{\Xi}_1^1$. The deformations for other cycles are
similar.}\label{fig:phia}
\end{figure}
From Figure \ref{fig:cycle}, we see that the curve $\phi(a_k)$
consists of a path from the lower half plane in $\mathcal{\Lie}_1$
that goes from $\Xi_{k+1}$ to $\Xi_{k}$, and another path in the
upper half plane of $\mathcal{\Lie}_2$ that goes from $\Xi_{k}$ to
$\Xi_{k+1}$. There are 3 poles of $\D\Delta_j$ between the loops
$a_k$ and $\phi(a_k)$: $\iota_k$, $\lambda_{2k}^1$ and
$\lambda_{2k+1}^1$ (See Figure \ref{fig:phia}). From
(\ref{eq:deltaj}), we see that $\D\Delta_j$ has a combined residue
of 0 at these points, and hence we can deform $\phi(a_k)$ onto $a_k$
without affecting the value of (\ref{eq:deltaper}). Therefore, by
(\ref{eq:deltaper}), we see that
\begin{equation*}
\oint_{a_k}\overline{\D\Delta_j(\phi(x))}=0,\quad j=1,\ldots,g.
\end{equation*}
By the uniqueness of normalized 1-form, this implies
$\overline{\D\Delta_j(\phi(x))}=\D\Delta_j$. Now we use
(\ref{eq:involcyc}) for the $b$-periods, since the relations for the
$b$-cycles in (\ref{eq:involcyc}) are exact and not up to
deformation, we have
\begin{equation*}
\oint_{b_k}\overline{\D\Delta_j(\phi(x))}=-\oint_{b_k}\overline{\D\Delta_j(x)}=-\overline{\left(\beta_j\right)_k}.
\end{equation*}
where $\left(\beta_j\right)_k$ is the $k^{th}$ component of the
vector $\beta_j$. On the other hand, since
$\overline{\D\Delta_j(\phi(x))}=\D\Delta_j$, the above is also equal
to $\left(\beta_j\right)_k$. This implies the proposition.
\end{proof}
From Proposition \ref{pro:fay1} and Lemma \ref{pro:purim}, we obtain
\begin{theorem}\label{thm:nonvan}
There exists $\delta>0$, independent on $n$, such that
$\theta\left(u(\infty^k)+\frac{\beta_j}{2\pi
i}+n\vec{\alpha}\right)>\delta$, for $k=1,2$ and $j=1,\ldots,4$ and
all $n\in\mathbb{N}$.
\end{theorem}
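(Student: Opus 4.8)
The plan is to reduce the whole statement to Proposition \ref{pro:fay1}, which guarantees that $\theta(\vec{t})$ is real and strictly positive whenever $\vec{t}\in\mathbb{R}^g$. It therefore suffices to show that for each $k\in\{1,2\}$ and $j\in\{1,\ldots,4\}$ the argument
\[
\vec{t}^{\,(n)}_{j,k}:=u(\infty^k)+\frac{\beta_j}{2\pi i}+n\vec{\alpha}
\]
is a real vector, and then to extract a lower bound uniform in $n$. The reality of the two easy summands I would dispatch first: $n\vec{\alpha}$ is real because the $\alpha_l$ are measures of Borel sets (see \eqref{eq:apj}) and hence lie in $[0,1]\subset\mathbb{R}$; and by Lemma \ref{pro:purim} the periods $\beta_j$ are purely imaginary, so $\beta_j/(2\pi i)\in\mathbb{R}^g$.

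The only genuinely nontrivial point is the reality of $u(\infty^k)$. Here I would use that the base point $\lambda_{2g+2}^1$ of the Abel map and both points $\infty^1,\infty^2$ lie on $\Gamma_0$, the fixed-point locus of the anti-holomorphic involution $\phi$ in \eqref{eq:phi} (see \eqref{eq:Gamma}). Invoking the identity $\overline{\D\omega_j(\phi(x))}=\D\omega_j(x)$ established before Lemma \ref{le:imag}, together with the fact that $\phi$ restricts to the identity on $\Gamma$, one checks in a local coordinate in which $\phi(z)=\bar z$ and $\Gamma$ is the real axis that each $\D\omega_j$ has real density along $\Gamma$; consequently the integral of $\D\omega$ along any path lying in $\Gamma$ is real. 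It remains to verify that $\infty^1$ is reached from $\lambda_{2g+2}^1$ along $\tilde{\Xi}_{g+1}^1\subset\Gamma_0$ on the first sheet and $\infty^2$ along $\tilde{\Xi}_{g+1}^2\subset\Gamma_0$ on the second sheet, both being admissible contours in the sense of Figure \ref{fig:contour}. This gives $u(\infty^k)\in\mathbb{R}^g$, whence $\vec{t}^{\,(n)}_{j,k}\in\mathbb{R}^g$.

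Granting reality, Proposition \ref{pro:fay1} already yields $\theta(\vec{t}^{\,(n)}_{j,k})>0$ for every $n$; the work left is to make the bound independent of $n$. For this I would invoke the first quasi-periodicity relation in \eqref{eq:period}, $\theta(\vec{s}+\vec{M})=\theta(\vec{s})$ for $\vec{M}\in\mathbb{Z}^g$, which shows that $\theta$ restricted to $\mathbb{R}^g$ descends to a continuous, strictly positive function on the compact torus $\mathbb{T}_0=\mathbb{R}^g/\mathbb{Z}^g$ and hence attains a positive minimum $\delta>0$. Since $\vec{t}^{\,(n)}_{j,k}$ is real for all $n$ and all of the finitely many pairs $(j,k)$, its reduction modulo $\mathbb{Z}^g$ lands in $\mathbb{T}_0$, and therefore $\theta(\vec{t}^{\,(n)}_{j,k})=\theta(\vec{t}^{\,(n)}_{j,k}\bmod\mathbb{Z}^g)\ge\delta$, which is the assertion.

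The step I expect to be the main obstacle is precisely the reality of $u(\infty^k)$. One must be sure that the contour defining $u$ in Figure \ref{fig:contour} can be deformed into $\Gamma$ without acquiring a $\Pi\mathbb{Z}^g$-period: a real $\mathbb{Z}^g$-period is harmless by the first relation in \eqref{eq:period}, but since $\Pi$ is purely imaginary (Lemma \ref{le:imag}) an extra $\Pi$-period would shift the argument off $\mathbb{R}^g$, leaving us only with control of $|\theta|$ rather than of $\theta$ itself and so breaking the comparison with $\mathbb{T}_0$. Confirming that the admissible paths to $\infty^1$ and $\infty^2$ genuinely stay inside the fixed locus $\Gamma$, so that no such imaginary period appears, is the delicate point; once this is in place the remainder is routine bookkeeping from Lemmas \ref{le:imag} and \ref{pro:purim} and Proposition \ref{pro:fay1}.
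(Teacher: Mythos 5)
Your proposal is correct, and its overall skeleton is the same as the paper's: reduce everything to the reality of the argument vector, invoke Proposition \ref{pro:fay1}, and extract a uniform $\delta$ from the $\mathbb{Z}^g$-periodicity in (\ref{eq:period}) together with compactness of the torus $\mathbb{R}^g/\mathbb{Z}^g$; the reality of $n\vec{\alpha}$ and of $\beta_j/(2\pi i)$ (via Lemma \ref{pro:purim}) is handled identically in both arguments. Where you genuinely diverge is the one nontrivial sub-step, the reality of $u(\infty^k)$. The paper proves it by introducing, for $k=1,2$, the normalized third-kind differential $\D\Omega_k$ with simple poles at $\lambda_{2g+2}^1$ and $\infty^k$ (residues $-1$ and $1$), showing $\overline{\D\Omega_k(\phi(x))}=\D\Omega_k$ by the same symmetry argument as in Lemma \ref{pro:purim} so that its $b$-periods are purely imaginary, and then using the Riemann bilinear relation (\ref{eq:RB}) to write $2\pi i\,u_l(\infty^k)=\oint_{b_l}\D\Omega_k$, from which reality is immediate. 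You instead compute the Abel map directly along contours lying in the real locus: $\lambda_{2g+2}^1$, $\infty^1$ and $\infty^2$ all sit on $\Gamma_0$ of (\ref{eq:Gamma}), the arcs $\tilde{\Xi}_{g+1}^1$ and $\tilde{\Xi}_{g+1}^2$ are admissible contours in the sense of Figure \ref{fig:contour} (they meet neither $(-\infty,\lambda_{2g+2})$ nor the imaginary-axis cuts), and along $\Gamma$ the identity $\overline{\D\omega_j(\phi(x))}=\D\omega_j(x)$ forces each $\D\omega_j$ to have real values, so the integrals are real. Both arguments are sound. Yours is the more elementary — no auxiliary differentials and no bilinear relation — but it leans on the specific geometric accident that the base point and both target points lie on a single component of the fixed locus with admissible connecting arcs, which you correctly identify as the point needing verification; the paper's route through Theorem \ref{thm:RB} is more systematic and would survive a base point or normalization for which no such real path exists. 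Note also that the deformation issue you flag at the end is actually moot: since the arcs along $[\lambda_{2g+2},\infty)$ on sheets $1$ and $2$ are themselves admissible contours, and the regions carved out by the prescription of Figure \ref{fig:contour} are simply connected, the Abel map is literally evaluated along them and no deformation (hence no risk of an extra $\Pi\mathbb{Z}^g$-period) ever arises.
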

\begin{proof} Let us consider the normalized 1-form $\D\Omega_k$
that has simple poles at $\lambda_{2g+2}^1$ and $\infty^k$ with
residues -1 and 1 for $k=1,2$. Then by similar argument used in the
proof of Lemma \ref{pro:purim}, we see that
$\overline{\D\Omega_k(\phi(x))}=\D\Omega_k$ and hence the
$b$-periods of $\D\Omega_k$ are all purely imaginary. Now by the
Riemann bilinear formula (\ref{eq:RB}) and the definition of the
Abel map (\ref{eq:abel}), we see that
\begin{equation*}
2\pi i\left(u_l(\infty^k)-u_l(\lambda_{2g+2}^1)\right)=2\pi
iu_l(\infty^k)=\oint_{b_l}\D\Omega_k,\quad k=1,2,\quad l=1,\ldots,g.
\end{equation*}
Hence $u(\infty^k)$, $k=1,2$ are real.

Therefore, by Proposition \ref{pro:fay1} and Lemma \ref{pro:purim},
we see that $\theta\left(u(\infty^k)+\frac{\beta_j}{2\pi
i}+n\vec{\alpha}\right)>0$. By the periodicity of the theta function
(\ref{eq:period}), we see that the theta function is in fact a map
from $T\times \mathbb{R}^g\rightarrow\mathbb{C}$, where $T$ is the
torus $T=\mathbb{R}^g/\mathbb{Z}^g$. By Proposition \ref{pro:fay1},
the restriction of the theta function on the compact set
$T\times\{0,0,\ldots,0\}$ is real and positive and hence there
exists $\delta>0$ such that $\theta(\vec{t})>\delta$ for all
$\vec{t}\in T$. This then implies the theorem.
\end{proof}
This implies that the function $S^{\infty}(z)$ in Theorem
\ref{thm:outer} exists. We will now show that it satisfies the
conditions in Theorem \ref{thm:DK}.
\begin{corollary}\label{cor:bound}
The function $S^{\infty}(z)$ in (\ref{eq:sol}) and its inverse
$\left(S^{\infty}(z)\right)^{-1}$ satisfy the conditions in Theorem
\ref{thm:DK}.
\end{corollary}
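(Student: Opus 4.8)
The plan is to verify the two conditions of Theorem~\ref{thm:DK} with $M(z)=S^{\infty}(z)$, the essential point being not finiteness (which is automatic for each fixed $n$) but \emph{uniformity in $n$}. The whole $n$-dependence of $S^{\infty}(z)$ sits in one place: by (\ref{eq:sol}) and (\ref{eq:Nj}) the matrix is assembled from $N_j(z)=e^{\Delta_j(z)}\Theta_j(z)$ with $\Theta_j(z)=\theta(u(z)+\tfrac{\beta_j}{2\pi i}+n\vec{\alpha})/\theta(u(z))$ and from the normalizing constants $L_j$ of (\ref{eq:Lexp}); the factors $e^{\Delta_j(z)}$, $\theta(u(z))$ and $u(z)$ itself are $n$-independent, so $n$ enters only through the shift $n\vec{\alpha}$ inside the numerator theta and inside the denominators of the $L_j$. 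First I would record that $\vec{\alpha}\in\mathbb{R}^g$ (the $\alpha_k$ are masses of $\mu_1$) and that $\frac{\beta_j}{2\pi i}\in\mathbb{R}^g$ by Lemma~\ref{pro:purim}. Hence the imaginary part of the argument of each numerator theta equals $\mathrm{Im}\,u(z)$, which does not depend on $n$. Since $\Pi$ is purely imaginary (Lemma~\ref{le:imag}), the theta series converges with magnitude controlled by the imaginary part of its argument, so as $n$ varies the argument moves along the compact torus $T=\mathbb{R}^g/\mathbb{Z}^g$ in the real direction while its excursion in the $\Pi$-direction stays fixed and $n$-independent. Consequently $\theta(u(z)+\tfrac{\beta_j}{2\pi i}+n\vec{\alpha})$ is bounded uniformly in $n$ whenever $\mathrm{Im}\,u(z)$ is bounded, and Theorem~\ref{thm:nonvan} supplies the complementary lower bound, keeping the normalizing theta values in the $L_j$ (and hence $L_j$ and $L_j^{-1}$ themselves) bounded above and below uniformly in $n$.

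For the first condition I would fix a compact $\mathcal{K}$ and work on $\mathcal{T}$ as in (\ref{eq:calT}). There the $n$-independent factors are harmless: $e^{\Delta_j(z)}$ and $\theta(u(z))^{-1}$ are finite and continuous on $\mathcal{T}$, the only candidate singularities being the zeros $\iota_k$ of $\theta(u(z))$; but at $\iota_k$ the form $\D\Delta_j$ has residue $1$, so $e^{\Delta_j(z)}$ has a simple zero that cancels the simple zero of $\theta(u(z))$ (Corollary~\ref{pro:fay2}), leaving $N_j$ finite. Combined with the uniform upper bound on the numerator theta and the uniform control of the $L_j$ from the previous paragraph, this shows $S^{\infty}(z)$ is bounded on $\mathcal{T}$ uniformly in $n$. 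For the inverse I would argue by determinants: each jump matrix in (\ref{eq:JM}) has determinant $1$ (the factors $e^{\mp 2n\pi i\alpha_k}$ cancel), so $\det S^{\infty}(z)$ has no jump; the $(z-\lambda_j)^{-1/4}$ and $(z\mp ic)^{-1/4}$ singularities of the entries cancel in pairs in the determinant, so $\det S^{\infty}$ extends holomorphically across $\lambda_j$ and $\pm ic$; and at infinity the $z^{1/3}$ and $z^{-1/3}$ in (\ref{eq:outer}) cancel. By Liouville $\det S^{\infty}(z)$ is a nonzero constant, equal to $\det A_j$, which is manifestly $n$-independent. Writing $\bigl(S^{\infty}\bigr)^{-1}=\bigl(\det S^{\infty}\bigr)^{-1}\,\mathrm{adj}\,S^{\infty}$ then converts the uniform bound on $S^{\infty}$ into a uniform bound on its inverse on $\mathcal{T}$.

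For the second condition I would read off the behaviour at infinity from the proof of Theorem~\ref{thm:outer}, where the leading asymptotics $S^{\infty}(z)=(I+O(z^{-1}))\diag(1,z^{1/3},1,z^{-1/3})\bigl(\begin{smallmatrix}1&0\\0&A_j\end{smallmatrix}\bigr)$ is established in the $j$-th quadrant; every entry is then $O(|z^{1/3}|)$ for $|z|>r$, and inverting the same expansion gives $\bigl(S^{\infty}\bigr)^{-1}(z)=\bigl(\begin{smallmatrix}1&0\\0&A_j^{-1}\end{smallmatrix}\bigr)\diag(1,z^{-1/3},1,z^{1/3})(I+O(z^{-1}))$, again with entries $O(|z^{1/3}|)$. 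It remains to check that the implied constants and the $O(z^{-1})$ errors are uniform in $n$: these expansion coefficients are built from the theta functions and their $w$-derivatives evaluated near the points $u(\infty^1),u(\infty^2)$, which are real by the computation in the proof of Theorem~\ref{thm:nonvan}, together with the shift $n\vec{\alpha}$ and divided by the $L_j$-denominators. By the compact-torus argument of the first paragraph the numerators are uniformly bounded, and by Theorem~\ref{thm:nonvan} the denominators are uniformly bounded below, so the constants $C_{jk}^l$ and $\varpi_{jk}^l$ can be chosen independent of $n$.

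The only genuine difficulty is this uniformity in $n$; for a single fixed $n$ both conditions are immediate from continuity, the explicit asymptotics, and the existence guaranteed by Theorem~\ref{thm:nonvan}. The mechanism that makes uniformity work is that the sole $n$-dependence is the translation $n\vec{\alpha}$ of a theta argument in a purely \emph{real} direction, so it stays on the compact torus $T$ and never enters the $\Pi$-direction along which the theta function grows. This is exactly what Lemma~\ref{pro:purim} (purely imaginary $\beta_j$), $\vec{\alpha}\in\mathbb{R}^g$, and Lemma~\ref{le:imag} guarantee, while Theorem~\ref{thm:nonvan} turns the resulting positivity into the quantitative lower bound on the denominators. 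Once these inputs are in place, the remaining steps are the routine bookkeeping sketched above.
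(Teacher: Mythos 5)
Your proposal is correct and follows essentially the same route as the paper: uniform boundedness of the theta factors via periodicity (the translation $n\vec{\alpha}$ staying on the compact real torus, with $\beta_j/(2\pi i)$ real by Lemma~\ref{pro:purim} and $\Pi$ purely imaginary by Lemma~\ref{le:imag}), the lower bound from Theorem~\ref{thm:nonvan} controlling the $L_j$, the determinant--Liouville--adjugate argument for the inverse, and the expansion at infinity for the second condition. You even make explicit a detail the paper leaves implicit (the residue-$1$ poles of $\D\Delta_j$ cancelling the zeros of $\theta(u(z))$ at the $\iota_k$); the only slight slips are that the boundedness of $\kappa$ is needed already for the first condition since it enters the constant prefactor in (\ref{eq:sol}), and the constant value of the determinant ($\det A_j$ versus the paper's claimed $1$), neither of which affects the argument.
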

\begin{proof}
Let us first show that the function $N(z)$ in (\ref{eq:nmatrix}) is
bounded in $n$ uniformly in $\mathcal{T}$, where $\mathcal{T}$ is
defined in (\ref{eq:calT}). Since the entries of $N(z)$ are
restrictions of the functions $N_j(z)$ in (\ref{eq:Nj}) on different
sheets of the Riemann surface, we only need to show that $N_j(z)$ is
bounded inside the set
$\hat{\mathcal{T}}=\cup_{l=1}^4\xi_l(\mathcal{T})$ in
$\mathcal{\Lie}$ that projects onto $\mathcal{T}$. From the
periodicity property of the theta function (\ref{eq:period}), we see
that $N_j(z)$ can be written as
\begin{equation}\label{eq:Nmod}
\begin{split}
N_j(z)&=e^{\Delta_j(z)}\frac{\theta\left(u(z)+\frac{\beta_j}{2\pi
i}+n\vec{\alpha}\right)}
{\theta\left(u(z)\right)}\\
&=e^{\Delta_j(z)}\frac{\theta\left(u(z)+\frac{\beta_j}{2\pi
i}+\vec{\gamma}_n\right)} {\theta\left(u(z)\right)},\quad
j=1,\ldots,4.
\end{split}
\end{equation}
where $\vec{\gamma}_n$ is a finite vector given by
\begin{equation*}
\left(\vec{\gamma}_n\right)_l=n\alpha_l-[n\alpha_l],\quad
l=1,\ldots,g,
\end{equation*}
where $[x]$ is the biggest integer that is smaller than $x$. From
(\ref{eq:Nmod}) and the fact that $\theta(u(x))$ is not identically
zero (Proposition \ref{pro:constant}), we see that $N_j(z)$ is
bounded in $n$ uniformly in $\hat{\mathcal{T}}$.

We will now show that the constants $L_1,\ldots,L_4$ in
(\ref{eq:Lexp}) are bounded in $n$. From the singularity behavior of
the meromorphic 1-forms $\D\Delta_j$ in (\ref{eq:deltaj}), we see
that following constants
\begin{equation*}
e^{-\Delta_1(\infty^1)},\quad \lim_{w\rightarrow
0}e^{-\Delta_2(w)w^{-1}},\quad \lim_{w\rightarrow
0}e^{-\Delta_3(w)},\quad \lim_{w\rightarrow 0}e^{-\Delta_4(w)w},
\end{equation*}
in (\ref{eq:Lexp}) are all bounded and non-zero. Since they are all
independent on $n$, they are also bounded away from infinity and
zero as $n\rightarrow\infty$. By Proposition \ref{pro:constant}, we
see that $\theta(u(x))$ is not identically zero and will only vanish
at the points $\iota_l$ that belong to $\Gamma_l$. Since neither
$\infty^1$ nor $\infty^2$ belongs to $\Gamma_l$ for $l=1,\ldots, g$,
the constants $\theta(u(\infty^k))$, $k=1,2$ are non-zero. Moreover,
from the definition of the Abel map (\ref{eq:abel}), we see that
$u(\infty^1)$ and $u(\infty^2)$ are both finite and hence
$\theta(u(\infty^1))$ and $\theta(u(\infty^2))$ are both bounded and
are independent on $n$. Let us now consider the factors
$\theta\left(u(\infty^k)+\frac{\beta_j}{2\pi
i}+n\vec{\alpha}\right)$ for $k=1,2$ and $j=1,\ldots,4$. By Theorem
\ref{thm:nonvan}, there exists $\delta>0$, independent on $n$ such
that these constants are greater than $\delta$. On the other hand,
from the periodicity of the theta function (\ref{eq:period}) and the
fact that the period matrix $\Pi$ is purely imaginary, (Lemma
\ref{le:imag}) while the vector $\alpha$ in (\ref{eq:apj}) is real,
we see that $\theta\left(u(\infty^k)+\frac{\beta_j}{2\pi
i}+n\vec{\alpha}\right)$ is bounded in $n$ as $n\rightarrow\infty$.
Hence the constants $L_1,\ldots,L_4$ in (\ref{eq:sol}) and
(\ref{eq:Lexp}) are bounded away from infinity and zero as
$n\rightarrow\infty$.

Finally, by considering the asymptotic expansion of $N_j(z)$ in the
local parameter $w$ in (\ref{eq:w}) at $z=\infty$ and making use of
(\ref{eq:Nmod}), we see that $\kappa$ in (\ref{eq:sol}) is bounded
in $n$ as $n\rightarrow\infty$. Since all the constants $L_j$ and
$\kappa$ are bounded in $n$ as $n\rightarrow\infty$ and that all the
$N_j(z)$ are bounded in $n$ uniformly in $\hat{\mathcal{T}}$, we see
that $S^{\infty}(z)$ is also bounded in $n$ uniformly in
$\mathcal{T}$. To see that this is also the case for the inverse
$\left(S^{\infty}(z)\right)^{-1}$, let us consider the determinant
of $S^{\infty}(z)$. Since $S^{\infty}(z)$ is a solution to the
Riemann-Hilbert problem (\ref{eq:outer}), the determinant
$\det\left(S^{\infty}(z)\right)$ has no jump discontinuity in
$\mathbb{C}$ and it behaves as $1+O(z^{-\frac{1}{3}})$ as
$z\rightarrow\infty$. From the expression of $N(z)$ in
(\ref{eq:nmatrix}), we see that at $\lambda_{j}$, only the first and
second columns of $N(z)$ have fourth-root singularities, while at
the points $\pm ic$, only the second and the third columns of $N(z)$
have fourth-root singularities. Therefore the determinant of
$S^{\infty}(z)$ can at worst have square-root singularities at these
points. Since $\det\left(S^{\infty}(z)\right)$ has no jump
discontinuities in $\mathbb{C}$, we see that
$\det\left(S^{\infty}(z)\right)$ cannot have square-root
singularities at these points. Hence
$\det\left(S^{\infty}(z)\right)$ is holomorphic in the whole complex
plane. By Liouville's theorem, this implies that
$\det\left(S^{\infty}(z)\right)=1$. Since the entries of
$\left(S^{\infty}(z)\right)^{-1}$ are degree 3 polynomials in the
entries of $S^{\infty}(z)$ divided by
$\det\left(S^{\infty}(z)\right)=1$, we see that the entries of
$\left(S^{\infty}(z)\right)^{-1}$ are also bounded in $n$ uniformly
in $\mathcal{T}$.

Finally, by considering the asymptotic expansion of $N_j(z)$ in the
local parameter $w$ in (\ref{eq:w}) at $z=\infty$ and making use of
(\ref{eq:Nmod}), it is easy to see that condition 2. in Theorem
\ref{thm:DK} is satisfied for $S^{\infty}(z)$ and its inverse.
\end{proof}
We can now use Theorem \ref{thm:DK} to conclude that Theorem
\ref{thm:main1} and Theorem \ref{thm:main2} are true.


\begin{thebibliography}{10}
\bibitem{Advan} M. Adler and P. van Moerbeke.
 The spectrum of coupled random matrices.
 {\em  Ann. of Math. (2).} {\bf 149} (2001), no. 1, 149--189.

\bibitem{BerEy} M. Berg\'ere and B. Eynard.
Mixed correlation function and spectral curve for the 2-matrix
model. {\em J. Phys. A} {\bf 39} (2006), no. 49, 15091--15134.

\bibitem{Bsemi} M. Bertola.
 Biorthogonal polynomials for two-matrix models with semiclassical potentials.
 {\em J. Approx. Theory} {\bf 144} (2007), no. 2, 162--212.

\bibitem{BeEy} M. Bertola and B. Eynard.
The PDEs of biorthogonal polynomials arising in the two-matrix
model. {\em  Math. Phys. Anal. Geom.} {\bf 9} (2006), no. 1, 23--52.

\bibitem{BEHmulti} M. Bertola, B. Eynard and J. Harnad.
 Duality: Biorthogonal polynomials and multi-matrix models.
 {\em Commun. Math. Phys.} {\bf 229} (2002), no. 1, 73--120.

\bibitem{BEHRH} M. Bertola, B. Eynard and J. Harnad.
Differential systems for biorthogonal polynomials appearing in
2-matrix models and the associated Riemann-Hilbert problem. {\em
Commun. Math. Phys.} {\bf 243} (2003), no. 2, 193--240.

\bibitem{DaK} E. Daems and A. B. J. Kuijlaars.
A Christoffel-Darboux formula for multiple orthogonal polynomials.
{\em  J. Approx. Theory} {\bf 130}, (2004), no. 2, 190--202.

\bibitem{BI} P. Bleher and A. Its. Semiclassical asymptotics of orthogonal polynomials, Riemann-Hilbert problem, and universality in the matrix model. {\it Ann. of Maths. (2)}, {\bf 150} (1999), no. 1, 185--266.

\bibitem{DKK} J. M. Daul, V. Kazakov and I. K. Kostov. Rational theories of 2D gravity from the two-matrix
model. {\em Nucl. Phys. B} {\bf 409} (1993), 311--338.

\bibitem{D} P. Deift. {\em Orthogonal polynomials and random matrices: A Riemann-Hilbert approach}. Courant lecture notes 3. New York University. (1999).

\bibitem{DKV} P. Deift, T. Kriecherbauer, K. T. R. McLaughlin, S. Venakides and X. Zhou. Strong asymptotics of orthogonal polynomials with respect to exponential weights. {\it Comm. Pure Appl. Math.}, {\bf 52} (1999), no. 12, 1491--1552.

\bibitem{DKV2} P. Deift, T. Kriecherbauer, K. T. R. McLaughlin, S. Venakides and X. Zhou. Uniform asymptotics for polynomials orthogonal with respect to varying exponential weights and applications to universality questions in random matrix theory. {\it Comm. Pure Appl. Math.}, {\bf 52} (1999), no. 11, 1335--1425.

\bibitem{Du} M. Duits. Spectra of large random matrices : asymptotic analysis of (bi)orthogonal polynomials and Toeplitz determinants.
{\em Phd thesis.} http://hdl.handle.net/1979/1810.

\bibitem{DK} M. Duits and A. B. J. Kuijlaars.
Universality in the two matrix model: a Riemann-Hilbert steepest
descent analysis. To appear in {\it Comm. Pure Appl. Math.} {\em
arXiv:0807.4814}.

\bibitem{Dfr1} P. Di Francesco.
2D topological and quantum gravities, matrix models and integrable
differential systems. In: "The Painlev\'e property", (R. Conte ed.)
CRM Ser. Math. Phys., Springer, New York, (1999), 229--285.

\bibitem{Dfr2} P. Di Francesco, P. Ginsparg, J. Zinn-Justin.
2D gravity and random matrices. {\em Phys. Rep.} {\bf 254} (1995),
1--169.

\bibitem{EMc} N. M. Ercolani and K. T. R. McLaughlin.
Asymptotics and integrable structures for biorthogonal polynomials
associated to a random two-matrix model.  {\em  Phys. D}, {\bf
152-153}, (2001), 232--268.

\bibitem{Ey1} B. Eynard.
Eigenvalue distribution of large random matrices, from one matrix to
several coupled matrices. {\em Nucl. Phys. B}, {\bf 506}, (1997),
633--664.

\bibitem{EylargeN} B. Eynard.
Large-$N$ expansion of the 2 matrix model. {\em  J. High Energy
Phys. }, (2003), no. 1, 051, 38p.

\bibitem{Ey2MM} B. Eynard.
The 2-matrix model, biorthogonal polynomials, Riemann-Hilbert
problem, and algebraic geometry. {\em arXiv:math-ph/0504034. }

\bibitem{EyMe} B. Eynard and M. L. Mehta.
Matrices coupled in a chain. I. Eigenvalue correlations. {\em J.
Phys. A}, {\bf 31}, (1998), no. 19, 4449--4456.

\bibitem{EyOr1} B. Eynard and N. Orantin.
Mixed correlation functions in the 2-matrix model, and the Bethe
ansatz. {\em  J. High Energy Phys. }, (2005), no. 8, 028, 36p.

\bibitem{EyOr2} B. Eynard and N. Orantin.
Topological expansion of the 2-matrix model correlation functions:
diagrammatic rules for a residue formula. {\em  J. High Energy Phys.
}, (2005), no. 2, 034, 44p.

\bibitem{FK} H. M. Farkas and I. Kra. {\em Riemann surfaces}. Graduate Texts in
Mathematics, {\bf 71}. Springer-Verlag, New York-Berlin. (1980)

\bibitem{Fay} J. Fay. {\em Theta functions on Riemann surfaces.} Springer-Verlag, Berlin,
(1973)

\bibitem{FI} A. S. Fokas, A. R. Its and A. V. Kitaev. The isomonodromy approach to matrix models in $2$D quantum gravity. {\it Comm. Math. Phys.}, {\bf 147} (1992), no. 2, 395--430.

\bibitem{IZ} C. Itzykson and J. B. Zuber.
The planar approximation. II. {\em J. Maths. Phys.}, {\bf 21},
(1980), no.3, 411--421.

\bibitem{Kap} A. A. Kapaev.
Riemann-Hilbert problem for bi-orthogonal polynomials. {\em J. Phys.
A}, {\bf 36}, (2003), no. 16, 4629--4640.

\bibitem{Ka} V. A. Kazakov.
Ising model on a dynamical planar random lattice: exact solution.
{\em  Phys. Lett. A}, {\bf 119}, (1986), no. 3, 140--144.

\bibitem{KMRH} A. B. J. Kuijlaars and K. T. R. McLaughlin.
A Riemann-Hilbert problem for biorthogonal polynomials. {\em   J.
Comput. Appl. Math.}, {\bf 178}, (2005), no. 1-2, 313--320.

\bibitem{M2} M. L. Mehta.
A method of integration over matrix variables. {\em Commun. Maths.
Phys.}, {\bf 79}, (1981), 327--340.

\bibitem{MS} M. L. Mehta and P. Shukla.
Two coupled matrices: eigenvalue correlations and spacing functions.
{\em J. Phys. A}, {\bf 27}, (1994), no. 23, 7793--7803.

\bibitem{UN} A. Userkesm and S. P. Norset.
Christoffel-Darboux type formulae and a recurrence for biorthogonal
polynomials. {\em Constr. Approx.}, {\bf 5}, (1989), 437--454.


\end{thebibliography}
\end{document}